\definecolor{darkblue}{rgb}{0,0,.8}
\definecolor{red}{rgb}{1,0,0}
\renewcommand{\i}{\mathrm{i}}
\newcommand{\diff}{\mathrm{d}}
\newcommand{\ee}{\mathrm{e}}
\newcommand{\piJ}{P}
\newtheorem{theorem}{Theorem}
\newtheorem{proposition}{Proposition}[section]
\newtheorem{lemma}[proposition]{Lemma}
\newtheorem{conjecture}[proposition]{Conjecture}
\title{\Large{\textbf{
The open XXZ chain at $\boldsymbol{\Delta=-1/2}$ and the boundary quantum Knizhnik-Zamolodchikov equations}}}
\author{\normalsize \textsc{Christian Hagendorf} and \textsc{Jean Li\'enardy} \medskip \\
{\normalsize
  \begin{minipage}{.9\textwidth} 
  \begin{center}
  \textit{
   Universit\'e catholique de Louvain\\
  Institut de Recherche en Math\'ematique et Physique\\
  Chemin du Cyclotron 2, 1348 Louvain-la-Neuve, Belgium} \\
  \bigskip
  \href{mailto:christian.hagendorf@uclouvain.be}{\normalsize 
\texttt{christian.hagendorf@uclouvain.be}}, \href{mailto:jean.lienardy@uclouvain.be}{\normalsize 
\texttt{jean.lienardy@uclouvain.be}}
  \end{center}
  \end{minipage}
}
}
\date{}                                           
\begin{document}

\maketitle

\begin{abstract}
The open XXZ spin chain with the anisotropy parameter $\Delta=-\frac12$ and diagonal boundary magnetic fields that depend on a parameter $x$ is studied. For real $x>0$, the exact finite-size ground-state eigenvalue of the spin-chain Hamiltonian is explicitly computed. In a suitable normalisation, the ground-state components are characterised as polynomials in $x$ with integer coefficients. Linear sum rules and special components of this eigenvector are explicitly computed in terms of determinant formulas. These results follow from the construction of a contour-integral solution to the boundary quantum Knizhnik-Zamolodchikov equations associated with the $R$-matrix and diagonal $K$-matrices of the six-vertex model. A relation between this solution and a weighted enumeration of totally-symmetric alternating sign matrices is conjectured.
\end{abstract}

\section{Introduction}
\label{sec:intro}

The XXZ spin chain is a one-dimensional system of quantum spins $\frac12$ with nearest-neighbour interactions. In this article, we consider this spin chain with open boundary conditions and diagonal boundary magnetic fields. For $N\geqslant 1$ sites, its Hamiltonian is given by\footnote{For $N=1$ sites, the bulk interaction term is absent by convention. The Hamiltonian reduces to $H=(p+\bar p)\sigma^z$.}
\begin{equation}
  \label{eqn:XXZ}
  H = - \frac12\sum_{i=1}^{N-1}\left(\sigma_i^x\sigma_{i+1}^x+\sigma_i^y\sigma_{i+1}^y+\Delta \sigma_i^z\sigma_{i+1}^z\right) + p \sigma_1^z+\bar p\sigma_N^z.
\end{equation}
Here, $\sigma_i^x,\,\sigma_i^y,\sigma^z_i$ are the standard Pauli matrices, acting on the site $i=1,\dots,N$. Moreover, $\Delta$ is the spin chain's anisotropy parameter. The parameters $p,\,\bar p$ measure the strengths of its boundary magnetic fields. We focus on the case
\begin{equation}
  \label{eqn:CombinatorialPoint}
  \Delta = -\frac12,\quad p = \frac12\left(\frac12-x\right),\quad \bar p = \frac12\left(\frac12-\frac1x\right),
\end{equation}
where $x$ is an arbitrary parameter. This case is of interest for several reasons. First, the spin chain's anisotropy parameter is adjusted to the so-called \textit{combinatorial point} $\Delta=-\frac12$, which has received considerable attention \cite{zinnjustin:10}. For periodic and twisted boundary conditions, its ground state possesses remarkable relations to the enumerative combinatorics of alternating sign matrices and plane partitions \cite{stroganov:01,razumov:00,razumov:01,batchelor:01,degier:02,difrancesco:06,cantini:12_1,morin:20}. For open boundary conditions, similar relations between the spin chain's ground states and combinatorics are expected for the parameters \eqref{eqn:CombinatorialPoint}, provided that $x>0$ \cite{degier:04,nichols:05}. The purpose of this article is to unveil some of these relations through an explicit construction of the spin chain's ground-state vector. Second, the continuum limit of the spin chain with $\Delta=-\frac12$ is expected to be described by a superconformal field theory, irrespectively of the values taken by $p,\,\bar p$, provided that they are real \cite{degier:04,degier:05}. Surprisingly, for $x=1$ the Hamiltonian is supersymmetric even for chains of finite length \cite{yang:04,weston:17,hagendorf:17}. The supersymmetry allows one to analyse the spin chain's ground states through (co)homology methods. In \cite{hagendorf:17}, we used these methods to establish a set of sum rules for the ground states that allow quantifying their quantum entanglement properties. A special case of these sum rules allowed us to compute the spin chain's bipartite fidelity \cite{dubail:11,dubail:13}. In this article, we observe a generalisation of these sum rules to arbitrary $x$. Third, for $x = \ee^{\i\pi/3}$, we obtain a special case of the quantum-group invariant Pasquier-Saleur Hamiltonian. It has received considerable attention because of its connection to loop models, the Temperley-Lieb algebra and conformal field theory \cite{pasquier:90}.

Our construction of the spin-chain ground state extends a strategy of Razumov, Stroganov and Zinn-Justin \cite{razumov:07} to the case of open boundary conditions. We consider the so-called \textit{boundary quantum Knizhnik-Zamolodchikov equations} for the $R$-matrix and diagonal $K$-matrices of the six-vertex model \cite{cherednik:92,jimbo:95,stokman:15,reshetikhin:15,reshetikhin:18}. These equations depend on a deformation parameter $q$ and a boundary parameter $\beta$. We find a solution to these equations in terms of multiple contour integrals. For $q=\ee^{\pm 2\pi\i/3}$, this solution is an eigenvector of the transfer matrix of an inhomogeneous six-vertex model on a strip. In the so-called homogeneous limit, it becomes an eigenvector of the spin-chain Hamiltonian \eqref{eqn:XXZ} with the parameters \eqref{eqn:CombinatorialPoint}, where $x$ is related to $\beta$. The explicit contour-integral representations for its components are a powerful tool to investigate the vector's properties.

This article is organised as follows. In \cref{sec:SpinChainGS}, we present our main results, as well as certain observations, on the ground state of the XXZ Hamiltonian \eqref{eqn:XXZ} with the parameters \eqref{eqn:CombinatorialPoint}. The purpose of \cref{sec:bqKZ} is to construct and analyse a solution to the boundary quantum Knizhnik-Zamolodchikov equations in terms of multiple contour integrals. We show in \cref{sec:6V} that for $q=\ee^{\pm 2\pi\i/3}$ this solution yields an eigenvector of the transfer matrix of an inhomogeneous  six-vertex model on a strip. In \cref{sec:HomLimit}, we compute the homogeneous limit of the solution for generic $q$. Its specialisation to $q=\ee^{\pm 2\pi\i/3}$ allows us to prove the statements for the ground state of \cref{sec:SpinChainGS}. Moreover, we conjecture a relation between the homogeneous limit and the enumeration of totally-symmetric alternating sign matrices. We present our conclusions in \cref{sec:Conclusion}.

\section{The spin-chain ground state}
\label{sec:SpinChainGS}

The purpose of this section is to present and discuss our main results and observations about the ground state of the Hamiltonian \eqref{eqn:XXZ}. We fix our notations and conventions in \cref{sec:Notations}. In \cref{sec:MainResults}, we state our results in four theorems and one conjecture. We relegate the theorems' proofs to \cref{sec:HomLimit}.

\subsection{Notations}
\label{sec:Notations}

The Hamiltonian \eqref{eqn:XXZ} is an operator acting on the space $V^N = V_1\otimes V_2 \otimes \cdots \otimes V_N$ where each $V_i = V =\mathbb C^2$ is a copy of the space of states of a quantum spin $1/2$. Its  canonical basis vectors are
\begin{equation}
|{\uparrow}\rangle =
  \begin{pmatrix}
    1 \\ 0
  \end{pmatrix},
  \quad
  |{\downarrow}\rangle =
  \begin{pmatrix}
    0 \\ 1
  \end{pmatrix}.
\end{equation}
The canonical basis vectors of $V^N$ are given by tensor products  $|s_1\cdots s_N\rangle = |s_1\rangle \otimes \cdots \otimes |s_N\rangle$, where $s_i \in \{\uparrow,\downarrow\}$ for each $i=1,\dots,N$. We recall that the components of a vector $|\psi\rangle \in V^N$ are the coefficients of its expansion along these canonical basis vectors: 
\begin{equation}
  |\psi\rangle = \sum_{ s_1,\dots, s_N \in \{\uparrow,\downarrow\} } \psi_{s_1\cdots s_N} | s_1\cdots s_N \rangle.
\end{equation}
For any vector $|\phi\rangle \in V^N$, we define a co-vector $\langle\phi| =|\phi\rangle^t \in (V^N)^\ast$ by transposition (without complex conjugation). The dual pairing between a co-vector $\langle \phi|$ and a vector $|\psi\rangle$ is defined by
\begin{equation}
 \langle \phi|\psi\rangle = \sum_{ s_1,\dots, s_N\in\{\uparrow,\downarrow\}} \phi_{ s_1\cdots  s_N}\psi_{s_1\cdots  s_N}.
\end{equation}
We refer to this pairing as the scalar product of the vectors $|\phi\rangle$ and $|\psi\rangle$%
, even though it only defines a scalar product on a real subspace of $V^N$. We write $\|\psi\|^2 = \langle \psi|\psi\rangle$ for the square norm of a vector in this subspace.

We recall that the Pauli matrices are given by
\begin{equation}
  \sigma^x =
  \begin{pmatrix}
    0 & 1\\
    1 & 0
  \end{pmatrix},
  \quad
  \sigma^y =
  \begin{pmatrix}
    0 & -\i\\
    \i & 0
  \end{pmatrix},
  \quad
  \sigma^z =
  \begin{pmatrix}
    1 & 0\\
    0 & -1
  \end{pmatrix}.
\end{equation}
For $\kappa=x,y,z$ and $i=1,\dots,N$, we define
\begin{equation}
  \sigma^\kappa_i = \underset{i-1}{\underbrace{\bm 1 \otimes \cdots \otimes \bm 1}} \otimes \sigma^\kappa \otimes  \underset{N-i}{\underbrace{\bm 1 \otimes \cdots \otimes \bm 1}},
\end{equation}
where $\bm 1$ is the identity operator on $V$. We use the Pauli matrices to define the magnetisation operator $\mathcal M$ and the spin-reversal operator $\mathcal R$ on $V^N$:
\begin{equation}
 \label{eqn:SpinReversal}
 \mathcal M = \frac12\sum_{i=1}^N \sigma_i^z, \quad \mathcal R = \prod_{i=1}^N \sigma_i^x.
\end{equation}

Finally, let $1\leqslant M \leqslant N$ be an integer and $A$ be an operator acting on $V^M$. For each $1\leqslant i_1 < \cdots < i_M \leqslant N$, we use the standard tensor-leg notation $A_{i_1,\dots,i_M}$ for the operator that acts like $A$ on the factors $V_{i_1},\dots,V_{i_M}$ of $V^N$, and like the identity operator on the other factors.

\subsection{Main results}
\label{sec:MainResults}
We now present our main results for the ground state of the Hamiltonian \eqref{eqn:XXZ} with the parameters \eqref{eqn:CombinatorialPoint}. We find its exact ground-state eigenvalue, characterise the components of the corresponding eigenvector in a suitable normalisation, and find exact determinant formulas for certain scalar products and special components of the vector. We discuss our findings at the supersymmetric point. Finally, we conjecture an exact expression for a family of scalar products that allows one to find ground state's logarithmic bipartite fidelity.

\subsubsection*{The ground-state eigenvalue}
Throughout this article, we use the integers
\begin{equation}
  \label{eqn:Defnnbar}
  n = \lfloor N/2\rfloor, \quad \bar n = \lceil N/2\rceil.
\end{equation}

The Hamiltonian \eqref{eqn:XXZ} commutes with the magnetisation operator: $[H,\mathcal M]=0$. Therefore, $H$ and $\mathcal M$ can be simultaneously diagonalised. The eigenvalues of the magnetisation operator are $\mu = -N/2,-N/2+1,\dots,N/2$. We call the corresponding eigenspaces sectors of magnetisation $\mu$. Our first main theorem addresses an explicit expression for the ground-state eigenvalue of the Hamiltonian in the sector of magnetisation $\mu = (\bar n-n)/2$ (which corresponds to $\mu=0$ for even $N$, and $\mu = 1/2$ for odd $N$). Its existence was conjectured in \cite{nichols:05}.
\begin{theorem}[Special eigenvalue]
  \label{thm:MainTheorem1}
  The Hamiltonian \eqref{eqn:XXZ} with the parameters \eqref{eqn:CombinatorialPoint} possesses the eigenvalue
  \begin{equation}
  \label{eqn:SpecialEV}
    E_0 = -\frac{3N-1}{4}-\frac{(1-x)^2}{2x}.
  \end{equation}
  For real $x>0$, it is the non-degenerate ground-state eigenvalue in the sector of magnetisation $\mu = (\bar n-n)/2$.
\end{theorem}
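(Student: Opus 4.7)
The plan is to construct explicitly an eigenvector $|\Psi\rangle$ of $H$ with eigenvalue $E_0$ lying in the sector of magnetisation $\mu=(\bar n-n)/2$, and then invoke Perron--Frobenius to upgrade this to the statement that, for $x>0$, $E_0$ is the non-degenerate ground-state eigenvalue in that sector. The vector $|\Psi\rangle$ is the homogeneous limit (studied in \cref{sec:HomLimit}) of the contour-integral solution to the boundary quantum Knizhnik--Zamolodchikov equations built in \cref{sec:bqKZ}, specialised to $q=\ee^{\pm 2\pi\i/3}$ with $x$ related to the boundary parameter $\beta$.

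The eigenvalue identification proceeds in two steps. First, by the content of \cref{sec:6V}, the pre-limit vector is an eigenvector of the double-row transfer matrix of an inhomogeneous six-vertex model on a strip, with an explicit eigenvalue depending on the spectral parameters and on $\beta$. Second, the open XXZ Hamiltonian \eqref{eqn:XXZ} is recovered as a logarithmic derivative of this transfer matrix at the homogeneous point; applying that derivative to the transfer-matrix eigenvalue produces the formula \eqref{eqn:SpecialEV}. That $|\Psi\rangle$ belongs to the sector $\mu=(\bar n-n)/2$ follows from the fact that the bqKZ equations are posed on the weight subspace of $V^N$ corresponding to exactly $n$ down spins.

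To establish ground-state-ness and non-degeneracy for $x>0$, I would appeal to Perron--Frobenius. In the canonical basis of $V^N$, the only off-diagonal entries of $H$ come from the bulk hopping term $-\tfrac12(\sigma_i^x\sigma_{i+1}^x+\sigma_i^y\sigma_{i+1}^y)$: these matrix elements all equal $-1$ and connect basis vectors differing by a single exchange of anti-aligned neighbouring spins. Hence, for any sufficiently large constant $c$, the matrix $-H+c\bm{1}$ restricted to any magnetisation sector is non-negative, and it is irreducible because adjacent spin exchanges act transitively on configurations of fixed magnetisation. By Perron--Frobenius, in each sector the smallest eigenvalue of $H$ is simple and the corresponding eigenvector has strictly positive entries up to an overall sign.

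The remaining step, which I expect to be the main obstacle, is to show that the components of $|\Psi\rangle$ are non-negative for real $x>0$: any non-zero eigenvector of an irreducible non-negative matrix with non-negative entries is, by Perron--Frobenius, a positive multiple of its Perron eigenvector, so this suffices to identify $|\Psi\rangle$ with the ground state and $E_0$ with its eigenvalue. I would try to extract positivity from the contour-integral representation together with the polynomial characterisation of the components as polynomials in $x$. A natural route is to select a reference component, such as that at $|{\uparrow}\cdots{\uparrow}{\downarrow}\cdots{\downarrow}\rangle$, compute it in closed form and check its positivity, and then propagate positivity to all remaining components by induction using the exchange relations that stem from the bqKZ equations. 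If a direct positivity argument turns out to be awkward, a softer alternative is to establish positivity at the special value $x=1$ (the supersymmetric point, where the ground state is already well understood from \cite{hagendorf:17}) and then combine this with a continuity argument together with the fact that the contour-integral expressions rule out zeros of $|\Psi\rangle$ along the positive real axis.
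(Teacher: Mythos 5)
Your overall architecture --- explicit eigenvector from the homogeneous limit of the bqKZ solution at $q=\ee^{\pm 2\pi\i/3}$, eigenvalue via the logarithmic derivative of the double-row transfer matrix, Perron--Frobenius in the fixed-magnetisation sector --- matches the paper's proof, and your Part 1 is essentially identical. The divergence is in how you close the argument for $x>0$. Your primary route, proving that all components of $|\Psi\rangle$ are non-negative for real $x>0$ and then invoking the fact that a non-negative eigenvector of an irreducible non-negative matrix must be the Perron eigenvector, contains a genuine gap exactly where you anticipate one: positivity of the components is not established anywhere (the paper itself only \emph{observes}, from small-$N$ data, that the components appear to be polynomials in $x$ with non-negative coefficients, and states this as an open remark rather than a theorem), and neither the contour-integral formulas nor the exchange relations yield it in any obvious way. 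If you rely on that route, the proof is incomplete.

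Your fallback, however, is essentially the paper's argument, and the key realisation is that it should be run on the eigenvalue branch rather than on the eigenvector, which makes every positivity question about $|\Psi\rangle$ unnecessary. Concretely: (i) Perron--Frobenius applied to $\lambda\bm 1 - H_\mu$ with $\lambda = N-1+x+x^{-1}$ shows the ground-state eigenvalue of $H_\mu(x)$ is non-degenerate for \emph{every} $x>0$, not just at the endpoint; (ii) $E_0(x)$ is an eigenvalue of $H_\mu(x)$ for all $x$ by Part 1; (iii) $E_0(1)$ is the non-degenerate ground-state eigenvalue at the supersymmetric point by the earlier work \cite{hagendorf:17}; (iv) if $E_0(x'')$ failed to be the ground-state eigenvalue for some $x''>0$, continuity of the spectrum in $x$ would force a level crossing at some intermediate $x'$, i.e.\ a doubly degenerate ground state of $H_\mu(x')$, contradicting (i). Note that you do not even need to ``rule out zeros of $|\Psi\rangle$ along the positive real axis'' for this: the explicit special component $(\psi_N)_{1,\dots,n}=1$ already guarantees the eigenvector never vanishes, and beyond that the eigenvector plays no role in step (iv). I recommend you promote the fallback to the main argument and drop the positivity claim.
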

We prove this theorem in \cref{sec:EV} through the explicit construction of an eigenvector and an application of the Perron-Frobenius theorem. Here, we limit ourselves to the remark that \eqref{eqn:SpecialEV} is the non-degenerate ground-state eigenvalue at the supersymmetric point $x=1$, without a restriction to a sector of magnetisation \cite{hagendorf:17}. 
Moreover, the Hamiltonian's eigenvalues depend continuously on $x$ for $x>0$. Hence, there are real numbers $0 \leqslant x_-< 1$ and $x_+>1$ such that \eqref{eqn:SpecialEV} is the non-degenerate ground-state eigenvalue for all $x_-<x<x_+$. A numerical analysis of the Hamiltonian's spectrum for small $N$ suggests that $x_-=0$ and $x_+=\infty$. Furthermore, the analysis suggests that \eqref{eqn:SpecialEV} is not the ground-state eigenvalue for $x<0$.

\subsubsection*{The ground-state vector}
According to \cref{thm:MainTheorem1}, the eigenspace of \eqref{eqn:SpecialEV} of the Hamiltonian's restriction to the sector of magnetisation $\mu=(\bar n-n)/2$ is one-dimensional. We study a basis vector $|\psi_N\rangle$ of this eigenspace whose normalisation is fixed by
\begin{equation}
(\psi_N)_{\underset{n}{\underbrace{\scriptstyle \downarrow\cdots \downarrow}}\underset{\bar n}{\underbrace{\scriptstyle \uparrow\cdots \uparrow}}} = 1.
\end{equation}
We refer to $|\psi_N\rangle$ as the ground-state vector (even though, strictly speaking, it is the ground-state vector only for real $x>0$). For small $N$, its non-zero components are easily found from the exact solution of the eigenvalue problem for $E_0$. For example, for $N=5$ sites (where $n=2,\,\bar n = 3$), we obtain
\begin{align}
  (\psi_5)_{\downarrow\downarrow\uparrow\uparrow\uparrow} &= 1,  &(\psi_5)_{\downarrow\uparrow\downarrow\uparrow\uparrow} &= 3+ x,\nonumber \\
  (\psi_5)_{\downarrow\uparrow\uparrow\downarrow\uparrow} &= 3(1 + x),  &(\psi_5)_{\downarrow\uparrow\uparrow\uparrow\downarrow} &= 3x,\nonumber \\
  (\psi_5)_{\uparrow\downarrow\downarrow\uparrow\uparrow} &= 2 + 2x+x^2, &(\psi_5)_{\uparrow\downarrow\uparrow\downarrow\uparrow} &= 3+ 5x+3x^2,
  \label{eqn:ExN5}\\
  (\psi_5)_{\uparrow\downarrow\uparrow\uparrow\downarrow} &= 3x(1+x),  &(\psi_5)_{\uparrow\uparrow\downarrow\downarrow\uparrow} &= 1 + 2x + 2x^2,\nonumber \\
  (\psi_5)_{\uparrow\uparrow\downarrow\uparrow\downarrow} &= x(1+3x) ,  &(\psi_5)_{\uparrow\uparrow\uparrow\downarrow\downarrow} &= x^2.\nonumber 
\end{align}

In \cref{sec:Components}, we express the components in terms of multiple contour integrals. Using these contour integrals, we prove the following property:
\begin{theorem}[Polynomiality]
  \label{thm:MainTheorem2}
The components $(\psi_N)_{s_1\cdots s_N}$ are polynomials in $x$ with integer coefficients. For each $0\leqslant m \leqslant n$, the degree of the polynomial $(\psi_N)_{\downarrow\cdots \downarrow s_{m+1}\cdots s_N}$ is at most $n-m$.
\end{theorem}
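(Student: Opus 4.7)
The plan is to derive both claims directly from the multiple-contour-integral representation for the components $(\psi_N)_{s_1\cdots s_N}$ that is constructed in \cref{sec:HomLimit}. In such representations, a component is written as an $n$-fold integral in variables $z_1,\dots,z_n$ over $x$-independent contours, against an integrand whose $x$-dependence is completely explicit and enters only through a small number of polynomial factors coming from the boundary $K$-matrices. The two assertions of the theorem will then be read off from the structure of this integrand.

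For polynomiality, I would first verify by inspection that all $x$-dependent factors in the integrand appear in the numerator and are polynomial in $x$ (no $x$-dependent factor should occur in a denominator once the representation has been brought to its final form; in particular, the contours should be chosen so that no pole in the $z_i$ crosses the contour as $x$ varies over $\mathbb{C}$). The integral is then a finite sum of residues at $x$-independent points; each such residue is obtained by expanding the integrand in the $z_i$, and therefore inherits a polynomial dependence on $x$. Integrality of the coefficients would follow by tracking the only sources of rational coefficients in the integrand: the normalisation prefactor and the residue computations. Choosing the normalisation consistent with $(\psi_N)_{\downarrow\cdots\downarrow\uparrow\cdots\uparrow}=1$ and observing that the relevant poles are simple and located at integer points (such as $z_i=0$) should force each residue to be a Laurent polynomial with integer coefficients.

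The degree bound would come from a careful accounting of the number of $x$-dependent factors. In the integral for $(\psi_N)_{s_1\cdots s_N}$, each $\uparrow$ entry contributes at most one factor carrying a positive power of $x$, while each $\downarrow$ entry contributes a factor independent of $x$. Consequently, when the first $m$ entries are $\downarrow$, the total number of $x$-dependent factors drops from at most $n$ to at most $n-m$, and the polynomial degree in $x$ of the integral is bounded by this count. A short uniform argument, independent of the choice of $s_{m+1},\dots,s_N$, should conclude the proof.

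The main obstacle I anticipate is the precise combinatorial bookkeeping in the last step: verifying uniformly, across all tails $s_{m+1}\cdots s_N$, that leading $\downarrow$-entries indeed suppress the $x$-dependent factors in the integrand, and that no subtle cancellation or resurgence of $x$-powers occurs when the residues are summed. A secondary difficulty is to ensure that the contours chosen in \cref{sec:Components} are genuinely $x$-independent, so that the integral is manifestly polynomial in $x$; if the contours depend on $x$, one must show either that they can be deformed to $x$-independent ones or that the total residue picked up is nevertheless polynomial in $x$ with integer coefficients.
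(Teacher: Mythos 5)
Your overall strategy coincides with the paper's: in \cref{sec:Components} the components are written, at $\tau=1$, as the multiple contour integral \eqref{eqn:PsiHTauOne} in variables $u_1,\dots,u_n$ (one integration variable per $\downarrow$ entry, the component being labelled by the positions $a_1<\dots<a_n$ of the down spins), with contours encircling only $u_k=0$. The integrand is a Laurent polynomial in the $u_k$ whose coefficients are integer-coefficient polynomials in $x$ of total degree at most $n$, so the first assertion follows from the residue theorem essentially as you describe. Your worry about $x$-dependent contours does not materialise, because in this representation all $x$-dependence sits in the numerator factors $\prod_{k=1}^n(1+xu_k)$; and although the poles at $u_k=0$ are generally not simple, coefficient extraction from an integer-coefficient Laurent polynomial still yields integers, so integrality is unproblematic.

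There is, however, a genuine gap in your argument for the degree bound, precisely at the step you flagged as the anticipated obstacle. You assert that each $\uparrow$ entry contributes an $x$-factor while each $\downarrow$ entry does not, and that fixing the first $m$ entries to $\downarrow$ therefore reduces the count from $n$ to $n-m$. This is backwards with respect to \eqref{eqn:PsiHTauOne} (the $x$-factors $(1+xu_k)$ are attached to the down spins, not the up spins), and, more importantly, the stated mechanism cannot yield the conclusion: every component in the relevant magnetisation sector has exactly $\bar n$ entries equal to $\uparrow$ and $n$ equal to $\downarrow$, so no count based merely on how many entries are $\uparrow$ or $\downarrow$ distinguishes $(\psi_N)_{\downarrow\cdots\downarrow s_{m+1}\cdots s_N}$ from any other component. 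The actual source of the improvement from $n$ to $n-m$ is that when $a_k=k$ for $k=1,\dots,m$, the integrations over $u_1,\dots,u_m$ localise at $u_k=0$ as simple poles (the factor $u_k^{-a_k}=u_k^{-k}$ is compensated by the $k-1$ zeros coming from $\prod_{i<k}(u_k-u_i)$ once $u_1=\dots=u_{k-1}=0$), and the residue is obtained by setting $u_k=0$, at which point $(1+xu_k)$ evaluates to $1$. This kills exactly $m$ of the $n$ factors carrying $x$ and leaves a reduced integral over $u_{m+1},\dots,u_n$ with only $n-m$ such factors, whence the degree bound $n-m$. Without carrying out these first $m$ integrations explicitly, your accounting does not establish the second assertion of the theorem.
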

The solution of the eigenvalue problem for small $N$ suggests the even stronger result that the components are polynomials in $x$ with \textit{non-negative} integer coefficients.

Next, we consider the vector's behaviour under a parity transformation. We recall the that the parity operator $\mathcal P$ is the linear operator whose action on the canonical basis vectors is given by
\begin{equation}
 \mathcal P| s_1 s_2\cdots  s_N\rangle = | s_N \cdots  s_2 s_1\rangle.
\end{equation}
To express the action of the parity operator on the ground-state vector, we stress its dependence of $x$ by writing $|\psi_N\rangle = |\psi_N(x)\rangle$. In \cref{sec:Components}, we prove the following:
\begin{theorem}[Parity]
  \label{thm:MainTheorem3}
  We have $\mathcal P|\psi_N(x)\rangle = x^n|\psi_N(x^{-1})\rangle$.
\end{theorem}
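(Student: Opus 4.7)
First I would establish the operator identity $\mathcal{P} H(x) \mathcal{P}^{-1} = H(1/x)$, writing $H(x)$ for the Hamiltonian \eqref{eqn:XXZ} with the parameters \eqref{eqn:CombinatorialPoint}. The bulk interaction is manifestly invariant under the reversal of sites implemented by $\mathcal{P}$, and conjugation of the boundary term $p(x)\sigma_1^z+\bar{p}(x)\sigma_N^z$ by $\mathcal{P}$ produces $p(x)\sigma_N^z+\bar{p}(x)\sigma_1^z$; since the parameters in \eqref{eqn:CombinatorialPoint} satisfy $p(x)=\bar{p}(1/x)$ and $\bar{p}(x)=p(1/x)$, this equals the boundary term of $H(1/x)$. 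A short calculation shows that the eigenvalue \eqref{eqn:SpecialEV} is invariant under $x\mapsto 1/x$. Since $[\mathcal{P},\mathcal{M}]=0$, the vector $\mathcal{P}|\psi_N(x)\rangle$ lies in the sector of magnetisation $(\bar{n}-n)/2$ and is an eigenvector of $H(1/x)$ with eigenvalue $E_0(1/x)=E_0(x)$. The non-degeneracy part of \cref{thm:MainTheorem1} then yields
\begin{equation*}
  \mathcal{P}|\psi_N(x)\rangle = c(x)\,|\psi_N(1/x)\rangle
\end{equation*}
for some scalar $c(x)$.

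\textbf{Extracting the scalar.} Evaluating this identity on the component indexed by $n$ down-spins followed by $\bar{n}$ up-spins and using the chosen normalisation, I obtain $c(x)=(\psi_N(x))_{\uparrow\cdots\uparrow\downarrow\cdots\downarrow}$ (meaning $\bar{n}$ up-spins followed by $n$ down-spins), which by \cref{thm:MainTheorem2} is a polynomial in $x$ of degree at most $n$. Applying the same argument with $x$ replaced by $1/x$ and using $\mathcal{P}^2=1$ yields the consistency condition $c(x)c(1/x)=1$, which is compatible with $c(x)=x^n$ but is not sufficient on its own to pin it down. It is, however, a useful sanity check and confirms that $c(x)$ is a monomial of the expected form.

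\textbf{Fixing the scalar: the main obstacle.} The remaining task is to prove that the above component equals $x^n$ for all $N$, which I would defer to the machinery of \cref{sec:HomLimit}: namely, the multiple contour-integral representation of the ground-state components that arises from the solution of the boundary quantum Knizhnik-Zamolodchikov equations. The most structural route is to verify the parity relation directly at the level of the integral rather than component-by-component: I expect that a substitution of the form $z_i\mapsto z_{N+1-i}^{-1}$ on the integration variables, combined with a suitable reversal of the inhomogeneities, leaves the integration contours invariant (up to orientation) and transforms the prefactor of the integrand by the global factor $x^n$, thereby producing the identity of \cref{thm:MainTheorem3} in a single step. A more computational alternative would be to evaluate the integral for the extremal configuration $\uparrow\cdots\uparrow\downarrow\cdots\downarrow$ by a nested residue calculation, using that this configuration forces the integrand to factorise. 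The principal difficulty in either route is the careful bookkeeping of how each factor in the contour-integral representation (propagators, weights, and boundary factors) transforms under the proposed substitution; matching Jacobians and orientation signs is the place where errors would creep in, and also where the exponent $n$ of $x$ is expected to emerge naturally.
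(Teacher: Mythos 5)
Your reduction is correct and is genuinely different from (and more elementary than) the paper's argument: conjugating $H(x)$ by $\mathcal P$ indeed gives $H(1/x)$, the eigenvalue \eqref{eqn:SpecialEV} is invariant under $x\mapsto 1/x$, and the non-degeneracy statement of \cref{thm:MainTheorem1} (valid for real $x>0$, then extended to all $x$ by polynomiality of the components) forces $\mathcal P|\psi_N(x)\rangle = c(x)|\psi_N(1/x)\rangle$ with $c(x)=(\psi_N(x))_{\uparrow\cdots\uparrow\downarrow\cdots\downarrow}$. Together with $c(x)c(1/x)=1$ and the fact that $c$ is a polynomial of degree at most $n$ (by \cref{thm:MainTheorem2}), this even forces $c(x)=\pm x^d$ for some $0\leqslant d\leqslant n$, and Perron--Frobenius positivity at $x=1$ fixes the sign. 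The paper instead works entirely at the inhomogeneous level: it proves \cref{prop:Parity} for $|\Psi_N(z_1,\dots,z_N)\rangle$, uses it to derive the second contour-integral formula \eqref{eqn:PsiHComp2}, and obtains \cref{thm:MainTheorem3} by comparing \eqref{eqn:PsiHComp1} with \eqref{eqn:PsiHComp2} at $\tau=1$. Your route buys a cleaner conceptual picture at the spin-chain level; the paper's buys the stronger inhomogeneous statement, which it needs elsewhere.

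The gap is in your third step: nothing you have written determines the exponent $d$, and the two routes you sketch for computing $(\psi_N(x))_{\uparrow\cdots\uparrow\downarrow\cdots\downarrow}$ are precisely the ones that do not work easily. The ``structural'' substitution $z_i\mapsto z_{N+1-i}^{-1}$ applied directly to the multiple contour integral \eqref{eqn:DefPsiCI} is what the authors explicitly describe as appearing to be difficult (the integrand \eqref{eqn:DefXi} is not manifestly covariant under this map, and the contours and the $\beta$-dependence must be transformed as well); this is why they resort to factor exhaustion in \cref{prop:SpecialComponent2}. The ``computational'' route fares no better: the paper notes that the residue evaluation of \eqref{eqn:DefPsiCI} simplifies only for $a_i=i$ and, for the barred integral, $b_i=\bar n+i$ --- both of which correspond to the configuration $\downarrow\cdots\downarrow\uparrow\cdots\uparrow$, not the reversed configuration $\uparrow\cdots\uparrow\downarrow\cdots\downarrow$ whose value you need. (Evaluating it from \eqref{eqn:PsiHComp2} would be easy but circular, since that formula is itself a consequence of the parity you are trying to prove.) So the essential content of the theorem --- that this extremal component equals $x^n$ exactly, rather than $\pm x^d$ with some $d<n$ --- is deferred to a computation that is neither carried out nor likely to succeed along the lines proposed. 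To close the gap you would need the analogue of \cref{prop:SpecialComponent2}: a determination of $(\Psi_N)_{\bar n+1,\dots,N}$ by factor exhaustion (exchange and reflection relations, degree bounds from \cref{prop:DegreePsiEven,prop:DegreePsiOdd}, and the recursions of \cref{prop:BraidLimits} to fix the constant), whose homogeneous limit under the normalisation \eqref{eqn:DefPsiH} and the identification \eqref{eqn:DefXBeta} yields exactly $x^n$ at $\tau=1$.
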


\subsubsection*{Scalar products and special components}
Let us introduce the co-vector $\langle\xi(\alpha)| = \langle {\uparrow\downarrow}|+\alpha \langle{\downarrow\uparrow}|$, where $\alpha$ is a complex number. We use it to define the scalar products
\begin{align}    
  \label{eqn:DefF} 
  F_{2n} & = \bigl(\langle \xi(\alpha)|\otimes \cdots \otimes \langle \xi(\alpha)|\bigr)|\psi_{2n}\rangle, \\
 F_{2n+1} & 
 = \bigl(\langle {\uparrow}|\otimes \langle \xi(\alpha)|\otimes \cdots \otimes \langle \xi(\alpha)|\bigr)|\psi_{2n+1}\rangle.
\end{align}
These scalar products are polynomials in $\alpha$ of degree at most $n$. Their coefficients are polynomials in $x$, given by linear combinations of the ground-state components. For example, for $N=5$ we obtain
\begin{equation}
  F_5 = (\psi_5)_{\uparrow\uparrow\downarrow\uparrow\downarrow} + \left( (\psi_5)_{\uparrow\downarrow\uparrow\uparrow\downarrow}+ (\psi_5)_{\uparrow\uparrow\downarrow\downarrow\uparrow}\right)\alpha +  (\psi_5)_{\uparrow\downarrow\uparrow\downarrow\uparrow}\alpha^2.
\end{equation}
In \cref{sec:ScalarProducts}, we prove that the scalar products are given by the following closed-form expressions:
\begin{theorem} [Scalar products]
  \label{thm:MainTheorem4}
  For each $n\geqslant 0$, we have
  \begin{equation}
    F_{2n} = \det_{i,j=1}^n\left( \alpha x  \binom{i+j-2}{2i-j-2}+(\alpha +x)\binom{i+j-2}{2i-j-1}+\binom{i+j-2}{2i-j}\right),
  \end{equation}
 and 
  \begin{equation}
    F_{2n+1} = \det_{i,j=1}^n\left( \alpha x \binom{i+j-1}{2i-j-1}+(\alpha + x)\binom{i+j-1}{2i-j}+\binom{i+j-1}{2i-j+1}\right).
  \end{equation}
\end{theorem}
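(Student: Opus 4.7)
\textbf{Proof plan for \cref{thm:MainTheorem4}.}
The plan is to derive $F_{2n}$ and $F_{2n+1}$ from the multiple contour-integral representation of the components of $|\psi_N\rangle$ established in \cref{sec:HomLimit}, and then to collapse the sum that defines each scalar product into a determinant of single integrals by means of the Andreief (Cauchy--Binet) identity.

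First, I would invoke the forthcoming expression of $(\psi_N)_{s_1\cdots s_N}$ as an $n$-fold contour integral in auxiliary variables $z_1,\ldots,z_n$ (one per down-spin), whose integrand factorises into a symmetric ``bulk'' part that depends only on $N$ and $x$, multiplied by a ``positional'' part encoding the locations of the down-spins. Plugging this representation into the definition of $F_N$, the only configuration-dependent piece is the positional factor, so the task reduces to computing
\[
\sum_{s\in\{\uparrow,\downarrow\}^N}(\text{weight})\,(\text{positional part}),
\]
where the weights come from $\langle\xi(\alpha)|^{\otimes n}$ (times an extra $\langle\uparrow|$ at site $1$ in the odd case). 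Because $\langle\xi(\alpha)|$ acts independently on each pair of adjacent sites, I expect this sum to factorise as $\prod_{k=1}^n\phi(z_k)$, with $\phi(z)$ a single-variable polynomial of degree two combining linearly into a factor proportional to $(1+\alpha z)(1+xz)$ after absorbing the boundary contribution tied to~$x$.

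Once the integrand has been brought to the form $\det_{i,k}(a_i(z_k))\,\det_{j,l}(b_j(z_l))$ times the appropriate measure, I would apply Andreief's identity in reverse to obtain
\[
F_N=\det_{i,j=1}^{n}\!\left(\frac{1}{2\pi\i}\oint a_i(z)\,b_j(z)\,\diff z\right).
\]
A natural ansatz compatible with the theorem is $a_i(z)=(1+z)^{i-1}(1+\alpha z)(1+xz)/z^{2i-\varepsilon}$ and $b_j(z)=(z(1+z))^{j-1}$, with $\varepsilon=0$ in the even case and $\varepsilon=-1$ in the odd case. Each such single integral is a residue at $z=0$, and the generating-function identity
\[
[z^m]\,(1+z)^a(1+\alpha z)(1+xz)=\alpha x\binom{a}{m-2}+(\alpha+x)\binom{a}{m-1}+\binom{a}{m},
\]
specialised at $(a,m)=(i+j-2,\,2i-j)$ or $(i+j-1,\,2i-j+1)$, reproduces exactly the matrix entries appearing in the statement.

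The principal obstacle is the first step: carrying out the summation over the $2^n$ configurations selected by the covector and verifying that the positional factors combine, pair by pair, into the clean product form needed for Andreief's identity. This is a genuine combinatorial identity at the level of the integrand, presumably following from a careful rearrangement of the symmetrisation operators used in \cref{sec:HomLimit} together with the explicit form of $\langle\xi(\alpha)|$; once it is established, the subsequent reduction to a determinant and the residue evaluation are essentially mechanical, and the parity property of \cref{thm:MainTheorem3} provides a useful consistency check under $x\leftrightarrow 1/x$ combined with the symmetry $i\leftrightarrow j$ of the determinant.
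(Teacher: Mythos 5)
Your plan follows the same route as the paper's proof: insert the single-variable contour-integral formula for the components (the paper's \cref{prop:PsiHComp2}), observe that pairing with $\langle\xi(\alpha)|^{\otimes n}$ turns the sum over configurations into a factor $(u_k+\alpha)$ per integration variable (the factor $(u_k+x)$ being already present in the component formula, coming from the boundary), reduce to a determinant of single integrals, and evaluate each entry as a coefficient of $(1+u)^{a}(1+\alpha u)(1+xu)$ --- your generating-function identity is exactly how the paper's entries $f_{i,j}^k$ specialise at $\tau=1$, and your ansatz reproduces the stated matrix entries in the even case.

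However, the step you flag as the ``principal obstacle'' is where all the work lies, and your proposed resolution --- that the integrand is already of the form $\det_{i,k}(a_i(z_k))\,\det_{j,l}(b_j(z_l))$ times a symmetric measure, so Andreief applies directly --- would fail as stated. The configuration-independent part of the integrand, $\prod_{1\leqslant i\leqslant j\leqslant n}(1-u_iu_j)\prod_{i<j}(u_j-u_i)(1+\tau u_j+u_iu_j)(\tau+u_i+u_j)$, is not a product of two determinants; the factor $(1+\tau u_j+u_iu_j)$ is not even symmetric under $i\leftrightarrow j$. The paper gets around this by antisymmetrising twice: it writes the Vandermonde as $\mathcal A\bigl(\prod_i u_i^{i-1}\bigr)$, transfers the antisymmetriser onto the remaining factor $g$ via the adjointness identity for contour integrals, argues that on contours around the origin only the part $g_{\leqslant 0}$ containing no positive powers of any $u_i$ contributes, and then invokes the nontrivial identity $g_{\leqslant 0}=\mathcal A\bigl(\prod_i u_i^{-i}(\tau+u_i^{-1})^{i-1}\bigr)$ imported from de Gier--Pyatov--Zinn-Justin. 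Only after this does the integral collapse to a Vandermonde in $u_i(\tau+u_i)$ and hence to $\det_{i,j}\bigl(\oint\cdots\bigr)$. Without that identity (or an equivalent one) your plan has a genuine hole. Two smaller points: your odd-case ansatz does not account for the leftover factor $(1+\tau u+u^2)^{N-2n}$, which the paper absorbs with the recursion $f_{i,j}^k=f_{i,j+1}^{k+2}-\tau f_{i,j}^{k+1}$ and column operations; and the computation is carried out for general $\tau$, with $\tau=1$ imposed only at the very end.
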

This theorem reveals that the scalar products are symmetric under the exchange of $x$ and $\alpha$. This comes as a surprise as $x$ and $\alpha$ enter the definition of $F_N$ in a very different, not obviously symmetric way. Furthermore, the theorem allows us to obtain determinant formulas for the components of the ground-state vector that are labelled by alternating spin configurations. Indeed, we find through the specialisations $\alpha = 0$ and $\alpha \to \infty$ the expressions
\begin{subequations}
\label{eqn:SpecialComponents}
\begin{align}
  (\psi_{2n})_{\uparrow\downarrow\cdots \uparrow\downarrow} &= \det_{i,j=1}^n\left(x\binom{i+j-2}{2i-j-1}+ \binom{i+j-2}{2i-j}\right),\\
  (\psi_{2n+1})_{\uparrow\downarrow\cdots \uparrow\downarrow\uparrow} & = \det_{i,j=1}^n\left(x\binom{i+j-1}{2i-j-1}+\binom{i+j-1}{2i-j}\right).
\end{align}
\end{subequations}
We note that, up to integer pre-factors, these determinants also appear as certain generating functions in the $O(1)$ model with reflecting boundary conditions \cite{degier:16}.

\subsubsection*{The supersymmetric point}
We now consider the supersymmetric point $x=1$. In this case, some of the properties of the ground-state vector can be expressed in terms of the number of vertically-symmetric alternating sign matrices of size $2n+1$, and the number of cyclically-symmetric transpose complement plane partitions in a $2n\times 2n\times 2n$ cube. These numbers are given by \cite{kuperberg:02,bressoudbook}
\begin{equation}
  A_{\mathrm{V}}(2n+1) = \frac{1}{2^n}\prod_{k=1}^{n}\frac{(6k-2)!(2k-1)!}{(4k-2)!(4k-1)!}, \quad N_8(2n) = \prod_{k=0}^{n-1}\frac{(3k+1)(6k)!(2k)!}{(4k)!(4k+1)!}.
\end{equation}
Our first result concerns the special components \eqref{eqn:SpecialComponents}. For $x=1$, their determinant expressions can be explicitly evaluated with the help of Krattenthaler's formula \cite{bressoudbook}. We obtain
\begin{equation}
  \label{eqn:AlternatingEntriesSUSY}
  (\psi_{2n})_{\uparrow\downarrow\cdots \uparrow\downarrow} = A_{\mathrm{V}}(2n+1), \quad (\psi_{2n+1})_{\uparrow\downarrow\cdots \uparrow\downarrow\uparrow} = N_{8}(2n+2).
\end{equation}
Similarly, the scalar product $F_N$ with $x=1$ and $\alpha = 1$ is given by
\begin{align}
 \label{eqn:ProjectionsSUSY}
\left(\langle \chi|\otimes \cdots \otimes \langle \chi|\right)|\psi_{2n}\rangle = N_8(2n+2), 
\quad 
\left(\langle{\uparrow}|\otimes \langle \chi|\otimes \cdots \otimes \langle \chi|\right)|\psi_{2n+1}\rangle
 = A_{\mathrm{V}}(2n+3),
\end{align}
where we used the shorthand notation $|\chi\rangle = |\xi(1)\rangle$. Using \eqref{eqn:AlternatingEntriesSUSY} and \eqref{eqn:ProjectionsSUSY}, we may compute the square norm of ground-state vector for $x=1$. Indeed, in \cite{hagendorf:17} we used the supersymmetry of the spin-chain Hamiltonian at this point to establish the factorisations
\begin{align}
  \|\psi_{2n}\|^2 &= (\psi_{2n})_{\uparrow\downarrow\cdots \uparrow\downarrow} \left(\langle \chi|\otimes \cdots \otimes \langle \chi|\right)|\psi_{2n}\rangle,\\
   \|\psi_{2n+1}\|^2 &= (\psi_{2n})_{\uparrow\downarrow\cdots \uparrow\downarrow\uparrow} \left(\langle{\uparrow}|\otimes\langle \chi|\otimes \cdots \otimes \langle \chi|\right)|\psi_{2n+1}\rangle.
\end{align}
They lead to the closed-form expression
\begin{equation}
  \label{eqn:SquareNormSUSY}
  \|\psi_{N}\|^2 = A_{\mathrm{V}}(2\bar n+1)N_8(2n+2).
\end{equation}
We note that the results \eqref{eqn:AlternatingEntriesSUSY} and \eqref{eqn:SquareNormSUSY} settle Conjecture 6.1 from \cite{hagendorf:17}.

\subsubsection*{The logarithmic bipartite fidelity}

The logarithmic bipartite fidelity was introduced by Dubail and St\'ephan as an entanglement measure and a tool for detecting quantum critical points of interacting quantum systems \cite{dubail:11,dubail:13} (see also \cite{parez:19, morin:20b}). 
For the open XXZ chain of the present article with $x>0$, it is given by
\begin{equation}
  \mathcal F_{N_1,N_2} = - \ln \left|\frac{\langle \psi_{N}|\left(|\psi_{N_1}\rangle \otimes |\psi_{N_2}\rangle\right) }{\|\psi_N\|\, \|\psi_{N_1}\|\, \|\psi_{N_2}\|}\right|^2,
\end{equation}
where $N_1,N_2\geqslant 1$ are integers and $N=N_1+N_2$. The leading terms of its asymptotic expansion for large $N$ have been predicted by conformal field theory.

Here, we present a conjecture that allows us to obtain a closed-form expression of the logarithmic bipartite fidelity and its multipartite generalisations for finite $N$ \cite{lienardy:20}. To this end, let $1\leqslant m \leqslant N$ and $N_1,\dots,N_m\geqslant 1$ be integers such that $N=N_1+\dots+N_m$. We consider the scalar product
\begin{equation}
O_{N_1,\dots,N_m} = \langle \psi_{N}|\left(|\psi_{N_1}\rangle\otimes\cdots \otimes |\psi_{N_m}\rangle\right).
\end{equation}
The magnetisations of the ground-state vectors imply that this scalar product trivially vanishes if more than one of the integers $N_1,\dots, N_m$ is odd. We have obtained the scalar products for the non-trivial cases up to $N=12$ sites through the exact computation of the ground-state vector with \textsc{Mathematica}. Our results suggest the following conjecture in terms of $F_N=F_N(x,\alpha)$:
\begin{conjecture}
  \label{conj:Overlaps} If at most one of the integers $N_1,\dots,N_m$ is odd, then we have   \begin{equation}
    \label{eqn:ConjectureO}
    O_{N_1,\dots,N_m} =  x^n F_{N}(x,x^{-1})\, \prod_{i=1}^m \gamma_{N_i},
  \end{equation}
  where $\gamma_{2k} = A_{\mathrm{V}}(2k+1)$ and $\gamma_{2k+1}=N_8(2k+2)$ for each $k\geqslant 0$.
\end{conjecture}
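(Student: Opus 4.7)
The strategy is to split the proof into two parts: establishing a norm formula, and proving a partition-invariance property. As a preliminary step, the explicit two-site ground state $|\psi_2(x)\rangle = x|{\uparrow\downarrow}\rangle + |{\downarrow\uparrow}\rangle$, which is fixed by the normalisation and \cref{thm:MainTheorem3}, yields the identification $\langle\xi(x^{-1})| = x^{-1}\langle\psi_2(x)|$ on the two-site space. Consequently, for $N=2n$ one finds $x^n F_N(x,x^{-1}) = \bigl(\langle\psi_2|\bigr)^{\otimes n}|\psi_N\rangle = O_{2,\dots,2}$, and analogously for odd $N$ one finds $x^n F_N(x,x^{-1}) = O_{1,2,\dots,2}$. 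Since $\gamma_1 = \gamma_2 = 1$, \cref{conj:Overlaps} is equivalent to the combination of the partition-invariance statement that $O_{N_1,\dots,N_m}/\prod_i\gamma_{N_i}$ depends only on $N$ (and $x$), with the $m=1$ identity
\[
\|\psi_N\|^2 = \gamma_N\,x^n F_N(x,x^{-1}).
\]

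The $m=1$ norm identity should be accessible via a Korepin-type argument in the inhomogeneous setting of \cref{sec:bqKZ}. The sum of squares of components can be reorganised, via a Cauchy-like summation of the contour-integral representations, into a single determinant whose homogeneous limit is then compared with the determinants of \cref{thm:MainTheorem4} and \eqref{eqn:SpecialComponents}, together with Krattenthaler's formula already used in the excerpt to establish \eqref{eqn:AlternatingEntriesSUSY}. This step generalises the supersymmetric factorisation \eqref{eqn:SquareNormSUSY} from $x=1$ to arbitrary~$x$.

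For the multipartite statement, it suffices to prove invariance under a single elementary refinement $(N_1,\dots,N_i,\dots,N_m) \to (N_1,\dots,N_i',N_i'',\dots,N_m)$ with $N_i'+N_i''=N_i$, by iteration starting from the tautological partition $(2,\dots,2)$ (respectively $(1,2,\dots,2)$) identified above. The natural mechanism is a factorisation identity at the inhomogeneous level: at certain specialisations of the spectral parameters attached to the sites of the $i$-th block, the bqKZ solution $|\psi_N(z_1,\dots,z_N)\rangle$ should split as a tensor product involving a lower-chain solution, in the spirit of the recursion relations of Razumov, Stroganov and Zinn-Justin \cite{razumov:07}. Substituting into $O_{N_1,\dots,N_m}$ and matching residues in the contour integrals would then yield the refinement identity block by block.

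The principal obstacle is that the factors $\gamma_{N_i}$ are $x$-independent combinatorial constants, whereas every intermediate quantity in the contour-integral computation is a rational function of $x$. The cancellation of all $x$-dependence within each block contribution is not a consequence of supersymmetry (which holds only at $x=1$) and appears to require a genuinely global use of the bqKZ structure, together with a Krattenthaler-type evaluation of the resulting block determinant. Isolating precisely the constants $A_{\mathrm V}(2\lceil N_i/2\rceil+1)$ or $N_8(2\lceil N_i/2\rceil)$ from a generically $x$-dependent residue computation is, we expect, where the hard work of the proof must lie.
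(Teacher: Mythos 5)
This statement is \cref{conj:Overlaps}, and the paper does not prove it: the authors state explicitly that it is supported by exact computations up to $N=12$, that the $x=1$ case follows from the supersymmetry results of \cite{hagendorf:17} together with \eqref{eqn:AlternatingEntriesSUSY} and \eqref{eqn:SquareNormSUSY}, and that ``for arbitrary $x$, the proof \dots{} [is] beyond the scope of this article.'' So there is no proof in the paper to compare yours against, and the question is whether your proposal closes the gap on its own. It does not. Your preliminary reduction is correct and worth keeping: from the normalisation and \cref{thm:MainTheorem3} one indeed gets $|\psi_2(x)\rangle = x|{\uparrow\downarrow}\rangle+|{\downarrow\uparrow}\rangle$, hence $\langle\xi(x^{-1})|=x^{-1}\langle\psi_2(x)|$, hence $O_{2,\dots,2}=x^nF_{2n}(x,x^{-1})$ (and likewise $O_{1,2,\dots,2}=x^nF_{2n+1}(x,x^{-1})$), and since $\gamma_1=\gamma_2=1$ this verifies the conjecture for the finest partitions and reduces the general statement to refinement invariance of $O_{N_1,\dots,N_m}/\prod_i\gamma_{N_i}$. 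That much is a genuine, if modest, observation.

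Everything after that, however, is a plan rather than an argument, and the two load-bearing steps are precisely the ones you leave unestablished. First, the $m=1$ identity $\|\psi_N\|^2=\gamma_N\,x^nF_N(x,x^{-1})$ for arbitrary $x$ is itself an instance of the conjecture; asserting that it ``should be accessible via a Korepin-type argument'' and a ``Cauchy-like summation'' names a technique without exhibiting the determinant, the specialisation, or the evaluation, and the supersymmetric factorisation \eqref{eqn:SquareNormSUSY} that you invoke as a model is proved in \cite{hagendorf:17} by cohomological methods available only at $x=1$. Second, the block-refinement step requires a factorisation of $|\Psi_N(z_1,\dots,z_N)\rangle$ into a tensor product at suitable specialisations of a whole block of spectral parameters; the recursions actually proved in the paper (\cref{prop:BraidLimits}) peel off one site at a time and produce explicit $\beta$- and $q$-dependent prefactors, so it is not clear that iterating them yields a clean tensor-product splitting, let alone one whose prefactor is the $x$-independent constant $\gamma_{N_i}$. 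You acknowledge this yourself in your final paragraph: isolating $A_{\mathrm V}$ and $N_8$ from a generically $x$-dependent residue computation is ``where the hard work of the proof must lie.'' That concession is accurate, but it means the proposal is a research programme, not a proof; as a proof it has a gap coextensive with the conjecture itself.
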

The factorisation of $O_{N_1,\dots,N_m}$ into a product of determinants is unexpectedly simple. Its dependence on $x$ is completely determined by the sum $N$, but not by the individual choices of the integers $N_1,\dots,N_m$. Moreover, the factorisation implies that the scalar product is invariant under any permutation of the integers $N_1,\dots,N_m$, which is a surprising and counter-intuitive property.

For the supersymmetric point $x=1$, the proof of the conjecture follows from \eqref{eqn:AlternatingEntriesSUSY} and \eqref{eqn:SquareNormSUSY}, as well as a simple factorisation of the scalar product $O_{N_1,\dots,N_m}$ that we established in Theorem 5.9 of \cite{hagendorf:17}. For arbitrary $x$, the proof and the evaluation of the logarithmic bipartite fidelity for large $N$ are beyond the scope of this article.

\section{A solution to the boundary quantum Knizhnik-Zamolod{\-}chikov equations}
\label{sec:bqKZ}

In this section, we introduce and analyse a vector $|\Psi_N\rangle \in V^N$ that depends on $N$ complex variables $z_1,\dots,z_N$. We define its components through multiple contour integrals in \cref{sec:IntegralFormulas}. We use the contour integrals in \cref{sec:ExchangeRelations} to show that the vector obeys the \textit{exchange relations}. In \cref{sec:ReflectionEquations}, we show that it furthermore satisfies two \textit{reflection relations}.
The exchange and reflection relations imply that the vector is a solution to the so-called \textit{boundary quantum Knizhnik-Zamolodchikov equations}. In \cref{sec:Polynomiality}, we prove that the components of $|\Psi_N\rangle$ are Laurent polynomials in $z_1,\dots,z_N$ and compute their degrees. We use their properties in \cref{sec:Parity} to obtain the vector's behaviour under a parity transformation.

\subsection{Integral formulas}
\label{sec:IntegralFormulas}
Throughout this article, we systematically use the notation $[z] = z-z^{-1}$, as well as the integers $n$ and $\bar n$, defined in \eqref{eqn:Defnnbar}. Let $a_1,\dots,a_n$ be integers that satisfy $1\leqslant a_1 < \dots < a_n \leqslant N$. Inspired by \cite{difrancesco:07_2,degier:09,fonseca:09}, we define the multiple contour integral
\begin{multline}
  \label{eqn:DefPsiCI}
  (\Psi_N)_{a_1,\dots,a_n}(z_1,\dots, z_N) =(-[q])^n \prod_{1\leqslant i < j \leqslant N}\left[qz_j/z_i\right]\left[q^2z_iz_j\right]\\
  \times \oint \cdots \oint \prod_{\ell=1}^n\frac{\diff w_\ell}{\pi \i w_\ell} \,\Xi_{a_1,\dots,a_n}(w_1,\dots,w_n|z_1,\dots,z_N),
\end{multline}
where $z_1,\dots,z_N$ are generic complex numbers. Moreover, $q \in \mathbb C\backslash \{0,1,-1\}$ is a complex parameter.
The integrand contains the meromorphic function
\begin{multline}
  \label{eqn:DefXi}
 \Xi_{a_1,\dots,a_n}(w_1,\dots,w_n|z_1,\dots,z_N)\\
  =\frac{\prod_{1\leqslant i < j \leqslant n}[q w_j/w_i][w_i/w_j]
 [qw_iw_j]\prod_{1\leqslant i \leqslant j \leqslant n}[q^2w_iw_j]
  \prod_{i=1}^n[\beta w_i]}{\prod_{i=1}^{n}\left(\prod_{j=1}^{a_i}[z_j/w_i]\prod_{j=a_i}^{N}[qz_j/w_i]\prod_{j=1}^N[q^2w_iz_j]\right)},
\end{multline}
where $\beta \in \mathbb C\backslash \{0\}$. The integration contour of $w_i$ in \eqref{eqn:DefPsiCI} is a collection of positively-oriented curves surrounding the poles $w_i=z_j$, $j=1,\dots,N$. These curves do not surround the other singularities $w_i=0,-z_j,\pm q z_j,\pm q^{-2}z_j^{-1}$, where $j=1,\dots,N$. Similarly, let $b_1,\dots,b_{\bar n}$ be integers with $1\leqslant b_1 < \dots < b_{\bar n} \leqslant N$. We define the multiple contour integral
\begin{multline}
  \label{eqn:DefPsiBarCI}
    (\overline{\Psi}_N)_{b_1,\dots,b_{\bar{n}}} (z_1,\dots, z_N)%
= [q]^{\bar n}\prod_{1\leqslant i < j \leqslant N}\left[qz_j/z_i\right]\left[qz_iz_j\right]\prod_{i=1}^N [\beta z_i]\\
 \times \oint \cdots \oint \prod_{\ell=1}^{\bar{n}}\frac{\diff w_\ell}{\pi \i w_\ell} \,\overline\Xi_{b_1,\dots,b_{\bar{n}}}(w_1,\dots,w_{\bar{n}}|z_1,\dots,z_N),
\end{multline}
whose integrand contains the meromorphic function
\begin{multline}
  \label{eqn:DefXiBar}
  \overline{\Xi}_{b_1,\dots,b_{\bar{n}}}(w_1,\dots,w_{\bar{n}}|z_1,\dots,z_N)\\
  =\frac{\prod_{1\leqslant i < j \leqslant \bar{n}}[qw_j/w_i][w_i/w_j][q^2w_iw_j]\prod_{1\leqslant i \leqslant j \leqslant \bar{n}}[qw_iw_j]}{\prod_{i=1}^{\bar{n}}\left(\prod_{j=1}^{b_i}[q w_i/z_j]\prod_{j=b_i}^{N}[w_i/z_j]\prod_{j=1}^N[qw_iz_j]\right)\prod_{i=1}^{\bar{n}}[\beta w_i]}.
\end{multline}
The integration contour of $w_i$ in \eqref{eqn:DefPsiBarCI} is a collection of positively-oriented curves surrounding the poles $w_i=z_j$, but not the singularities located at $w_i=0,-z_j,\pm q^{-1} z_j,\pm q^{-1}z_j^{-1},\pm \beta^{-1}$, where $j=1,\dots,N$.

It is possible to apply the residue theorem and compute an explicit (combinatorial) formula for the multiple contour integrals defined in \eqref{eqn:DefPsiCI} and \eqref{eqn:DefPsiBarCI}. This  
formula is, however, not quite useful for explicit computations. It only simplifies in the two cases $a_i = i$ and $b_i = \bar n+i$. In these cases, we obtain:
\begin{proposition}
\label{prop:SpecialComponent}
For each $N\geqslant 2$, we have
\begin{multline}
  \label{eqn:SpecialComponent}
  (\Psi_N)_{1,\dots,n}(z_1,\dots, z_N)=  (\overline{\Psi}_N)_{\bar n+1,\dots,N}(z_1,\dots, z_N) \\
   =\prod_{i=1}^n[\beta z_i]\prod_{1\leqslant i < j \leqslant n}[qz_iz_j][qz_j/z_i]\prod_{n+1\leqslant i<j\leqslant N} [qz_j/z_i|[q^2z_iz_j].
\end{multline}
\begin{proof}
We only sketch the evaluation of $(\Psi_N)_{1,\dots,n}(z_1,\dots, z_N)$. To this end, we iteratively compute the contour integrals \eqref{eqn:DefPsiCI} with respect to $w_1,\dots,w_n$ for $a_i=i$. We observe that the only pole that contributes to the contour integral with respect to $w_\ell$ is $z_\ell$. The evaluation of its residue leads to \eqref{eqn:SpecialComponent}. The computation of $(\overline{\Psi}_N)_{\bar n+1,\dots,N}(z_1,\dots, z_N)$ is similar.
\end{proof}
\end{proposition}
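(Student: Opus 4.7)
The plan is to evaluate the multiple contour integral \eqref{eqn:DefPsiCI} for $(a_1,\dots,a_n)=(1,\dots,n)$ iteratively, performing the integrations in the order $w_1,\,w_2,\dots,\,w_n$ and showing that at each stage the integrand has only a single contributing pole inside the contour. For $w_1$, the factor $\prod_{j=1}^{a_1}[z_j/w_1]=[z_1/w_1]$ in the denominator of $\Xi_{1,\dots,n}$ produces, among the allowed poles $w_1\in\{z_1,\dots,z_N\}$, only the simple pole $w_1=z_1$; the remaining singularities coming from $[qz_j/w_1]$ and $[q^2 w_1 z_j]$ lie outside the contour by hypothesis.

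For each $\ell\geqslant 2$, I claim that after substituting $w_k = z_k$ for $k = 1,\dots,\ell-1$, the integrand as a function of $w_\ell$ has only the single pole $w_\ell = z_\ell$ inside the contour. The denominator factor $\prod_{j=1}^{a_\ell}[z_j/w_\ell]$ a priori produces simple poles at $w_\ell = z_1,\dots,z_\ell$. However, the numerator of $\Xi_{1,\dots,n}$ contains, through $\prod_{i<\ell}[w_i/w_\ell]$, zeros that after the earlier substitutions become $[z_i/w_\ell]$ and cancel the unwanted poles at $z_1,\dots,z_{\ell-1}$. Only $w_\ell = z_\ell$ survives, and I take its residue.

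After the $n$ successive residue extractions, the remaining factors of $\Xi_{1,\dots,n}$ evaluated at $(w_1,\dots,w_n)=(z_1,\dots,z_n)$ reduce to products of brackets in the $z_i$'s. Combining with the overall prefactor $(-[q])^n\prod_{1\leqslant i<j\leqslant N}[qz_j/z_i][q^2 z_i z_j]$ and the constants produced by the residues of $[z_\ell/w_\ell]^{-1}$, a routine algebraic simplification recovers the right-hand side of \eqref{eqn:SpecialComponent}. The identity for $(\overline{\Psi}_N)_{\bar n+1,\dots,N}$ follows from an entirely analogous strategy applied to $\overline{\Xi}$: the poles of $w_i$ inside the contour come from $\prod_{j\geqslant b_i}[w_i/z_j]$, and the numerator zeros $[w_i/w_j]$ for $i<j$ kill all but $w_i = z_{\bar n + i}$ after the preceding substitutions.

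The main obstacle is the bookkeeping in the final simplification: the numerator factors of $\Xi$ at the substituted values produce a dense product of brackets that has to be carefully matched against the prefactor to yield exactly the combinations $[qz_iz_j][qz_j/z_i]$ for $1\leqslant i<j\leqslant n$ and $[qz_j/z_i][q^2 z_iz_j]$ for $n+1\leqslant i<j\leqslant N$. The coincidence $(\Psi_N)_{1,\dots,n}=(\overline{\Psi}_N)_{\bar n+1,\dots,N}$, which is far from manifest at the level of the integral representations, serves as a nontrivial consistency check of the calculation.
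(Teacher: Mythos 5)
Your proposal is correct and follows essentially the same route as the paper: the paper's own (sketched) proof also evaluates the iterated contour integrals for $a_i=i$ by observing that only the pole $w_\ell=z_\ell$ contributes at each step, the mechanism being exactly the cancellation you identify between the denominator factors $\prod_{j\leqslant \ell}[z_j/w_\ell]$ and the numerator zeros $[w_i/w_\ell]\to[z_i/w_\ell]$ from the earlier substitutions. Your write-up merely makes this cancellation explicit and honestly defers the same final bookkeeping that the paper also omits.
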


We now introduce the two vectors $|\Psi_N\rangle=|\Psi_N(z_1,\dots,z_N)\rangle$ and $|\overline\Psi_N\rangle=|\overline\Psi_N(z_1,\dots,z_N)\rangle$. For $N=1$, they are given by $|{\Psi_1}\rangle=|{\overline\Psi_1}\rangle=|{\uparrow}\rangle$. For $N\geqslant 2$, we use \eqref{eqn:DefPsiCI} and \eqref{eqn:DefPsiBarCI} to define them as
\begin{alignat}{3}
  \label{eqn:DefPsi}
  |\Psi_N(z_1,\dots, z_N)\rangle &= \sum_{1\leqslant a_1 < \cdots < a_n \leqslant N} 
  &&(\Psi_N)_{a_1,\dots,a_n}(z_1,\dots, z_N)\,
  &&|\uparrow\cdots \uparrow\underset{a_1}{\downarrow}\uparrow \quad \cdots \quad\uparrow
  \underset{a_n}{\downarrow}\uparrow \cdots \uparrow\rangle,\\
  \label{eqn:DefPsiBar}
   |\overline\Psi_N(z_1,\dots, z_N) \rangle &= \sum_{1\leqslant b_1 < \cdots < b_{\bar n} \leqslant N} 
   &&(\overline{\Psi}_N)_{b_1,\dots,b_{\bar n}} (z_1,\dots, z_N)
   &&|\downarrow\cdots \downarrow\underset{b_1}{\uparrow}\downarrow \quad \cdots \quad \downarrow
   \underset{b_{\bar{n}}}{\uparrow}\downarrow \cdots \downarrow\rangle.
\end{alignat}
Here and in the following, we label the components of a vector in terms of the positions of the up or down spins of the associated spin configuration
(as opposed to the labelling by the spin configurations used in \cref{sec:SpinChainGS}). Moreover, we only write out the dependence on $z_1,\dots,z_N$ if necessary.

It follows from \cref{prop:SpecialComponent} that the vectors $|\Psi_N\rangle$ and $|\overline \Psi_N\rangle$ do not identically vanish. The purpose of the following section is to investigate their properties.

\subsection{The exchange relations}
\label{sec:ExchangeRelations}

To formulate the exchange relations, we introduce the $R$-matrix of the six-vertex model. It is an operator on $V^2$ that acts on the canonical basis $\{|{\uparrow\uparrow}\rangle,|{\uparrow\downarrow}\rangle,|{\downarrow\uparrow}\rangle,|{\downarrow\downarrow}\rangle\}$ as the matrix
\begin{subequations}
\label{eqn:DefR}
\begin{equation}
  R(z) =
  \begin{pmatrix}
    a(z) & 0 & 0 & 0\\
    0 & b(z) & c(z) & 0\\
    0 & c(z) & b(z) & 0\\
    0 & 0 & 0 & a(z)
  \end{pmatrix}.
\end{equation}
In this article, we choose the following parameterisation for the entries of $R(z)$:
\begin{equation}
  a(z) = [q z]/[q/z], \quad b(z) = [z]/[q/z], \quad c(z) = [q]/[q/z].
\end{equation}
\end{subequations}
The $R$-matrix obeys the Yang-Baxter equation. On $V^3$, it is given by
\begin{equation}
  \label{eqn:YBE}
  R_{1,2}(z/w)R_{1,3}(z)R_{2,3}(w) = R_{2,3}(w)R_{1,3}(z)R_{1,2}(z/w).
\end{equation}
Moreover, we have have $R(1)= \piJ $, where $\piJ$ is the permutation operator acting according to $\piJ(|v\rangle \otimes |w\rangle) = |w\rangle \otimes |v\rangle$ for any $|v\rangle,|w\rangle \in V$. Using this operator, we define the $\check R$-matrix by
\begin{equation}
\check R(z) = \piJ R(z).
\end{equation}
It follows from \eqref{eqn:YBE} that it obeys the braid version of the Yang-Baxter equation
\begin{equation}
  \label{eqn:BraidYBE}
  \check R_{1,2}(z/w)\check R_{2,3}(z)\check R_{1,2}(w) = \check R_{2,3}(w)\check R_{1,2}(z)\check R_{2,3}(z/w).
\end{equation}

\subsubsection*{The exchange relations}

We say that a vector $|\Phi\rangle=|\Phi(z_1,\dots,z_N)\rangle \in V^N,\,N\geqslant 2,$ that depends on the complex numbers $z_1,\dots,z_N$, obeys the \textit{exchange relations} if
\begin{equation}
  \check R_{i,i+1}(z_i/z_{i+1})|\Phi(\dots,z_i,z_{i+1},\dots)\rangle = |\Phi(\dots,z_{i+1},z_i,\dots)\rangle,
  \label{eqn:ExchangeRelations}
\end{equation}
for each $i=1,\dots,N-1$. The compatibility of these equations follows from \eqref{eqn:BraidYBE}, extended to $V^N$.
\begin{proposition}
  \label{prop:Exchange}
  For each $N\geqslant 2$, the  vectors $|\Psi_N\rangle$ and $|\overline\Psi_N\rangle$ obey the exchange relations \eqref{eqn:ExchangeRelations}.
\end{proposition}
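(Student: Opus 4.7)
\textit{Plan.} I focus on $|\Psi_N\rangle$; the argument for $|\overline\Psi_N\rangle$ is entirely analogous. Fix $i\in\{1,\dots,N-1\}$, set $z=z_i,\,z'=z_{i+1}$, and verify \eqref{eqn:ExchangeRelations} for this single $i$; by \eqref{eqn:BraidYBE}, the full system is then automatically compatible. The strategy is to check the identity componentwise on the canonical basis of $V^N$, separating cases according to the spin values at sites $i$ and $i+1$.

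When the spins at $i,\,i+1$ coincide, $\check R_{i,i+1}(z/z')$ acts as the scalar $a(z/z')=[qz/z']/[qz'/z]$, and the asymmetric factor $[qz'/z]$ in the global prefactor $\prod_{j<k}[qz_k/z_j][q^2z_jz_k]$ of \eqref{eqn:DefPsiCI} already supplies exactly this swap ratio. It remains to show that the multiple contour integral of $\Xi_{a_1,\ldots,a_n}$ is invariant under $z\leftrightarrow z'$. If neither of $i,\,i+1$ lies in the label set, $\Xi_{a_1,\ldots,a_n}$ is pointwise symmetric in $z\leftrightarrow z'$ by direct inspection of \eqref{eqn:DefXi}. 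If both lie in the label set (so $a_\ell=i$ and $a_{\ell+1}=i+1$ for some $\ell$), the integrand is not pointwise symmetric, and its invariance as an integral must be obtained by combining the $z$-swap with a relabeling $w_\ell\leftrightarrow w_{\ell+1}$ of a pair of integration variables and exploiting the antisymmetric factor $\prod_{i<j}[w_i/w_j]$ in the numerator of $\Xi$.

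When the spins at $i,\,i+1$ differ, the relevant components pair as $(\ldots,a_\ell=i,\ldots)$ and $(\ldots,a_\ell=i+1,\ldots)$ with the remaining labels fixed in $\{1,\ldots,N\}\setminus\{i,i+1\}$, and $\check R_{i,i+1}(z/z')$ mixes them with weights $b(z/z')$ and $c(z/z')$. The key observations are: (i) shifting $a_\ell$ from $i$ to $i+1$ multiplies $\Xi$ by $[qz/w_\ell]/[z'/w_\ell]$, and (ii) swapping $z\leftrightarrow z'$ in $\Xi$ with $a_\ell=i$ (resp.\ $a_\ell=i+1$) multiplies the integrand by $[z/w_\ell]/[z'/w_\ell]$ (resp.\ $[qz'/w_\ell]/[qz/w_\ell]$); both follow immediately from the role of $a_\ell$ as the boundary index in the denominator product $\prod_{j=1}^{a_\ell}[z_j/w_\ell]\prod_{j=a_\ell}^N[qz_j/w_\ell]$ of \eqref{eqn:DefXi}. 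Substituting (i)--(ii) into the target identity for a mixed pair and clearing the common prefactor swap ratio reduces the mixed-case exchange relation to the pair of elementary algebraic identities
\begin{align*}
  [qz/z'][z/w_\ell]&=[q][z'/w_\ell]+[z/z'][qz/w_\ell],\\
  [qz/z'][qz'/w_\ell]&=[z/z'][z'/w_\ell]+[q][qz/w_\ell],
\end{align*}
which are verified by direct expansion with $[u]=u-u^{-1}$.

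The main obstacle is the coincident-spin subcase where both of $i,\,i+1$ lie in the label set: the invariance under $z\leftrightarrow z'$ is no longer pointwise and requires tracking the signs from the antisymmetric $w$-numerator of $\Xi$ together with a simultaneous permutation of the $w$-variables. In all cases, the integration contours are unaffected by $z\leftrightarrow z'$, since the remaining singularities $w_\ell\in\{0,-z_j,\pm qz_j,\pm q^{-2}z_j^{-1}\}$ are permuted among themselves and remain outside the contours; hence every identity obtained at the level of the integrand passes directly to an identity of contour integrals.
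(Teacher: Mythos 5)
Your case decomposition is exactly the paper's (four cases according to whether $i$ and $i+1$ lie in the label set), and your treatment of the mixed-spin cases is correct: the ratios in your observations (i)--(ii) agree with what one reads off from \eqref{eqn:DefXi}, and the two elementary identities you reduce to are precisely the ones underlying \eqref{eqn:ExchangeComponents3} and \eqref{eqn:ExchangeComponents4}. The handling of the prefactor and of the contours is also fine.

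There is, however, a genuine gap in the subcase you yourself flag as the main obstacle, namely $i,i+1\in\{a_1,\dots,a_n\}$. The mechanism you propose --- that the integrand returns to itself under the simultaneous swap $z_i\leftrightarrow z_{i+1}$ and relabelling $w_\ell\leftrightarrow w_{\ell+1}$, with the sign controlled by the factor $\prod_{i<j}[w_i/w_j]$ --- does not work. Track the $z_i,z_{i+1}$-dependent denominator factors of \eqref{eqn:DefXi}: with $a_\ell=i$, $a_{\ell+1}=i+1$ they are $[z_i/w_\ell][qz_i/w_\ell][qz_{i+1}/w_\ell]$ and $[z_i/w_{\ell+1}][z_{i+1}/w_{\ell+1}][qz_{i+1}/w_{\ell+1}]$; after swapping the $z$'s and relabelling the $w$'s one obtains $[z_{i+1}/w_\ell][z_i/w_\ell][qz_i/w_\ell]$ attached to $w_\ell$, which differs from the original by $[qz_{i+1}/w_\ell]$ versus $[z_{i+1}/w_\ell]$; moreover the numerator factor $[qw_{\ell+1}/w_\ell]$ goes to $[qw_\ell/w_{\ell+1}]$, which is not $\pm[qw_{\ell+1}/w_\ell]$. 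So the combined operation is not a symmetry of the integrand, even up to sign. What is actually needed is the following finer statement (this is what the paper proves): the \emph{difference} $\Xi_{a_1,\dots,a_n}(\dots,z_{i+1},z_i,\dots)-\Xi_{a_1,\dots,a_n}(\dots,z_i,z_{i+1},\dots)$ equals the explicit multiple
\begin{equation*}
  \frac{[q w_\ell/w_{\ell+1}][z_i/z_{i+1}]}{[q z_{i}/w_{\ell+1}][z_{i+1}/w_{\ell}]}\,
  \Xi_{a_1,\dots,a_n}(\dots,z_i,z_{i+1},\dots),
\end{equation*}
and this product is antisymmetric under $w_\ell\leftrightarrow w_{\ell+1}$ \emph{alone}, so its contour integral vanishes; equivalently, the symmetrisation of $\Xi$ over the pair $(w_\ell,w_{\ell+1})$ is symmetric in $z_i,z_{i+1}$. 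This requires computing the ratio explicitly and verifying the antisymmetry, which your sketch neither performs nor correctly anticipates. Until that computation is supplied, the equal-spin case with both sites occupied by down spins is unproven.
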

\begin{proof}
The proofs of the exchange relations for $|\Psi_N\rangle$ and $|\overline \Psi_N\rangle$ are similar and follow the lines of \cite{razumov:07}. Hence, we focus on $|\Psi_N\rangle$. To prove that it obeys the exchange relations, we consider integers $a_1,\dots,a_n$ with $1\leqslant a_1<a_2<\cdots < a_n \leqslant N$ and an integer $1\leqslant i \leqslant N-1$. We examine four cases, depending on whether $i$ and $i+1$ belong to $ \{a_1,\dots,a_n\}$. In this proof, we use $\Xi_{a_1,\dots,a_n}(z_1,\dots,z_N)$ as a shorthand notation for $\Xi_{a_1,\dots,a_n}(w_1,\dots, w_n|z_1,\dots,z_N)$.

\medskip

\noindent \textit{Case 1: $i,i+1 \notin \{a_1,\dots,a_n\}$.} In this case, we note that $\Xi_{a_1,\dots,a_n}(\dots,z_i,z_{i+1},\dots )$ is symmetric under the exchange of $z_i$ and $z_{i+1}$. We combine this observation with \eqref{eqn:DefPsiCI} and conclude that the quotient of $(\Psi_N)_{a_1,\dots,a_n}(\dots,z_i,z_{i+1},\dots)$ and $[qz_{i+1}/z_i]$ is symmetric under the exchange of $z_i$ and $z_{i+1}$. Hence, we find the relation
\begin{equation}
  \label{eqn:ExchangeComponents12}
  \frac{[qz_i/z_{i+1}]}{[q z_{i+1}/z_i]}(\Psi_N)_{a_1,\dots,a_n}(\dots,z_i,z_{i+1},\dots)=(\Psi_N)_{a_1,\dots,a_n}(\dots,z_{i+1},z_i,\dots).
\end{equation}

\noindent \textit{Case 2: $i,i+1 \in \{a_1,\dots,a_n\}$.} Let $1\leqslant \ell \leqslant n-1$ be the integer such that $a_\ell = i$. We have
\begin{multline}
  \Xi_{a_1,\dots,a_n}(\dots,z_{i+1},z_{i},\dots) -\Xi_{a_1,\dots,a_n}(\dots,z_i,z_{i+1},\dots)\\
  = \frac{[q w_\ell/w_{\ell+1}][z_i/z_{i+1}]}{[q z_{i}/w_{\ell+1}][z_{i+1}/w_{\ell}]} 
  \Xi_{a_1,\dots,a_n}(\dots,z_i,z_{i+1},\dots)
\end{multline}
The inspection of \eqref{eqn:DefXi} allows us to conclude that this difference is antisymmetric under the exchange of $w_\ell$ and $w_{\ell+1}$. Hence, the multiple contour integral over the difference vanishes. It follows that the quotient of $(\Psi_N)_{a_1,\dots,a_n}(\dots,z_i,z_{i+1},\dots)$ and $[qz_{i+1}/z_i]$ is symmetric under the exchange of $z_i$ and $z_{i+1}$. Therefore, the relation \eqref{eqn:ExchangeComponents12} holds in this case, too.

\medskip

\noindent \textit{Case 3: $i \in \{a_1,\dots,a_n\}$ and $i+1\notin \{a_1,\dots,a_n\}$.} Let $1\leqslant \ell \leqslant n-1$ be the integer such that $a_\ell = i$. We find 
\begin{align}
  \Xi_{a_1,\dots,i+1,\dots,a_n}(\dots,z_i,z_{i+1},\dots ) & = \frac{[qz_i/w_\ell]}{[z_{i+1}/w_\ell]}\Xi_{a_1,\dots,i,\dots,a_n}(\dots,z_i,z_{i+1},\dots ),\\
  \Xi_{a_1,\dots,i,\dots,a_n}(\dots,z_{i+1},z_i,\dots ) & = \frac{[z_i/w_\ell]}{[z_{i+1}/w_\ell]}\Xi_{a_1,\dots,i,\dots,a_n}(\dots,z_i,z_{i+1},\dots ).
\end{align}
We combine these relations into the equality
\begin{multline}
  [q] \Xi_{a_1,\dots,i,\dots,a_n}(\dots,z_i,z_{i+1},\dots)+[z_i/z_{i+1}]\Xi_{a_1,\dots,i+1,\dots,a_n}(\dots,z_i,z_{i+1},\dots)\\
   =[qz_{i}/z_{i+1}]\Xi_{a_1,\dots,i,\dots,a_n}(\dots,z_{i+1},z_{i},\dots).
\end{multline}
Using this equality, it is straightforward to show that
\begin{multline}
  \frac{[q]}{[qz_{i+1}/z_i]} (\Psi_N)_{a_1,\dots,i,\dots,a_n}(\dots,z_i,z_{i+1},\dots)+\frac{[z_i/z_{i+1}]}{{[qz_{i+1}/z_i]}}(\Psi_N)_{a_1,\dots,i+1,\dots,a_n}(\dots,z_i,z_{i+1},\dots)\\
 =(\Psi_N)_{a_1,\dots,i,\dots,a_n}(\dots,z_{i+1},z_{i},\dots).\label{eqn:ExchangeComponents3}
\end{multline}

\noindent \textit{Case 4: $i \notin \{a_1,\dots,a_n\}$ and $i+1\in \{a_1,\dots,a_n\}$} The analysis of this case is very similar to the previous one. One obtains the relation
\begin{multline}
   \frac{[q]}{[qz_{i+1}/z_i]} (\Psi_N)_{a_1,\dots,i+1,\dots,a_n}(\dots,z_i,z_{i+1},\dots)+\frac{[z_i/z_{i+1}]}{{[qz_{i+1}/z_i]}}(\Psi_N)_{a_1,\dots,i,\dots,a_n}(\dots,z_i,z_{i+1},\dots)\\
  =(\Psi_N)_{a_1,\dots,i+1,\dots,a_n}(\dots,z_{i+1},z_{i},\dots).  \label{eqn:ExchangeComponents4}
\end{multline}

To conclude, we note that \eqref{eqn:ExchangeComponents12}, \eqref{eqn:ExchangeComponents3} and \eqref{eqn:ExchangeComponents4} are equal to the exchange relations, written for the components of $|\Psi_N\rangle$, which ends the proof.
\end{proof}

\subsubsection*{Properties of solutions to the exchange relations}

We now investigate a few simple properties of a vector $|\Phi\rangle=|\Phi(z_1,\dots,z_N)\rangle \in V^N,\, N\geqslant 2,$ with magnetisation $\mu=(\bar n - n)/2$ that obeys the exchange relations \eqref{eqn:ExchangeRelations}. We define its components through the expansion
\begin{equation}
  \label{eqn:DefPhi}
  |\Phi\rangle = \sum_{1\leqslant a_1 < \cdots < a_n \leqslant N} \Phi_{a_1,\dots,a_n}(z_1,\dots,z_N)|{\uparrow}\cdots \uparrow\underset{a_1}{\downarrow}\uparrow \quad \cdots \quad\uparrow\underset{a_n}{\downarrow}\uparrow \cdots \uparrow\rangle.
\end{equation}
Following the proof of \cref{prop:Exchange}, we rewrite the exchange relations in terms of the components of the vector.  There are four different cases. The next lemma addresses two of them.
\begin{lemma}
  \label{lem:ExchangeComponents1}
  Let $a_1,\dots,a_n$ be integers such that $1\leqslant a_1 < \cdots < a_n \leqslant N$. If $1\leqslant i \leqslant N-1$ is an integer such that either $i,i+1 \notin \{a_1,\dots,a_n\}$ or $i,i+1 \in \{a_1,\dots,a_n\}$ then
  \begin{equation}
    \Phi_{a_1,\dots,a_n}(\dots,z_i,z_{i+1},\dots) = [qz_{i+1}/z_i] \bar \Phi_{a_1,\dots,a_n}(\dots,z_i,z_{i+1},\dots),
  \end{equation}
  where $\bar \Phi_{a_1,\dots,a_n}(\dots,z_i,z_{i+1},\dots)$ is symmetric under the exchange of $z_i$ and $z_{i+1}$.
\end{lemma}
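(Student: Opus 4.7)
My plan is to derive the statement by projecting the exchange relation \eqref{eqn:ExchangeRelations} onto a single canonical basis vector of $V^N$ and exploiting the block structure of $\check R(z)$ on $V\otimes V$.

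First I would observe that since $\check R(z)=\piJ R(z)$ and $R(z)$ acts on $|{\uparrow\uparrow}\rangle$ and $|{\downarrow\downarrow}\rangle$ as multiplication by $a(z)$, both vectors are also eigenvectors of $\check R(z)$ with eigenvalue $a(z)=[qz]/[q/z]$, while the two-dimensional subspace spanned by $|{\uparrow\downarrow}\rangle,|{\downarrow\uparrow}\rangle$ is preserved but acted on non-trivially. Consequently, $\check R_{i,i+1}(z_i/z_{i+1})$ does not mix basis vectors of $V^N$ whose local configuration at positions $(i,i+1)$ is $|{\uparrow\uparrow}\rangle$ or $|{\downarrow\downarrow}\rangle$ with basis vectors having a different local configuration at those sites.

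Next, under the hypothesis on $(a_1,\dots,a_n)$, the basis vector $|{\uparrow}\cdots \uparrow\underset{a_1}{\downarrow}\uparrow \cdots \underset{a_n}{\downarrow}\uparrow\cdots \uparrow\rangle$ carries either $|{\uparrow\uparrow}\rangle$ (if both $i,i+1$ are outside $\{a_1,\dots,a_n\}$) or $|{\downarrow\downarrow}\rangle$ (if both belong to this set) at positions $(i,i+1)$. Pairing both sides of \eqref{eqn:ExchangeRelations} with the dual of this basis vector and using the first step then yields
\[
a(z_i/z_{i+1})\,\Phi_{a_1,\dots,a_n}(\dots,z_i,z_{i+1},\dots) = \Phi_{a_1,\dots,a_n}(\dots,z_{i+1},z_i,\dots).
\]
Substituting $a(z_i/z_{i+1}) = [qz_i/z_{i+1}]/[qz_{i+1}/z_i]$ and clearing the denominator shows that the function $\Phi_{a_1,\dots,a_n}(\dots,z_i,z_{i+1},\dots)/[qz_{i+1}/z_i]$ is invariant under $z_i \leftrightarrow z_{i+1}$, which is exactly the claim of the lemma with the identification $\bar\Phi_{a_1,\dots,a_n} = \Phi_{a_1,\dots,a_n}/[qz_{i+1}/z_i]$.

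I do not foresee a substantive obstacle. The only point requiring care is the absence of contributions from other canonical basis vectors of $|\Phi\rangle$ when projecting the left-hand side, which is precisely what the block form of $\check R(z)$ noted in the first step guarantees. The logic mirrors Cases 1 and 2 in the proof of \cref{prop:Exchange}, but carried out at the abstract level of arbitrary solutions to the exchange relations rather than the explicit contour-integral vector $|\Psi_N\rangle$.
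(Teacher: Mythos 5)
Your proof is correct and is essentially the paper's own argument made explicit: the paper derives exactly the component relation $a(z_i/z_{i+1})\,\Phi_{a_1,\dots,a_n}(\dots,z_i,z_{i+1},\dots)=\Phi_{a_1,\dots,a_n}(\dots,z_{i+1},z_i,\dots)$ in Cases 1 and 2 of the proof of \cref{prop:Exchange} and treats the lemma as the restatement of this relation for an arbitrary solution of the exchange relations. Your projection onto a fixed basis covector, justified by the block-diagonal form of $\check R$ (and its symmetry), is precisely the step the paper leaves implicit, and the rearrangement into the symmetric quotient $\bar\Phi_{a_1,\dots,a_n}=\Phi_{a_1,\dots,a_n}/[qz_{i+1}/z_i]$ is exactly right.
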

For the two other cases, we introduce the divided difference operator $\delta$. It acts on a function $f$ of two complex variables $z,w$ according to
\begin{equation}\label{eqn:DefDividedDiff}
  \delta f(z,w) = \frac{[q w/z]f(w,z)-[q]f(z,w)}{[z/w]}.
\end{equation}
More generally, for a function $f$ depending on $z_1,\dots,z_N$, we write $\delta_i f$ for the action of the divided difference operator $\delta$ on $f$ with $z=z_i$ and $w=z_{i+1}$.

\begin{lemma}
  \label{lem:ExchangeComponents2}
  Let $a_1,\dots,a_n$ be integers such that $1\leqslant a_1 < \cdots < a_n \leqslant N$. If $1\leqslant i \leqslant N-1$ is an integer such that $i\in \{a_1,\dots,a_n\}$ and $i+1 \notin \{a_1,\dots,a_n\}$, then
  \begin{align}
    \Phi_{a_1,\dots,i+1,\dots,a_n}(\dots,z_i,z_{i+1},\dots) &= \delta_i \Phi_{a_1,\dots,i,\dots,a_n}(\dots,z_i,z_{i+1},\dots),\\
    \Phi_{a_1,\dots,i,\dots,a_n}(\dots,z_i,z_{i+1},\dots)& = \delta_i \Phi_{a_1,\dots,i+1,\dots,a_n}(\dots,z_i,z_{i+1},\dots).
   \end{align}
\end{lemma}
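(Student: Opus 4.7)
The plan is to translate the exchange relation \eqref{eqn:ExchangeRelations} at sites $(i,i+1)$ into a system of scalar identities on the components of $|\Phi\rangle$, and then solve that system for the two unknowns. Under the hypothesis $i \in \{a_1,\dots,a_n\}$, $i+1 \notin \{a_1,\dots,a_n\}$, only two basis vectors appearing in the expansion \eqref{eqn:DefPhi} are mixed by $\check R_{i,i+1}(z_i/z_{i+1})$ while keeping all other down-spin positions fixed, namely the one with a down spin at $i$ and the one with a down spin at $i+1$. Both identities of the lemma will emerge from a single exchange relation by projecting onto these two basis vectors.

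First I would compute the action of $\check R(z) = P R(z)$ on the two-site kets, which directly from \eqref{eqn:DefR} reads
\begin{equation}
  \check R(z)|{\downarrow\uparrow}\rangle = b(z)|{\uparrow\downarrow}\rangle + c(z)|{\downarrow\uparrow}\rangle,\qquad \check R(z)|{\uparrow\downarrow}\rangle = c(z)|{\uparrow\downarrow}\rangle + b(z)|{\downarrow\uparrow}\rangle.
\end{equation}
Setting $z = z_i/z_{i+1}$ and isolating the components of both sides of \eqref{eqn:ExchangeRelations} along the basis vector with a down spin at position $i$ (and along the one with a down spin at position $i+1$) yields the pair of linear relations
\begin{align}
  c(z)\,\Phi_{a_1,\dots,i,\dots,a_n}(z_i,z_{i+1}) + b(z)\,\Phi_{a_1,\dots,i+1,\dots,a_n}(z_i,z_{i+1}) &= \Phi_{a_1,\dots,i,\dots,a_n}(z_{i+1},z_i), \\
  b(z)\,\Phi_{a_1,\dots,i,\dots,a_n}(z_i,z_{i+1}) + c(z)\,\Phi_{a_1,\dots,i+1,\dots,a_n}(z_i,z_{i+1}) &= \Phi_{a_1,\dots,i+1,\dots,a_n}(z_{i+1},z_i),
\end{align}
where I suppressed the variables $z_j$ with $j \neq i,i+1$. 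The first equation is precisely the identity \eqref{eqn:ExchangeComponents3} derived in the proof of \cref{prop:Exchange}; the second is its analogue with the roles of the two components swapped.

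Substituting $b(z) = [z]/[q/z]$ and $c(z) = [q]/[q/z]$ with $[q/(z_i/z_{i+1})] = [qz_{i+1}/z_i]$, I solve the first equation for $\Phi_{a_1,\dots,i+1,\dots,a_n}(z_i,z_{i+1})$ and the second for $\Phi_{a_1,\dots,i,\dots,a_n}(z_i,z_{i+1})$. In each case the denominator $[z_i/z_{i+1}]$ matches the one in \eqref{eqn:DefDividedDiff}, and the numerators reproduce, respectively, the definitions of $\delta_i\Phi_{a_1,\dots,i,\dots,a_n}$ and $\delta_i\Phi_{a_1,\dots,i+1,\dots,a_n}$, giving the two stated identities. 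There is no real obstacle: the proof is pure bookkeeping, and the only point worth flagging is that both identities follow from the \emph{same} case of the exchange relation by projecting onto the two mixed basis vectors, rather than from two different cases as a naive reading of the proof of \cref{prop:Exchange} might suggest.
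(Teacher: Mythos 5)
Your proof is correct and follows essentially the same route the paper intends: the lemma is left as an immediate consequence of writing the exchange relation \eqref{eqn:ExchangeRelations} componentwise, and your two projected identities are exactly \eqref{eqn:ExchangeComponents3} and \eqref{eqn:ExchangeComponents4} from the proof of \cref{prop:Exchange}, solved for the appropriate component and matched against the definition \eqref{eqn:DefDividedDiff}. Your closing remark is also accurate — the two identities are the two projections of a single vector equation onto the mixed basis states, which is precisely how Cases 3 and 4 of that proof arise.
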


This lemma allows us to prove the following useful property of the solutions to the exchange relations:
\begin{proposition}
  \label{prop:ExchangeProperty}
Suppose that there are integers $\bar a_1,\dots,\bar a_n$ with $1\leqslant \bar a_1<\cdots<\bar a_n\leqslant N$ such that $\Phi_{\bar a_1,\dots,\bar a_n}(z_1,\dots,z_N)$ vanishes identically then the vector $|\Phi\rangle$ vanishes identically.
\end{proposition}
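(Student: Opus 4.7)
The plan is to propagate the hypothesised identical vanishing of one component of $|\Phi\rangle$ to every other component by means of \cref{lem:ExchangeComponents2}. The crucial observation is that the divided-difference operator $\delta_i$ defined in \eqref{eqn:DefDividedDiff} is linear, so in particular it annihilates the zero function. Consequently, whenever two components of $|\Phi\rangle$ are related by one of the two identities of \cref{lem:ExchangeComponents2}, either both vanish identically as functions of $z_1,\dots,z_N$, or neither does.

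To organise this propagation, I would introduce the graph $G$ whose vertices are the $\binom{N}{n}$ subsets $\{a_1<\cdots<a_n\}$ of $\{1,\dots,N\}$, with an edge joining two such subsets whenever they differ by a single elementary swap: there exists $1\leqslant i\leqslant N-1$ with $i$ in one set, $i+1$ in the other, and all other elements in common. The identities of \cref{lem:ExchangeComponents2} apply to each edge of $G$, so by the observation above, the identical vanishing of the component at one endpoint forces the identical vanishing of the component at the other endpoint. \cref{lem:ExchangeComponents1} plays no active role here, but it is reassuring that in the cases where both $i$ and $i+1$ belong to the index set, or neither does, the exchange relations impose no additional constraints on the component beyond a symmetry that preserves the index set; these cases simply do not move us between vertices of $G$.

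It remains to verify that $G$ is connected. This follows from a short sorting argument: starting from any $\{a_1<\cdots<a_n\}$, a sequence of elementary swaps decreases $a_1$ stepwise to $1$, then $a_2$ stepwise to $2$, and so on, thereby connecting every vertex to the reference vertex $\{1,\dots,n\}$. Granted this, propagating the hypothesis along any path in $G$ from $\{\bar a_1,\dots,\bar a_n\}$ to an arbitrary vertex $\{a_1,\dots,a_n\}$ shows that every component in the expansion \eqref{eqn:DefPhi} vanishes identically, whence $|\Phi\rangle\equiv 0$. The only genuine obstacle is connectivity of $G$, and the sorting argument resolves it without difficulty; I expect the proof to be quite short.
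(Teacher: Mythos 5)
Your proposal is correct and follows essentially the same route as the paper: the paper encodes your sorting path as a string of divided-difference operators taking $\Phi_{1,\dots,n}$ to an arbitrary component (its equation \eqref{eqn:PhiComponentFromSpecialComponent}) and then inverts that string using $\delta_j^2=\mathrm{id}$ to pull the vanishing back to $\Phi_{1,\dots,n}$. Your edge-by-edge propagation sidesteps the involutivity check by exploiting that \cref{lem:ExchangeComponents2} already states both identities, so only linearity of $\delta$ is needed; this is a cosmetic, if slightly cleaner, variant of the same argument.
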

\begin{proof}
\Cref{lem:ExchangeComponents2} allows us to write for all integers $a_1,\dots,a_n$ with $1\leqslant a_1< \dots < a_n\leqslant N$ the relation
\begin{equation}
  \label{eqn:PhiComponentFromSpecialComponent}
  \Phi_{a_1,\dots,a_n}(z_1,\dots,z_N) = 
 \left(\prod_{i=1,\dots,n} \prod_{j=i,\dots,a_i-1}^\curvearrowleft  \delta_j\right) 
  \Phi_{1,\dots,n}(z_1,\dots,z_N).
\end{equation}
Here, $\curvearrowleft$ indicates that the products of operators are taken in reverse order. Since the component $\Phi_{\bar a_1,\dots,\bar a_n}(z_1,\dots,z_N)$ vanishes identically, we find
\begin{align}
  0 &= \left(\prod_{i=1,\dots,n}^\curvearrowleft \prod_{j=i,\dots,\bar a_{i}-1}\delta_j\right)\Phi_{\bar a_1,\dots,\bar a_n}(z_1,\dots,z_N)\\
     &= \left(\prod_{i=1,\dots,n}^\curvearrowleft \prod_{j=i,\dots,\bar a_{i}-1}\delta_j\right)\left(\prod_{i=1,\dots, n} \prod_{j=i,\dots,\bar a_i-1} ^\curvearrowleft \delta_j\right)\Phi_{1,\dots, n}(z_1,\dots,z_N). 
     \label{eqn:IntermediateXi}
 \end{align}
One checks that for each $j=1,\dots,N-1$, the divided difference operator $\delta_j$ has the property $\delta_j^2 =\textup{id}$. We apply this property in \eqref{eqn:IntermediateXi} and conclude that $\Phi_{1,\dots, n}(z_1,\dots,z_N)$ vanishes identically. It follows from \eqref{eqn:PhiComponentFromSpecialComponent} that all components vanish identically.
\end{proof}

We now show that the two vectors defined in \cref{sec:IntegralFormulas} are equal. This equality allows us to limit our investigation to $|\Psi_N\rangle$. For each of its (non-trivial) components, we have two different multiple contour integral formulas.
\begin{proposition}
  \label{prop:Equality}
  We have $|\overline \Psi_N\rangle = |\Psi_N\rangle$.
\end{proposition}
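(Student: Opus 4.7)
The plan is to apply \cref{prop:ExchangeProperty} to the difference $|\Psi_N\rangle - |\overline{\Psi}_N\rangle$. First, I would note that both vectors lie in the sector of magnetisation $\mu = (\bar n - n)/2$: in \eqref{eqn:DefPsi} each basis vector has $n$ down spins and $\bar n$ up spins, and in \eqref{eqn:DefPsiBar} each basis vector has $\bar n$ up spins and $n$ down spins. By \cref{prop:Exchange}, both vectors satisfy the exchange relations \eqref{eqn:ExchangeRelations}, and since these relations are linear the difference satisfies them as well.

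Next, I would single out the spin configuration whose first $n$ sites are down and whose remaining $\bar n$ sites are up. In the expansion \eqref{eqn:DefPsi} this configuration corresponds to the component $(\Psi_N)_{1,\ldots,n}$, whereas in the expansion \eqref{eqn:DefPsiBar} it corresponds to the component of $|\overline{\Psi}_N\rangle$ with up spins at positions $n+1, n+2, \ldots, N$, namely $(\overline{\Psi}_N)_{n+1,\ldots,N}$ (which has $\bar n$ indices, as it must). The equality of these two multiple contour integrals is precisely the content of \cref{prop:SpecialComponent}, where both are evaluated to the same explicit product by iterative residue calculus. Hence, when the difference $|\Psi_N\rangle - |\overline{\Psi}_N\rangle$ is expanded in the canonical basis labelled by the positions of the down spins, its $(1,\ldots,n)$-component vanishes identically in $z_1,\ldots,z_N$.

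Finally, I would invoke \cref{prop:ExchangeProperty} directly: any vector in the sector $\mu = (\bar n - n)/2$ that obeys the exchange relations and has a single component vanishing identically must be the zero vector. Applied to the difference, this immediately yields $|\overline{\Psi}_N\rangle = |\Psi_N\rangle$. The argument is short because all the substantive work is already packaged in the earlier propositions; the only nontrivial input is the residue evaluation of the maximally ordered component in \cref{prop:SpecialComponent}, which must be carried out from two different sides (once from \eqref{eqn:DefPsiCI} and once from \eqref{eqn:DefPsiBarCI}) and lead to the same closed form. I do not anticipate any real obstacle beyond keeping track of this bookkeeping between the two index conventions (positions of down spins versus positions of up spins).
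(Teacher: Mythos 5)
Your proposal is correct and follows essentially the same route as the paper: form the difference, note that it obeys the exchange relations and that its $(1,\dots,n)$-component vanishes by \cref{prop:SpecialComponent}, then conclude with \cref{prop:ExchangeProperty}. Your index set $n+1,\dots,N$ for the $\overline\Psi_N$-component (with $\bar n$ entries) is in fact the consistent one for odd $N$, where the paper's notation $\bar n+1,\dots,N$ appears to be a slip.
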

\begin{proof}
For $N=1$, the proposition holds by the definition of the vectors. Hence, we consider the difference $|\Phi\rangle = |\overline \Psi_N\rangle-|\Psi_N\rangle$ for $N\geqslant 2$. It follows from \eqref{eqn:DefPsi} and \eqref{eqn:DefPsiBar} that this vector is of the form \eqref{eqn:DefPhi}. Moreover,  \cref{prop:Exchange} implies that it obeys the exchange relations. It has the component 
  \begin{equation}
     \Phi_{1,\dots,n}(z_1,\dots,z_N) = (\overline \Psi_N)_{\bar n+1,\dots,N}(z_1,\dots,z_N)-(\Psi_N)_{1,\dots,n}(z_1,\dots,z_N) = 0,
  \end{equation}
  as follows from \cref{prop:SpecialComponent}.
  By virtue of \cref{prop:ExchangeProperty}, we conclude that $|\Phi\rangle$ vanishes identically.
\end{proof}

\subsection{The reflection relations}
\label{sec:ReflectionEquations}
The reflection relations for the vector $|\Psi_N\rangle$ are written in terms of a $K$-matrix. It is an operator $K(z)$ on $V$ that solves the boundary Yang-Baxter equation for the six-vertex model \cite{sklyanin:88},
\begin{equation}
  \label{eqn:bYBE}
  R_{1,2}(z/w)K_1(z)R_{1,2}(zw)K_2(w)= K_2(w)R_{1,2}(zw)K_1(z)R_{1,2}(z/w).
\end{equation}
The most general solution of this equation can be found in \cite{vega:94}. In this article, we consider Cherednik's diagonal solution $K(z)=K(z;\beta)$ \cite{cherednik:92}. It acts on the canonical basis $\{|{\uparrow}\rangle, |{\downarrow}\rangle\}$ as the matrix
\begin{equation}
 \label{eqn:DefK}
  K(z;\beta) = 	\begin{pmatrix}
   					 1 & 0\\
  					 0 & [\beta z]/[\beta/z]
				\end{pmatrix},
\end{equation}
where $\beta$ is a non-zero complex number.

\subsubsection*{The reflection relations}
A vector $|\Phi\rangle=|\Phi(z_1,\dots,z_N)\rangle\in V^N$ that depends on the complex numbers $z_1,\dots,z_N$ obeys the reflection relations if
\begin{align}
  K_1(z_1^{-1};\beta)|\Phi(z_1,\dots,z_N)\rangle &=|\Phi(z_1^{-1},\dots,z_N)\rangle,
  \label{eqn:ReflectionLeft} \\
  K_N(s z_N;s\bar \beta)|\Phi(z_1,\dots,z_N)\rangle & =|\Phi(z_1,\dots,s^{-2}z_N^{-1})\rangle.
  \label{eqn:ReflectionRight}
\end{align}
Here, $s$ and $\bar \beta$ are two complex parameters. Throughout this section, we assume that they obey the relations
\begin{equation}
  \label{eqn:RelationSQ}
  s^4 = q^6,
\end{equation}
and
\begin{equation}
  \label{eqn:BetaBarBeta}
  \bar \beta^2\beta^2 q^2 = 1.
\end{equation}
\begin{proposition}
\label{prop:Reflection}
For each $N\geqslant 1$, the vector $|\Psi_N\rangle$ obeys the reflection relations.
\end{proposition}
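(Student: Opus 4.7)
The plan is to exploit Proposition~\ref{prop:Equality} to split the task: I would prove the left reflection relation \eqref{eqn:ReflectionLeft} using the integral representation \eqref{eqn:DefPsi} of $|\Psi_N\rangle$, and the right reflection relation \eqref{eqn:ReflectionRight} using the representation \eqref{eqn:DefPsiBar} of $|\overline{\Psi}_N\rangle$. The two representations are tailored to the two boundaries: in $\Xi$ the parameter $\beta$ sits as $[\beta w_i]$ in the numerator, which makes $|\Psi_N\rangle$ the natural object for the left $K$-matrix; in $\overline{\Xi}$ the parameter $\beta$ enters through the denominator factors $[\beta w_i]$ together with the prefactor $\prod_i[\beta z_i]$ in \eqref{eqn:DefPsiBarCI}, which makes $|\overline{\Psi}_N\rangle$ the natural object for the right $K$-matrix, once the constraints \eqref{eqn:RelationSQ} and \eqref{eqn:BetaBarBeta} are imposed.

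For \eqref{eqn:ReflectionLeft}, I fix indices $1\leqslant a_1<\cdots<a_n\leqslant N$ and distinguish two subcases according to whether $a_1=1$. When $a_1>1$, the entry of $K_1(z_1^{-1};\beta)$ on the (up) spin at site~$1$ is trivial, so I must show that $(\Psi_N)_{a_1,\ldots,a_n}(z_1,z_2,\ldots,z_N)$ is invariant under $z_1\mapsto z_1^{-1}$. The $z_1$-dependence of $\Xi$ is confined to the factors $[z_1/w_i]$ and $[q^2 w_i z_1]$ in the denominator, while in the prefactor of \eqref{eqn:DefPsiCI} only $\prod_{j=2}^N[qz_j/z_1][q^2 z_1 z_j]$ depends on $z_1$. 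I would combine $z_1\mapsto z_1^{-1}$ with a compensating change of integration variables $w_i\mapsto w_i^{-1}$, exploiting the identity $[u^{-1}]=-[u]$ and the palindromic structure of the $w$-dependent numerator of $\Xi$ (in particular the factor $[\beta w_i]$) to recover the original integrand. In the subcase $a_1=1$, the additional denominator factor $[qz_1/w_1]$ breaks this symmetry, and the same bookkeeping produces the ratio $[\beta/z_1]/[\beta z_1]$, which is exactly the $(\downarrow,\downarrow)$ entry of $K(z_1^{-1};\beta)$.

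The right reflection \eqref{eqn:ReflectionRight} is proved by the analogous argument applied to $|\overline{\Psi}_N\rangle$, splitting the cases $b_{\bar n}=N$ and $b_{\bar n}<N$ and performing a simultaneous transformation $z_N\mapsto s^{-2}z_N^{-1}$ together with a change of variables in the $w_i$-integral. The relations $s^4=q^6$ and $\bar\beta^2\beta^2 q^2=1$ are precisely what is needed for the shifted arguments $sz_N$ and $s\bar\beta$ in $K_N(sz_N;s\bar\beta)$ to be consistent with the factors $[\beta z_N]$ in the prefactor of \eqref{eqn:DefPsiBarCI} and $[\beta w_i]$ in the denominator of $\overline{\Xi}$, and to identify the emerging ratio with the boundary entry $[s^2\bar\beta z_N]/[\bar\beta/z_N]$.

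The main obstacle I expect is the contour-deformation step in the change of integration variables. The map $w_i\mapsto w_i^{-1}$ sends the prescribed contour surrounding $\{z_j\}$ to a contour surrounding $\{z_j^{-1}\}$; to recognise the resulting expression as the original integral one must deform this contour back and verify that none of the spurious singularities listed after \eqref{eqn:DefXi} (respectively after \eqref{eqn:DefXiBar}) are crossed. The specific placement of the forbidden poles at $\pm q z_j$ and $\pm q^{-2}z_j^{-1}$, together with the factors $[q^2 w_i z_j]$ in the denominator of $\Xi$ (and their counterparts in $\overline{\Xi}$), is exactly what ensures that this deformation is unobstructed.
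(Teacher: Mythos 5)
Your overall strategy has a genuine gap: you try to verify the reflection relations component by component, directly on the full $n$-fold (respectively $\bar n$-fold) contour integral, for arbitrary $(a_1,\dots,a_n)$. The paper never does this, and for good reason. Its proof first verifies each reflection relation on a \emph{single} extreme component --- either $(\Psi_N)_{1,\dots,n}$, where \cref{prop:SpecialComponent} gives a fully explicit product formula, or $(\Psi_N)_{1,\dots,n-1,N}$ (and an $\overline\Psi$ analogue for the left boundary), where $n-1$ of the integrations collapse onto single residues and leave one contour integral in one variable --- and only then propagates the relation to all other components with the same boundary spin through the exchange relations, i.e.\ by repeated application of the divided-difference identities of \cref{lem:ExchangeComponents2}, whose operators act on variables other than the boundary one. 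Your proposal contains no analogue of this propagation step, so even if your computation succeeded for one component it would not be organised to cover all of them.

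Second, the substitution you rely on does not do what you claim. The numerator of $\Xi$ in \eqref{eqn:DefXi} is not palindromic under $w_i\mapsto w_i^{-1}$: one has $[\beta w_i^{-1}]=\beta w_i^{-1}-\beta^{-1}w_i\neq\pm[\beta w_i]$ for generic $\beta$, and likewise $[qw_iw_j]$ and $[q^2w_iw_j]$ become $[q/(w_iw_j)]$ and $[q^2/(w_iw_j)]$, which are not proportional to the original factors. So inverting all integration variables does not reproduce the integrand, and no single global substitution fixes the cross terms $[qw_j/w_i][w_i/w_j][qw_iw_j][q^2w_iw_j]$ simultaneously. The mechanism that actually makes the nontrivial boundary cases work is not invariance of the integral under a change of variables but a \emph{vanishing} argument: after reduction to a one-dimensional integral $\oint_C f(w)\,\diff w$ as in \eqref{eqn:DiffSimple}, one checks that $f(w)\,\diff w$ is invariant under the three involutions $w\mapsto -w$ and $w\mapsto\pm q^{-2}w^{-1}$ --- this is precisely where \eqref{eqn:BetaBarBeta} enters --- and that $f$ has no residue at infinity; pushing $C$ to infinity then yields $\oint_C f=-3\oint_C f$, hence $\oint_C f=0$. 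Your sketch contains neither the reduction to one variable nor this vanishing argument, and the contour-deformation issue you single out as the main obstacle is secondary to these missing ingredients.
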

\begin{proof} 
  The case $N=1$ is trivial. Hence, we consider $N\geqslant 2$. We present the proof of the second reflection relation \eqref{eqn:ReflectionRight} for $|\Phi\rangle = |\Psi_N\rangle$ in detail. To this end, we choose integers $a_1,\dots,a_n$ with $1\leqslant a_1<\dots < a_n \leqslant N$, and establish the two equations
\begin{align}
    \label{eqn:ReflectionRight1}
    (\Psi_N)_{a_1,\dots,a_{n}}(z_1,\dots,z_N) 
    				&= (\Psi_N)_{a_1,\dots,a_n}(z_1,\dots,s^{-2}z_N^{-1}),&& \text{if} \quad a_n < N,\\
\intertext{and}
    \label{eqn:ReflectionRight2}
    \frac{[s^2\bar\beta z_N]}{[\bar \beta z_N^{-1}]}(\Psi_N)_{a_1,\dots,a_{n}}(z_1,\dots,z_N)
    				&= (\Psi_N)_{a_1,\dots,a_{n}}(z_1,\dots,s^{-2}z_N^{-1}),&&\text{if} \quad a_n = N.
\end{align}
These two equations are equivalent to the second reflection relation.
  \medskip
  
  \noindent \textit{Case 1:} $a_n<N$. Using \eqref{eqn:SpecialComponent} and \eqref{eqn:RelationSQ}, it is straightforward to show that
  \begin{equation}
    (\Psi_N)_{1,\dots,n}(z_1,\dots,z_N) = (\Psi_N)_{1,\dots,n}(z_1,\dots,s^{-2}z_N^{-1}).
  \end{equation}
  For $n=1$, there is nothing left to prove. For $n\geqslant 2$, we apply \cref{lem:ExchangeComponents2} to obtain \eqref{eqn:ReflectionRight1}.
  
  \medskip
  \noindent \textit{Case 2:} $a_N=N$. We consider the difference
  \begin{multline}
    \Delta(z_1,\dots,z_N) = [s^2\bar\beta z_N](\Psi_N)_{1,\dots,n-1,N}(z_1,\dots,z_N)\\
    - [\bar \beta z_N^{-1}](\Psi_N)_{1,\dots,n-1,N}(z_1,\dots,s^{-2}z_N^{-1}).
    \label{eqn:Difference}
  \end{multline}
  We compute it by means of the contour integral formula \eqref{eqn:DefPsiCI}. The integrations with respect to $w_1,\dots,w_{n-1}$ are straightforward. Using \eqref{eqn:RelationSQ}, we find
  \begin{equation}
    \Delta(z_1,\dots,z_N) =p(z_1,\dots,z_N)\oint_{C}\diff w_n\,f(w_n),
    \label{eqn:DiffSimple}
  \end{equation}
  where
   \begin{equation}
    p(z_1,\dots,z_N) = -[q][s^2 z_{N}^2]
     \prod_{1\leqslant i<j\leqslant n-1}[qz_j/z_i][qz_iz_j]\prod_{n\leqslant i<j\leqslant N}[qz_j/z_i][q^2z_iz_j]\prod_{i=1}^{n-1}[\beta z_i],
  \end{equation}
  and
  \begin{equation}
    f(w) = \frac{[\bar \beta q^3w][\beta w][q^2w^2]\prod_{i=1}^{n-1}[qw/z_i][q wz_i]}{\i \pi w \prod_{i=n}^{N+1}[z_i/w][q^2 w z_i]}.
  \end{equation}
 Here, we abbreviated $z_{N+1}=s^{-2}z_N^{-1}$. The integration contour $C$ is a collection of positively-oriented curves around the simple poles $z_n,\dots,z_{N+1}$, but not around any other pole of $f$.
  
  We now analyse the contour integral in \eqref{eqn:DiffSimple}. To this end, we make three simple observations. First, $f$ has a removable singularity at $w=0$. All other singularities of $f$ are simple poles, located at $w=z_n,\dots,z_{N+1}$ and $w=\varphi_i(z_n),\dots,\varphi_i(z_{N+1})$, $i=1,2,3$, where 
  \begin{equation}
     \varphi_1(w)=-w, \quad \varphi_2(w)=q^{-2}w^{-1}, \quad \varphi_3(w)=-q^{-2}w^{-1}.
   \end{equation} 
   Second, we note that $f$ obeys $\varphi_i'(w)f(\varphi_i(w)) = f(w)$. This is trivial for $i=1$. For $i=2,3$, it follows from \eqref{eqn:BetaBarBeta}. These properties of $f$ allow us to write
  \begin{equation}
    \label{eqn:ContourIntegralInvariance}
    \oint_{\varphi_i(C)} \diff w_n\, f(w_n) = \oint_{C} \diff w_n\, f(w_n) , \quad i=1,2,3.
  \end{equation}
Third, we note that $f$ tends to zero at infinity and that it has no residue at infinity. This allows us to push the integration contour $C$ in \eqref{eqn:DiffSimple} to infinity, which results in
\begin{equation}
  \oint_{C}\diff w_n\,f(w_n) = - \sum_{i=1}^{3}\oint_{\varphi_i(C)}\diff w_n\,f(w_n) = -3\oint_{C} \diff w_n\, f(w_n).
\end{equation}
Here, we used \eqref{eqn:ContourIntegralInvariance}.
We conclude from this equation that the contour integral vanishes. Hence, $\Delta(z_1,\dots,z_N)$ vanishes, and therefore
\begin{equation}
\frac{[s^2\bar\beta z_N]}{[\bar \beta z_N^{-1}]}(\Psi_N)_{1,\dots,n-1,N}(z_1,\dots,z_N)= (\Psi_N)_{1,\dots,n-1,N}(z_1,\dots,s^{-2}z_N^{-1}).
\end{equation}
For $n=1$, there is nothing left to prove. For $n\geqslant 2$, we use \cref{lem:ExchangeComponents2} to obtain \eqref{eqn:ReflectionRight2}. This ends the proof of \eqref{eqn:ReflectionRight}.

Finally, we comment on the proof of \eqref{eqn:ReflectionLeft}. It amounts to establishing the two relations
\begin{align}
  \label{eqn:ReflectionLeft1}
  \frac{[\beta/z_1]}{[\beta z_1]}(\Psi_N)_{a_1,\dots, a_n}(z_1,\dots,z_N) &= (\Psi_N)_{a_1,\dots,a_n}(z_1^{-1},\dots,z_N), \quad \text{if} \quad a_1 =1,\\
  \label{eqn:ReflectionLeft2}
  (\Psi_N)_{a_1,\dots,a_N}(z_1,\dots,z_N) &= (\Psi_N)_{a_1,\dots,a_{n}}(z_1^{-1},\dots,z_N), \quad \text{if} \quad a_1>1.
\end{align}
The first relation is easily proven for the special choice $a_i=i$, using the special component \eqref{eqn:SpecialComponent}. The general relation \eqref{eqn:ReflectionLeft1} then follows from \cref{lem:ExchangeComponents2}. The proof of the second relation is based on showing that the difference
\begin{equation}
  \bar \Delta(z_1,\dots,z_N) = (\Psi_N)_{2,\dots,n+1}(z_1,\dots,z_N)- (\Psi_N)_{2,\dots,n+1}(z_1^{-1},\dots,z_N)
\end{equation}
vanishes. \Cref{prop:Equality} allows us to rewrite this difference as
\begin{equation}
  \bar \Delta(z_1,\dots,z_N) = (\overline\Psi_N)_{1,n+2,\dots,N}(z_1,\dots,z_N)-  (\overline\Psi_N)_{1,n+2\dots,N}(z_1^{-1},\dots,z_N).
\end{equation}
Using the alternative integral formula \eqref{eqn:DefPsiBarCI}, one may write this difference in terms of a single contour integral similar to \eqref{eqn:DiffSimple}. Following the same lines as above, one shows that this contour integral vanishes. This proves \eqref{eqn:ReflectionLeft2} for the special choice $a_i=i+1$. The general relation \eqref{eqn:ReflectionLeft2} follows from \cref{lem:ExchangeComponents2}. 
\end{proof}

\subsubsection*{The boundary quantum Knizhnik-Zamolodchikov equations}

Let us introduce for each $i=1,\dots,N$ a so-called \textit{scattering operator}
\begin{multline} 
  \label{eqn:DefSi}
S^{(i)}(z_1,\dots,z_N)= \prod_{j=1,\dots,i-1}^\curvearrowleft \check R_{j,j+1}(s^2z_i/z_j)K_1(s^2z_i;\beta)\prod_{j=1,\dots,i-1}^\curvearrowright\check R_{j,j+1}(s^2z_iz_j)\\
  \times\prod_{j=i,\dots,N-1}^\curvearrowright\check R_{j,j+1}(s^2z_iz_{j+1}) K_N(sz_i;s\bar \beta)\prod_{j=i,\dots,N-1}^\curvearrowleft 
  \check R_{j,j+1}(z_i/z_{j+1}).
\end{multline}
\Cref{prop:Exchange,prop:Reflection} imply that the vector $|\Psi_N\rangle$ obeys the \textit{boundary quantum Knizhnik-Zamolodchikov equations}
\begin{equation}
     \label{eqn:bqKZ}
     S^{(i)}(z_1,\dots,z_N)|\Psi_N(\dots,z_i,\dots)\rangle =|\Psi_N(\dots,s^2 z_i,\dots)\rangle, \quad i=1,\dots,N.
\end{equation}
These equations appear in many integrable systems with boundaries \cite{cherednik:92,jimbo:95,stokman:15,reshetikhin:15,reshetikhin:18}. They are compatible thanks to the commutation relations 
\begin{equation}
  S^{(i)}(z_1,\dots,s^2 z_j, \dots,z_N) S^{(j)}(z_1, \dots,z_N)=S^{(j)}(z_1,\dots,s^2 z_i, \dots,z_N) S^{(i)}(z_1, \dots,z_N),
\end{equation}
for each  $1\leqslant i,j \leqslant N$. These commutation relations follow from the Yang-Baxter equation \eqref{eqn:YBE} and the boundary Yang-Baxter equation \eqref{eqn:bYBE}.

\subsection{Polynomiality}
\label{sec:Polynomiality}
In this section, we show that the components of $|\Psi_N\rangle$ are Laurent polynomials in the variables $z_1,\dots,z_N$, and determine their degrees. To this end, we examine the action of the divided difference operator on Laurent polynomials. We then apply the results of this investigation to the components.

\subsubsection*{The divided difference operator}

We consider the divided difference operator $\delta$ defined in \eqref{eqn:DefDividedDiff} acting on a Laurent polynomial $f$ in $z,w$. For a special class of Laurent polynomials, the action again yields a Laurent polynomial.
\begin{lemma}
  \label{lem:DeltaParity}
  Let $f$ be a Laurent polynomial with 
  \begin{alignat}{2}
     f(-z,w) &= \epsilon f(z,w),\qquad  f(z,- w) &&= - \epsilon f(z,w), \\
     \intertext{  where $\epsilon^2=1$, then $\delta f$ is a Laurent polynomial with the property}
     \delta f(-z,w) &= -\epsilon f(z,w), \quad \delta f(z,- w) &&=  \epsilon f(z,w).    \label{eqn:DeltaFParity} 
  \end{alignat}
\end{lemma}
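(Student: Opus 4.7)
My plan is to verify both assertions by a direct computation from the definition \eqref{eqn:DefDividedDiff}, with the only subtle point being divisibility rather than just vanishing.

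For the polynomiality, I would clear denominators by using $[z/w] = (z^2 - w^2)/(zw)$. This recasts \eqref{eqn:DefDividedDiff} as
\begin{equation*}
\delta f(z,w) = \frac{(qw^2 - q^{-1}z^2)\, f(w,z) - (q - q^{-1})\, zw\, f(z,w)}{z^2 - w^2}.
\end{equation*}
The numerator is visibly a Laurent polynomial in $z,w$, so it suffices to show it is divisible by $z^2 - w^2 = (z-w)(z+w)$. Divisibility by $z - w$ is automatic: specialising $z = w$, both terms collapse to $(q-q^{-1})\, w^2 f(w,w)$ and cancel. Divisibility by $z + w$ uses both hypotheses on $f$: specialising $z = -w$ gives $(q-q^{-1}) w^2 f(w,-w) + (q-q^{-1}) w^2 f(-w,w)$, and the assumptions $f(w,-w) = -\epsilon f(w,w)$ and $f(-w,w) = \epsilon f(w,w)$ make this vanish. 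Because $z-w$ and $z+w$ are coprime in $\mathbb{C}[z^{\pm1},w^{\pm1}]$, vanishing at each hyperplane upgrades to algebraic divisibility by the product.

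For the parity properties, I would substitute $z \mapsto -z$ directly into \eqref{eqn:DefDividedDiff}. Two sign changes appear: $[q w/(-z)] = -[qw/z]$ and $[-z/w] = -[z/w]$, while the hypothesis gives $f(-z,w) = \epsilon f(z,w)$ and (by renaming) $f(w,-z) = -\epsilon f(w,z)$. These combine to
\begin{equation*}
\delta f(-z,w) = \frac{-[qw/z] (-\epsilon f(w,z)) - [q]\, \epsilon f(z,w)}{-[z/w]} = -\epsilon\, \delta f(z,w).
\end{equation*}
A completely analogous manipulation, using $f(z,-w) = -\epsilon f(z,w)$ and $f(-w,z) = \epsilon f(w,z)$, yields $\delta f(z,-w) = \epsilon\, \delta f(z,w)$.

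There is no real obstacle here: the argument is essentially a token-by-token parity check together with one algebraic divisibility. The only point worth stating carefully is that vanishing on the two lines $z = \pm w$ implies divisibility by $z^2 - w^2$ in the Laurent polynomial ring, which is valid because the two factors are distinct irreducibles coprime to the $zw$ factor we cleared.
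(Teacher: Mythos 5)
Your proof is correct and takes essentially the same route as the paper's: both establish the parity relations by direct substitution of $z\mapsto -z$ and $w\mapsto -w$, and both reduce polynomiality to showing that the apparent poles at $z=\pm w$ are removable, which hinges on the same two cancellations (the $z=w$ one automatic, the $z=-w$ one using the parity hypotheses). The only cosmetic difference is that you phrase removability as algebraic divisibility of the cleared numerator by $z^2-w^2$, whereas the paper computes the limits $z\to w$ and $z\to -w$ explicitly and observes that they are finite.
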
 
\begin{proof}
  The definition of the divided difference operator implies that $\delta f$ is a rational function that obeys \eqref{eqn:DeltaFParity}. We compute the limits
  \begin{equation}
    \lim_{z\to w} \delta f(z,w)= \frac{w}{2} \left(- \left(\frac{q^2+1}{qw}\right) f(w,w) + [q]\left(\frac{\partial f(w,w)}{\partial w}- \frac{\partial f(w,w)}{\partial z}\right)\right), 
  \end{equation}
  and, using \eqref{eqn:DeltaFParity},
  \begin{equation}
    \lim_{z\to -w} \delta f(z,w) = 
    - \epsilon  \lim_{z\to w} \delta f(z,w).
  \end{equation}
  They imply that $\delta f(z,w)$ has no poles at $z=\pm w$. Hence, it is a Laurent polynomial with respect to $z$, and similarly with respect to $w$.
\end{proof}

For each Laurent polynomial $f$ in $z,w$, there are integers $d^\pm$ and $\bar d^\pm$, with $d^-\leqslant d^+$ and $\bar d^- \leqslant \bar d^+$, such that
\begin{equation}
  f(z,w) = \sum_{k=d^-}^{d^+} c_k(w)z^k=\sum_{k=\bar d^{-}}^{\bar d^+} \bar c_k(z)w^k,
\end{equation}
with non-zero $c_{d^\pm}(w),\,\bar c_{\bar d^\pm}(z)$. We refer to $d^-$ as the lower degree and $d^+$ as the upper degree of $f$ with respect to $z$, and use the same terminology for $\bar d^-$ and $\bar d^+$ as degrees with respect to $w$. We also use the following notation:
\begin{equation}
  \deg_z^\pm f = d^\pm, \quad \deg_w^\pm f = \bar d^\pm.
\end{equation}

The degrees are not preserved by the action of the divided difference operator. They can, however, not change arbitrarily as show the next two lemmas.
\begin{lemma}
  \label{lem:Degree1}
  Let $f$ be as in \cref{lem:DeltaParity}.
  \begin{enumerate}
    \item[(i)]
    Let $m= \deg_z^+ f-1$ and suppose that $\deg_w^+ f\leqslant m$, then
  \begin{equation}
    \deg_z^+ \delta f \leqslant m \quad\text{and}\quad \deg^+_w \delta f = m+1.
  \end{equation}
  \item[(ii)] 
  Let $m= \deg_w^+ f-1$ and suppose that $\deg_z^+ f \leqslant m$, then
  \begin{equation}
    \deg^+_z \delta f = m+1 \quad\text{and}\quad \deg_w^+ \delta f \leqslant m .
  \end{equation}
 \end{enumerate}
\end{lemma}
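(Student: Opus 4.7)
The plan is to work with an explicit rewriting of the divided difference. Multiplying the numerator and denominator in \eqref{eqn:DefDividedDiff} by $qzw$, one obtains
\begin{equation*}
  \delta f(z,w) = \frac{(q^2 w^2 - z^2)\,f(w,z) - (q^2-1)\,zw\,f(z,w)}{q(z^2 - w^2)}.
\end{equation*}
By \cref{lem:DeltaParity}, $\delta f$ is itself a Laurent polynomial in $z$ and $w$. Hence the denominator $q(z^2 - w^2)$ can be divided out cleanly, and upper degrees of $\delta f$ are obtained by subtracting the contribution of the denominator (namely $2$ in each of $z$ and $w$) from the upper degrees of the numerator.

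For part (i), write $f(u,v) = \sum_k \alpha_k(v)\,u^k$, so that $\alpha_{m+1}(v)$ is a nonzero Laurent polynomial by the assumption $\deg_z^+ f = m+1$. First I would bound the upper $z$-degree of the numerator: the term $(q^2 w^2 - z^2)f(w,z)$ has $z$-degree at most $2 + \deg_z^+ f(w,z) = 2 + \deg_w^+ f \leqslant m+2$, while $(q^2-1)zw\,f(z,w)$ has $z$-degree at most $1 + (m+1) = m+2$. Subtracting $2$ for the denominator yields $\deg_z^+ \delta f \leqslant m$. For the $w$-degree, I would isolate the leading behaviour: the only contribution with $w$-degree strictly exceeding $m+1$ is $q^2 w^2 f(w,z)$, whose $w^{m+3}$ coefficient equals $q^2\,\alpha_{m+1}(z)$; the remaining numerator terms have $w$-degree at most $m+1$. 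Since the denominator's leading $w^2$ coefficient is $-q$, one obtains $\delta f \sim -q\,\alpha_{m+1}(z)\,w^{m+1}$ as $w \to \infty$, and the nonvanishing of $\alpha_{m+1}$ gives the equality $\deg_w^+ \delta f = m+1$.

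Part (ii) follows from the same argument with the roles of $z$ and $w$ interchanged: writing $f(u,v) = \sum_k \phi_k(u)\,v^k$ with $\phi_{m+1}(u) \neq 0$ by hypothesis, the dominant $z^{m+3}$-contribution to the numerator is $\phi_{m+1}(w)\,z^{m+3}$, coming from $-z^2 f(w,z)$, while all remaining terms have $z$-degree at most $m+1$; division by the denominator yields the equality $\deg_z^+ \delta f = m+1$, and a routine bound gives $\deg_w^+ \delta f \leqslant m$. I expect no genuine obstacle beyond careful bookkeeping; the only subtle point is that the \emph{equality} statements require identifying the single leading coefficient explicitly and observing that the asymmetric hypotheses on $\deg_z^+ f$ and $\deg_w^+ f$ prevent any competing cancellation with the dominant term.
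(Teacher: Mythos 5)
Your proof is correct and follows essentially the same route as the paper's: both arguments extract the leading coefficient of $\delta f$ as $z\to\infty$ and as $w\to\infty$ and use the hypothesis on the other variable's upper degree to rule out a competing term, the only difference being that you clear denominators first and use exact degree additivity against the factor $q(z^2-w^2)$, whereas the paper expands the rational expression directly. (Minor slip: in part (ii) the dominant numerator contribution coming from $-z^2 f(w,z)$ is $-\phi_{m+1}(w)\,z^{m+3}$ rather than $+\phi_{m+1}(w)\,z^{m+3}$, which does not affect the conclusion.)
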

\begin{proof}
  The proofs of \textit{(i)} and \textit{(ii)} are very similar. We only present the proof of \textit{(i)}.
  
  Let $m' = \deg_w^+f$. First, we analyse $\delta f(z,w)$ for $z\to \infty$. We find
  \begin{align}
     \delta f(z,w) &= \left(-q^{-1} \bar c_{m'}(w) z^{m'}+
     O(z^{m'-1})     \right)+\left([q] w c_{m+1}(w)z^{m} +
     O(z^{m-1}) 
     \right).     
  \end{align}
  This expression allows us to conclude that $\deg_z^+ \delta f\leqslant \max(m,m') \leqslant m$.
  Second, we consider $w\to \infty$ and obtain
  \begin{align}  
   \delta f(z,w) &= \left(-q c_{m+1}(z) w^{m+1}+O(w^m)\right)+\left(-[q] z\bar c_{m'}(z)w^{m'-1} +O(w^{m'-2})\right).
  \end{align}
  This expression implies that $\deg_w^+ \delta f = m+1$, since $q c_{m+1}(z)$ does not vanish identically and $m+1> m'-1$.
\end{proof}

\begin{lemma}
  \label{lem:Degree2}
  Let $f$ be as in \cref{lem:DeltaParity}.
 \begin{enumerate}
   \item[(i)]  Let $m=\deg_z^- f$ and suppose that $\deg_w^- f \geqslant m+1$, then
  \begin{equation}
    \deg_z^- \delta f \geqslant m+1 \quad \text{and} \quad \deg^-_w \delta f = m.
  \end{equation}
\item[(ii)] 
 Let $m=\deg_w^-f$ and suppose that $\deg_z^- f \geqslant m+1$, then
  \begin{equation}
    \deg_z^- \delta f = m \quad \text{and} \quad \deg^-_w \delta f  \geqslant m+1.
  \end{equation}
  \end{enumerate}
\end{lemma}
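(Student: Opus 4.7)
The proof mirrors that of \cref{lem:Degree1}, with the behavior at infinity replaced by behavior at the origin. By \cref{lem:DeltaParity}, $\delta f$ is already a Laurent polynomial, so its lower degrees $\deg_z^-\delta f$ and $\deg_w^-\delta f$ are well defined, and the task reduces to identifying the lowest-order terms of the two variables in $\delta f(z,w)=([qw/z]f(w,z)-[q]f(z,w))/[z/w]$.

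For part~\textit{(i)}, I would run two complementary expansions. As $z\to 0$, the behavior $[qw/z]\sim qw/z$ and $[z/w]\sim -w/z$, combined with $f(z,w)=c_m(w)z^m+O(z^{m+1})$ and $f(w,z)$ starting at $z^{\bar d^-}$ with $\bar d^-\geq m+1$, shows that the numerator has lowest $z$-power at least $m$ while the denominator contributes $z^{-1}$, giving $\deg_z^-\delta f\geq m+1$. As $w\to 0$, the assumption $\bar d^-\geq m+1$ forces $[q]f(z,w)=O(w^{m+1})$, while $f(w,z)=c_m(z)w^m+O(w^{m+1})$; the dominant contribution to the numerator is then $-(z/(qw))\cdot c_m(z)w^m = -(z/q)c_m(z)w^{m-1}$, and after division by $[z/w]\sim z/w$ this yields a nonvanishing coefficient $-c_m(z)/q$ at order $w^m$ in $\delta f$, giving the equality $\deg_w^-\delta f=m$.

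Part~\textit{(ii)} is the mirror image of part~\textit{(i)}, with the roles of the two variables exchanged in the hypothesis. Expansion around $w=0$ yields the lower bound $\deg_w^-\delta f\geq m+1$, while expansion around $z=0$ identifies $qw\bar c_m(w)z^{m-1}$ as the unique leading contribution to the numerator, coming entirely from $[qw/z]f(w,z)$ because $\deg_z^- f\geq m+1$ forces $[q]f(z,w)=O(z^{m+1})$; this survives division by $[z/w]\sim -w/z$ as a nonzero coefficient $-q\bar c_m(w)$ at $z^m$ in $\delta f$.

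I do not anticipate a serious obstacle. The main point to keep straight is that in each of the four degree statements, the strict inequality in the hypothesis is precisely what forces exactly one of the two terms of the numerator $[qw/z]f(w,z)-[q]f(z,w)$ to supply the leading Laurent coefficient at the critical order. This rules out accidental cancellation and makes the claimed bounds and equalities follow directly from a bookkeeping check rather than from any delicate algebraic identity.
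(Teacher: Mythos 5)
Your proposal is correct and follows exactly the route the paper takes: the paper's proof of \cref{lem:Degree2} simply states that it follows from analysing $\delta f(z,w)$ as $z\to 0$ and $w\to 0$ in analogy with \cref{lem:Degree1}, and your expansions (including the identification of the non-vanishing leading coefficients $-c_m(z)/q$ and $-q\bar c_m(w)$, which rule out cancellation and yield the equalities) supply precisely that analysis.
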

\begin{proof}
The proof follows from the analysis of $\delta f(z,w)$ as $z\to 0$ and $w\to 0$. It is similar to the proof of \cref{lem:Degree1}.
\end{proof}

\subsubsection*{Polynomiality of the vector}

We now show that the components of the vector $|\Psi_N\rangle$ are Laurent polynomials in the variables $z_1,\dots,z_N$ and find (bounds for) their degrees with respect to each $z_i$. We treat the cases of even and odd $N$ separately in the two following propositions. 

\begin{proposition}
\label{prop:DegreePsiEven}
Let $n\geqslant 1$, $a_1,\dots,a_n$ and $i$ be integers with $1\leqslant a_1< \dots < a_n\leqslant 2n$ and $1\leqslant i\leqslant 2n$, then we have:
\begin{enumerate}
   \item[\textit{(i)}] The component $(\Psi_{2n})_{a_1,\dots,a_n}$ is a Laurent polynomial with respect to $z_i$.
   \item[\textit{(ii)}] If $i\in \{a_1,\dots,a_n\}$ then $(\Psi_{2n})_{a_1,\dots,a_n}$ is an odd function of $z_i$ with
  \begin{equation}
    \deg_{z_i}^\pm (\Psi_{2n})_{a_1,\dots,a_n}=\pm (2n-1).
  \end{equation}
  \item[\textit{(iii)}] If $i\notin \{a_1,\dots,a_n\}$ then $(\Psi_{2n})_{a_1,\dots,a_n}$ is an even function of $z_i$ with
  \begin{equation}
    \deg_{z_i}^- (\Psi_{2n})_{a_1,\dots,a_n}\geqslant -2(n-1),\quad \deg_{z_i}^+ (\Psi_{2n})_{a_1,\dots,a_n}\leqslant 2(n-1).
    \label{eqn:DegInEq1}
  \end{equation}
\end{enumerate}
\end{proposition}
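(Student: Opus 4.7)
The plan is to reduce the general case to the special component $(\Psi_{2n})_{1,\dots,n}$, for which \cref{prop:SpecialComponent} provides an explicit product formula, and then to track parity and degree properties under iterated divided differences supplied by \cref{lem:ExchangeComponents2}. For the base case, the factorisation
\begin{equation*}
(\Psi_{2n})_{1,\dots,n}=\prod_{k=1}^{n}[\beta z_k]\prod_{1\leqslant k<\ell\leqslant n}[qz_kz_\ell][qz_\ell/z_k]\prod_{n+1\leqslant k<\ell\leqslant 2n}[qz_\ell/z_k][q^2z_kz_\ell]
\end{equation*}
is manifestly a Laurent polynomial. For $i\leqslant n$, the variable $z_i$ appears once in $[\beta z_i]$ (odd in $z_i$, degrees $\pm 1$) and in $n-1$ paired factors $[qz_iz_k][qz_k/z_i]$ (each pair even in $z_i$, degrees $\pm 2$), producing an odd Laurent polynomial with $\deg_{z_i}^\pm=\pm(2n-1)$. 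For $i>n$, only the third product involves $z_i$, contributing $n-1$ paired factors, so the component is even in $z_i$ with $\deg_{z_i}^-\geqslant -2(n-1)$ and $\deg_{z_i}^+\leqslant 2(n-1)$. This settles the proposition for $(a_1,\dots,a_n)=(1,\dots,n)$.

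Next, I would express the component with a general occupancy set $\{a_1,\dots,a_n\}$ as an iterated divided difference of $(\Psi_{2n})_{1,\dots,n}$, sliding the down-spins rightward one position at a time. A convenient scheme applies $\delta_{a_n-1}\cdots\delta_n$ first (moving the last down-spin from $n$ to $a_n$), then $\delta_{a_{n-1}-1}\cdots\delta_{n-1}$, and so on down to $\delta_{a_1-1}\cdots\delta_1$. The strict ordering $a_1<\cdots<a_n$ ensures that at each step the target position $j+1$ is currently unoccupied, so \cref{lem:ExchangeComponents2} applies and realises the move. I would then proceed by induction on the number of applied $\delta$-operators, with invariant: for the current occupancy set $\{c_1,\dots,c_n\}$ the component is a Laurent polynomial which is odd in $z_j$ with $\deg_{z_j}^\pm=\pm(2n-1)$ whenever $j\in\{c_1,\dots,c_n\}$, and even in $z_j$ with $\deg_{z_j}^-\geqslant -2(n-1)$ and $\deg_{z_j}^+\leqslant 2(n-1)$ otherwise.

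The inductive step is handled by applying \cref{lem:DeltaParity}, \cref{lem:Degree1} and \cref{lem:Degree2} to the pair $(z_j,z_{j+1})$. Since the incoming component is odd in $z_j$ and even in $z_{j+1}$, \cref{lem:DeltaParity} with $\epsilon=-1$ ensures that $\delta_j$ produces a Laurent polynomial that is even in $z_j$ and odd in $z_{j+1}$. Applying \cref{lem:Degree1}(i) with $m=2n-2$ gives $\deg_{z_j}^+\delta_j f\leqslant 2n-2$ and $\deg_{z_{j+1}}^+\delta_j f=2n-1$, while \cref{lem:Degree2}(i) with $m=-(2n-1)$ gives the dual lower-degree statements. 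Because $\delta_j$ does not involve any $z_i$ with $i\neq j,j+1$, the invariant for all other variables is preserved automatically. This maintains the invariant and completes the induction.

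The main obstacle is to verify, at each application of $\delta_j$ along the chosen path, that the hypotheses of \cref{lem:Degree1} and \cref{lem:Degree2} are actually satisfied. These lemmas demand $\deg_{z_{j+1}}^+f\leqslant\deg_{z_j}^+f-1$ and dually $\deg_{z_{j+1}}^-f\geqslant\deg_{z_j}^-f+1$. The induction hypothesis supplies precisely these inequalities, since before the step position $j$ is occupied (odd in $z_j$, degree $\pm(2n-1)$) while position $j+1$ is unoccupied (even in $z_{j+1}$, degrees bounded by $\pm 2(n-1)$), so $\deg_{z_{j+1}}^+f\leqslant 2(n-1)=\deg_{z_j}^+f-1$ and likewise for the lower degree. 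Once this bookkeeping is in place, the proof is essentially mechanical; the subtle point is that the bounds in case (iii) are genuine inequalities, which is precisely what \cref{lem:Degree1} and \cref{lem:Degree2} deliver.
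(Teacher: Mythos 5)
Your proof is correct and follows essentially the same route as the paper's: the base case is read off from the explicit product formula for the special component in \cref{prop:SpecialComponent}, and the general component is reached by sliding the down-spins rightward with iterated divided differences, propagating the parity and degree properties via \cref{lem:ExchangeComponents2}, \cref{lem:DeltaParity}, \cref{lem:Degree1} and \cref{lem:Degree2}. Your bookkeeping is, if anything, more careful than the paper's — in particular you correctly invoke \cref{lem:Degree2}\textit{(i)} for the lower-degree statement (whose hypothesis $\deg_{z_{j+1}}^- f\geqslant \deg_{z_j}^- f+1$ is exactly what the induction supplies), whereas the paper's text cites part \textit{(ii)} there.
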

\begin{proof}
  The proof is based on recurrence. First, we note that \textit{(i)}, \textit{(ii)} and \textit{(iii)} hold for the special component $(\Psi_{2n})_{1,\dots,n}$, as readily follows from \eqref{eqn:SpecialComponent}.

 Second, we show that \textit{(i)}, \textit{(ii)} and \textit{(iii)} are preserved under the action of a divided difference operator on any component. To this end, let us consider integers $\bar a_1,\dots,\bar a_n$ with $1 \leqslant \bar a_1 < \cdots < \bar a_n \leqslant N$ such that there is $j=1,\dots,2n-1$ with $j\in \{\bar a_1,\dots, \bar a_n\}$ but $j+1\notin \{\bar a_1,\dots,\bar a_n\}$. According to \cref{lem:ExchangeComponents2}, we have
  \begin{equation}
  \label{eqn:ExchangeComponentsPsi}
  (\Psi_{2n})_{\bar a_1,\dots,j+1,\dots,\bar a_n}(\dots,z_j,z_{j+1},\dots)=\delta_j (\Psi_{2n})_{\bar a_1,\dots,j,\dots,\bar a_n}(\dots,z_j,z_{j+1},\dots).
\end{equation}
Let us now suppose that \textit{(i)}, \textit{(ii)} and \textit{(iii)} hold for the component $\Psi_{\bar a_1,\dots,j,\dots,\bar a_n}(\dots,z_j,z_{j+1},\dots)$. We apply \cref{lem:DeltaParity} to \eqref{eqn:ExchangeComponentsPsi}. It implies that $\Psi_{\bar a_1,\dots,j+1,\dots,\bar a_n}(\dots,z_j,z_{j+1},\dots)$ is an even Laurent polynomial in $z_j$, and an odd Laurent polynomial in $z_{j+1}$. Moreover, it follows from \cref{lem:Degree1}\textit{(i)} that
\begin{equation}
  \deg_{z_j}^+(\Psi_{2n})_{\bar a_1,\dots,j+1,\dots,\bar a_n}\leqslant 2(n-1), \quad \deg_{z_{j+1}}^+(\Psi_{2n})_{\bar a_1,\dots,j+1,\dots,\bar a_n}= 2n-1.
\end{equation}
Likewise, we may apply \cref{lem:Degree2}\textit{(ii)} to conclude that
\begin{equation}
  \deg_{z_j}^-(\Psi_{2n})_{\bar a_1,\dots,j+1,\dots,\bar a_n}\geqslant -2(n-1), \quad \deg_{z_{j+1}}^-(\Psi_{2n})_{\bar a_1,\dots,j+1,\dots,\bar a_n}= -(2n-1).
\end{equation}
Since all other variables remain unaffected, we conclude that \textit{(i)}, \textit{(ii)} and \textit{(iii)} hold for the component $(\Psi_{2n})_{\bar a_1,\dots,j+1,\dots,\bar a_n}(\dots,z_j,z_{j+1},\dots)$. 

Third, the statements \textit{(i)}, \textit{(ii)} and \textit{(iii)} follow for each component $(\Psi_{2n})_{a_1,\dots,a_n}$ as one can obtain it through the action of a (finite) product of divided difference operators on the component $(\Psi_{2n})_{1,\dots,n}$.
\end{proof}
\begin{proposition}
\label{prop:DegreePsiOdd} Let $n\geqslant 1$, $a_1,\dots,a_n$ and $i$ be integers with $1\leqslant a_1< \dots < a_n\leqslant 2n+1$ and $1\leqslant i\leqslant 2n+1$, then we have:
\begin{enumerate}
  \item[\textit{(i)}] The component $(\Psi_{2n+1})_{a_1,\dots,a_n}$ is a Laurent polynomial with respect to $z_i$.
  \item[\textit{(ii)}] If $i\in \{a_1,\dots,a_n\}$ then $(\Psi_{2n+1})_{a_1,\dots,a_n}$ is an odd function of $z_i$ with
  \begin{equation}
    \deg_{z_i}^- (\Psi_{2n+1})_{a_1,\dots,a_n}\geqslant -(2n-1), \quad \deg_{z_i}^+ (\Psi_{2n+1})_{a_1,\dots,a_n}\leqslant 2n-1.
        \label{eqn:DegInEq2}
     \end{equation}
  \item[\textit{(iii)}] If $i\notin \{a_1,\dots,a_n\}$ then $(\Psi_{2n+1})_{a_1,\dots,a_n}$ is an even function of $z_i$ with
  \begin{equation}
    \deg_{z_i}^\pm (\Psi_{2n+1})_{a_1,\dots,a_n}=\pm 2n.
  \end{equation}
  \end{enumerate}
\end{proposition}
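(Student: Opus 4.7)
The plan is to mirror the three-step proof of \cref{prop:DegreePsiEven}: first verify (i)--(iii) directly for the base component $(\Psi_{2n+1})_{1,\dots,n}$; next, show that applying a divided difference operator $\delta_j$ in the sense of \cref{lem:ExchangeComponents2} to any component satisfying (i)--(iii) produces a component that again satisfies (i)--(iii); and finally, invoke \eqref{eqn:PhiComponentFromSpecialComponent} to propagate these properties to every admissible index tuple $(a_1,\dots,a_n)$.

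For the base step, I evaluate \eqref{eqn:SpecialComponent} at $N=2n+1$. For $i\in\{1,\dots,n\}$, the factors depending on $z_i$ are $[\beta z_i]$ and, for each $j\in\{1,\dots,n\}\setminus\{i\}$, the pair $[qz_iz_j][qz_j/z_i]$ (with the indices ordered appropriately). Each such pair is even in $z_i$ with extremal degrees $\pm 2$, while $[\beta z_i]$ is odd with extremal degrees $\pm 1$, so the component is odd in $z_i$ with extremal degrees $\pm(2(n-1)+1)=\pm(2n-1)$, giving (ii). For $i\in\{n+1,\dots,2n+1\}$, the only factors involving $z_i$ come from $\prod_{n+1\leqslant i'<j'\leqslant 2n+1}[qz_{j'}/z_{i'}][q^2z_{i'}z_{j'}]$, which contributes $n$ pairs, each even in $z_i$ with extremal degrees $\pm 2$, so the component is even with exact degree $\pm 2n$, giving (iii).

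For the recursion step, fix $\bar a_1<\cdots<\bar a_n$ and $j$ with $j\in\{\bar a_1,\dots,\bar a_n\}$, $j+1\notin\{\bar a_1,\dots,\bar a_n\}$, and assume (i)--(iii) for $(\Psi_{2n+1})_{\bar a_1,\dots,j,\dots,\bar a_n}$. \Cref{lem:ExchangeComponents2} expresses $(\Psi_{2n+1})_{\bar a_1,\dots,j+1,\dots,\bar a_n}$ as $\delta_j$ applied to this component. \Cref{lem:DeltaParity} delivers the required parity swap between $z_j$ and $z_{j+1}$. To handle the degrees I apply \cref{lem:Degree1}\textit{(ii)} with $z=z_j$, $w=z_{j+1}$, $m=\deg_{z_{j+1}}^+ f-1=2n-1$ and the hypothesis $\deg_{z_j}^+ f\leqslant 2n-1$ supplied by (ii); the conclusion $\deg_{z_j}^+\delta_j f=2n$ and $\deg_{z_{j+1}}^+\delta_j f\leqslant 2n-1$ is precisely the combination of (iii) at position $j$ and (ii) at position $j+1$ for the new component. \Cref{lem:Degree2}\textit{(ii)} provides the matching lower-degree bounds $\deg_{z_j}^-\delta_j f=-2n$ and $\deg_{z_{j+1}}^-\delta_j f\geqslant -(2n-1)$. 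All variables other than $z_j,z_{j+1}$ are unaffected by $\delta_j$, so the hypotheses at those sites transfer verbatim.

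The third step is immediate from \eqref{eqn:PhiComponentFromSpecialComponent}, since every admissible component is produced from $(\Psi_{2n+1})_{1,\dots,n}$ by a finite composition of divided difference operators of the type treated above. The main subtlety compared with \cref{prop:DegreePsiEven} is that the roles of the exact and the inequality degree bounds are interchanged: here the exact bounds $\pm 2n$ live on the positions \emph{outside} the index set, while the in-index positions only carry the inequalities $|\deg|\leqslant 2n-1$. It is precisely this inversion that arranges for parts \textit{(ii)} of \cref{lem:Degree1,lem:Degree2}, rather than their parts \textit{(i)}, to supply the required hypotheses at each step of the recursion; keeping this asymmetry straight is the only real bookkeeping task in the proof.
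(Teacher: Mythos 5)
Your proof is correct and follows essentially the same route as the paper: verify the base component directly from \eqref{eqn:SpecialComponent}, propagate parity and degree bounds through the divided difference operators via \cref{lem:DeltaParity,lem:Degree1,lem:Degree2}, and reach all components through \eqref{eqn:PhiComponentFromSpecialComponent}. One remark in your favour: your invocation of \cref{lem:Degree2}\textit{(ii)} (with $m=\deg_{z_{j+1}}^- f=-2n$ and the hypothesis $\deg_{z_j}^- f\geqslant -(2n-1)$) is the part of that lemma whose hypotheses actually hold in this configuration, whereas the paper's sketch cites part \textit{(i)}, which appears to be a label slip, so your bookkeeping of the even/odd asymmetry is the accurate one.
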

\begin{proof}
  The proof is very similar to the proof of \cref{prop:DegreePsiEven}. We only mention a few minor differences. First, we note that \textit{(i)}, \textit{(ii)} and \textit{(iii)} hold for the component $(\Psi_{2n+1})_{1,\dots,n}$. Second, the argument that the action of the divided difference operators preserves \textit{(i)}, \textit{(ii)} and \textit{(iii)} follows through but uses \cref{lem:Degree1}\textit{(ii)} and \cref{lem:Degree2}\textit{(i)}. Third, the statements hold for each component $(\Psi_{2n+1})_{a_1,\dots,a_n}$ as we can obtain it    through the action of a (finite) product of divided difference operators on $(\Psi_{2n+1})_{1,\dots,n}$.
\end{proof}

\subsubsection*{Relations between even and odd size}

For each $s\in \{\uparrow,\downarrow\}$ and $i=1,\dots,N+1$, let $\Theta_i^s:V^N\to V^{N+1}$ be the linear operator whose action on the canonical basis vectors of $V^N$ is given by
\begin{equation}
  \Theta_i^s|s_1\cdots s_{i-1}s_i \cdots s_{N}\rangle=|s_1\cdots s_{i-1}\rangle\otimes |s\rangle\otimes |s_{i}\cdots s_{N}\rangle.
\end{equation}
In the next proposition, we use this operator to establish a relation between the vectors $|\Psi_N\rangle$ and $|\Psi_{N-1}\rangle$.

\begin{proposition}
\label{prop:BraidLimits}
Let $n\geqslant 1$. For each $i=1,\dots,2n$, we have
\begin{align}
\label{eqn:LimitZTo0}
&\smashoperator[l]{\lim_{z_i\to 0}} z_i^{2n-1}|\Psi_{2n}\rangle 
= (-1)^{n+i+1}\beta^{-1} q^{-\frac{3(i-1)}{2}+\frac{1}{2}{\scriptstyle \sum}_{j=1}^{i-1}\sigma_{j}^z} \Theta^\downarrow_i|\Psi_{2n-1}(z_1,\dots,z_{i-1},z_{i+1},\dots,z_{2n})\rangle,\\
  \label{eqn:LimitZTo02}
&\smashoperator[l]{\lim_{z_i\to \infty}} z_i^{-(2n-1)}|\Psi_{2n}\rangle 
= (-1)^{n+i}\beta q^{\frac{3(i-1)}{2}-\frac{1}{2} {\scriptstyle \sum}_{j=1}^{i-1}\sigma_j^z}
	\Theta^\downarrow_i|\Psi_{2n-1}(z_1,\dots,z_{i-1},z_{i+1},\dots,z_{2n})\rangle.
\end{align}
Likewise, for each $i=1,\dots,2n+1$, we have
\begin{align}  \label{eqn:LimitZTo03}
&\smashoperator[l]{\lim_{z_i\to 0}} z_i^{2n}|\Psi_{2n+1}\rangle 
= (-1)^{i-1} q^{-\frac{3(i-1)}{2}-\frac{1}{2}\sum_{j=1}^{i-1}\sigma_{j}^z}\Theta^\uparrow_i|\Psi_{2n}(z_1,\dots,z_{i-1},z_{i+1},\dots,z_{2n+1})\rangle,\\
     \label{eqn:LimitZTo04}
&\smashoperator[l]{\lim_{z_i\to \infty}} z_i^{-2n}|\Psi_{2n+1}\rangle 
= (-1)^{i-1} q^{\frac{3(i-1)}{2}+\frac{1}{2}\sum_{j=1}^{i-1} \sigma_j^z}\Theta^\uparrow_i|\Psi_{2n}(z_1,\dots,z_{i-1},z_{i+1},\dots,z_{2n+1})\rangle.
\end{align}
\end{proposition}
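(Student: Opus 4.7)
The plan is to prove the four limits by a uniform three-step strategy, which I describe in detail for \eqref{eqn:LimitZTo0}. First, \cref{prop:DegreePsiEven} gives $\deg^-_{z_i}(\Psi_{2n})_{a_1,\dots,a_n} = -(2n-1)$ when $i\in\{a_1,\dots,a_n\}$ and $\deg^-_{z_i}(\Psi_{2n})_{a_1,\dots,a_n}\geqslant -2(n-1)$ otherwise, so that $\lim_{z_i\to 0} z_i^{2n-1}|\Psi_{2n}\rangle$ is finite and supported on configurations with a down spin at position $i$, placing it in the image of $\Theta_i^\downarrow$.

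Second, for the base case $i=1$, evaluating \eqref{eqn:SpecialComponent} as $z_1\to 0$ shows that the factors $[\beta z_1]$, $[qz_1 z_j]$ and $[qz_j/z_1]$ for $j=2,\dots,n$ combine to $(-1)^n\beta^{-1} z_1^{-(2n-1)}$ times the special-component formula $(\Psi_{2n-1})_{1,\dots,n-1}(z_2,\dots,z_{2n})$, which matches the right-hand side of \eqref{eqn:LimitZTo0} at $i=1$ on the component $(1,\dots,n)$. To extend the identity to all other components at $i=1$, let $|\Phi\rangle\in V^{2n-1}$ be defined by $\Theta_1^\downarrow|\Phi\rangle = \lim_{z_1\to 0}z_1^{2n-1}|\Psi_{2n}\rangle$; the exchange relations of $|\Psi_{2n}\rangle$ on pairs $(z_j,z_{j+1})$ with $j\geqslant 2$ commute with the limit and descend to exchange relations for $|\Phi\rangle$, and \cref{prop:ExchangeProperty} applied to the difference $|\Phi\rangle-(-1)^n\beta^{-1}|\Psi_{2n-1}(z_2,\dots,z_{2n})\rangle$ forces it to vanish identically.

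Third, to reach general $i$, iterate the exchange relations (using $\check R(w)^{-1}=\check R(w^{-1})$) to obtain
\begin{equation*}
  |\Psi_{2n}(z_1,\dots,z_{2n})\rangle = \check R_{i-1,i}(z_i/z_{i-1})\cdots \check R_{1,2}(z_i/z_1)\,|\Psi_{2n}(z_i,z_1,\dots,z_{i-1},z_{i+1},\dots,z_{2n})\rangle.
\end{equation*}
As $z_i\to 0$, each factor tends to $\check R(0) = -q^{-1}q^{-(1+\sigma^z\sigma^z)/2}P$, while the right-hand side, multiplied by $z_i^{2n-1}$, has the limit produced by the previous step. The cascading permutations transport the down spin from position $1$ to position $i$, yielding $\Theta_i^\downarrow$; tracking the accumulated diagonal factors against this travelling down spin gives $(-1)^{i-1}q^{-3(i-1)/2+\frac12\sum_{j=1}^{i-1}\sigma_j^z}$, which combined with the base-case prefactor $(-1)^n\beta^{-1}$ is exactly the right-hand side of \eqref{eqn:LimitZTo0}.

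The remaining three limits follow by parallel arguments: for \eqref{eqn:LimitZTo02} the dual asymptotic $\check R(\infty)=-q\cdot q^{(1+\sigma^z\sigma^z)/2}P$ inverts the sign of the $q$-exponent, and the subleading term of $[\beta z_1]$ at infinity produces $\beta^{+1}$; for the odd-size cases \eqref{eqn:LimitZTo03}--\eqref{eqn:LimitZTo04}, the base-case component $(\Psi_{2n+1})_{1,\dots,n}$ carries an up spin at position $1$, no $[\beta z_1]$ factor becomes singular in the reduction, and the $\beta$-prefactor is therefore absent. The principal obstacle will be the step-3 bookkeeping: one must verify carefully that the product of $(i-1)$ limiting $\check R$ matrices acting on $\Theta_1^\downarrow|\Psi_{N-1}\rangle$ contracts to precisely the announced diagonal operator, sign, and $q$-power on every configuration in the image of $\Theta_i^\downarrow$ (respectively $\Theta_i^\uparrow$).
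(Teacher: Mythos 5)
Your three-step strategy for the even-size limits \eqref{eqn:LimitZTo0}--\eqref{eqn:LimitZTo02} is essentially the paper's own proof: anchor at $i=1$ via the special component \eqref{eqn:SpecialComponent}, identify the limit vector through the exchange relations (the paper invokes \cref{lem:ExchangeComponents2} directly where you pass through \cref{prop:ExchangeProperty}, which is the same mechanism), and propagate to general $i$ with $\lim_{z\to 0}\check R_{j,j+1}(z)=-q^{-\frac32-\frac12\sigma_j^z\sigma_{j+1}^z}P_{j,j+1}$. Your prefactor bookkeeping for that case checks out.

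There is, however, a genuine error in your treatment of the odd-size limits. You claim that ``the base-case component $(\Psi_{2n+1})_{1,\dots,n}$ carries an up spin at position $1$.'' It does not: in the expansion \eqref{eqn:DefPsi} the indices $a_1,\dots,a_n$ label the positions of the \emph{down} spins, so $(\Psi_{2n+1})_{1,\dots,n}$ has a down spin at position $1$. Consequently, by \cref{prop:DegreePsiOdd}\textit{(ii)} its lower degree in $z_1$ is at least $-(2n-1)$, so $\lim_{z_1\to 0}z_1^{2n}(\Psi_{2n+1})_{1,\dots,n}=0$; this component lies outside the support of the limit vector $\Theta_1^\uparrow|\Psi_{2n}\rangle$ and a vanishing component gives you no anchor for the uniqueness argument (indeed, applied naively through \cref{prop:ExchangeProperty} it would ``prove'' that the limit vanishes identically, which is false). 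Two repairs are available: either anchor the odd case at $i=2n+1$, where $(\Psi_{2n+1})_{1,\dots,n}$ does carry an up spin and one computes $\lim_{z_{2n+1}\to 0}z_{2n+1}^{2n}(\Psi_{2n+1})_{1,\dots,n}=q^{-3n}(\Psi_{2n})_{1,\dots,n}$, then propagate to smaller $i$ by moving $z_i$ to the last slot — note that the relevant $\check R$ arguments are then $z_{j+1}/z_i\to\infty$, so the propagation for \eqref{eqn:LimitZTo03} uses $\check R(\infty)$, not $\check R(0)$ (this is the paper's route); or keep the anchor at $i=1$ but use the component $(\Psi_{2n+1})_{n+2,\dots,2n+1}$ of \cref{prop:SpecialComponent2}, which has an up spin at position $1$ and the full degree $\pm 2n$ in $z_1$. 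As written, your sketch of \eqref{eqn:LimitZTo03}--\eqref{eqn:LimitZTo04} does not go through.
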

\begin{proof}
The proofs of the first two relations are similar. Hence, we focus on the proof of \eqref{eqn:LimitZTo0}. First, we consider the case where $i=1$. It follows from \cref{prop:DegreePsiEven} that there is a vector $|\Phi\rangle=|\Phi(z_2,\dots,z_{2n})\rangle\in V^{2n-1}$ such that
  \begin{equation}
    \lim_{z_{1}\to 0} z_{1}^{2n-1}|\Psi_{2n}(z_1,z_2\dots,z_{2n})\rangle = |{\downarrow}\rangle\otimes |\Phi(z_2,\dots,z_{2n})\rangle.
  \end{equation}
  This vector is of the form \eqref{eqn:DefPhi} (with $n$ replaced by $n-1$) and obeys the exchange relations. Moreover, it has the component
  \begin{align}
    \Phi_{1,\dots,n-1}(z_2,\dots,z_{2n})&=\lim_{z_{1}\to 0} z_1^{2n-1}  (\Psi_{2n})_{1,\dots,n}(z_1,z_2\dots,z_{2n})\\ 
    & = (-1)^n \beta^{-1} (\Psi_{2n-1})_{1,\dots,n-1}(z_2,\dots,z_{2n}).
  \end{align}
We apply \cref{lem:ExchangeComponents2} and conclude $|\Phi\rangle = (-1)^n \beta^{-1}|\Psi_{2n-1}(z_2,\dots,z_{2n})\rangle$. This ends the proof for $i=1$.
  
  Second, for $i=2,\dots, 2n$, we write
  \begin{equation}
    |\Psi_{2n}(z_1,\dots,z_i,\dots,z_{2n})\rangle= \prod_{j=1,\dots,i-1}^{\curvearrowleft}\check R_{j,j+1}(z_i/z_j)|\Psi_{2n}(z_i,z_1,\dots,z_{i-1},z_{i+1},\dots,z_{2n})\rangle. 
  \end{equation}
  Using the relation $\lim_{z\to 0}\check R_{j,j+1}(z) = -q^{-\frac32-\frac12 \sigma^z_j\sigma^z_{j+1}}\piJ_{j,j+1}$ and the result for $i=1$ leads to the relation \eqref{eqn:LimitZTo0}.

Moreover, the proofs of \eqref{eqn:LimitZTo03} and \eqref{eqn:LimitZTo04} are also similar. Therefore, we only prove \eqref{eqn:LimitZTo03}. First, we consider the case $i=2n+1$. By \cref{prop:DegreePsiEven}, there exists a vector $|\Phi\rangle = | \Phi(z_1,\dots, z_{2n})\rangle \in V^{2n}$ such that 
\begin{equation}
\lim_{z_{2n+1}\to 0}z_{2n+1}^{2n} |\Psi_{2n+1}(z_1,\dots,z_{2n+1})\rangle = |\Phi(z_1,\dots, z_{2n})\rangle \otimes |{\uparrow}\rangle.
\end{equation}
  This vector is of the form \eqref{eqn:DefPhi}, satisfies the exchange relations and has the component
  \begin{align}
    \Phi_{1,\dots,n}(z_1,\dots, z_{2n})
    &= \lim_{z_{2n+1}\to 0}z_{2n+1}^{2n}   (\Psi_{2n+1})_{1,\dots,n}(z_1,z_2\dots,z_{2n+1})\\ 
    & = q^{-3n} (\Psi_{2n})_{1,\dots,n}(z_1,\dots,z_{2n}).
  \end{align}
\cref{lem:ExchangeComponents2} implies that $|\Phi\rangle = q^{-3n} |\Psi_{2n}(z_1,\dots,z_{2n})\rangle$. A direct inspection shows that this is the result \eqref{eqn:LimitZTo03} for $i=2n+1$.

Second, for $i=1, \dots, 2n$, we have
  \begin{equation}
    |\Psi_{2n+1}(z_1,\dots,z_{2n+1})\rangle=\! \prod_{j=i,\dots,2n}\! \check R_{j,j+1}(z_{j+1}/z_i)|\Psi_{2n+1}(z_1,\dots,z_{i-1},z_{i+1},\dots,z_{2n+1}, z_i)\rangle. 
      \end{equation}
  We use the relation $\lim_{z \to \infty} \check R_{j,j+1}(z) = -q^{\frac{3}{2} + \frac 12 \sigma_j^z\sigma_{j+1}^z} \piJ_{j,j+1}$ and the result for the case $i=2n+1$ to obtain \eqref{eqn:LimitZTo03}.
\end{proof}

The preceding proposition allows us to show that the bounds in the inequalities \eqref{eqn:DegInEq1} and \eqref{eqn:DegInEq2} of \cref{prop:DegreePsiEven,prop:DegreePsiOdd} are actually equal to the degrees. 
\begin{proposition}
  \label{prop:DegreePsiImproved}
  The property (ii) of \cref{prop:DegreePsiEven} holds with \eqref{eqn:DegInEq1} replaced by
  \begin{equation}
    \label{eqn:DegPsiEvenSpinDown}
    \deg_{z_i}^\pm (\Psi_{2n})_{a_1,\dots,a_n} = \pm 2(n-1).
  \end{equation}
  Likewise, the property (iii) of \cref{prop:DegreePsiOdd} holds with \eqref{eqn:DegInEq2} replaced by
  \begin{equation}
    \deg_{z_i}^\pm (\Psi_{2n+1})_{a_1,\dots,a_n} = \pm (2n-1).
  \end{equation}
\end{proposition}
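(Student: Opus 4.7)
The plan is to upgrade the inequalities of \cref{prop:DegreePsiEven,prop:DegreePsiOdd} to equalities by showing that the extremal coefficients of $(\Psi_N)_{a_1,\dots,a_n}$ in $z_i$ are non-vanishing in the cases where only a bound was established. The two statements of the proposition are strictly analogous, so I focus on the even case $N=2n$ with $i\notin\{a_1,\dots,a_n\}$, where the claim is that the bound $\deg_{z_i}^\pm(\Psi_{2n})_{a_1,\dots,a_n}\leqslant \pm 2(n-1)$ of \cref{prop:DegreePsiEven} is in fact an equality.

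The first step is to verify the equality $\deg^\pm_{z_i}(\Psi_{2n})_{1,\dots,n}=\pm 2(n-1)$ directly from \cref{prop:SpecialComponent}. For $i\in\{n+1,\dots,2n\}$, the $z_i$-dependence of this component is carried entirely by the factor $\prod_{n+1\leqslant k<\ell\leqslant 2n}[qz_\ell/z_k][q^2z_kz_\ell]$, in which exactly $2(n-1)$ factors $[\,\cdot\,]$ involve $z_i$. Extracting the leading (respectively trailing) term from each of those factors yields a monomial in $z_i$ whose coefficient is a product of non-vanishing Laurent polynomials in the remaining variables. Hence both degrees are attained for this specific component.

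The second step is to propagate this non-vanishing to every admissible label $(a_1,\dots,a_n)$ with $i\notin\{a_j\}$. Expanding
\[
|\Psi_{2n}\rangle = z_i^{2n-1}|{\downarrow}\rangle_i\otimes|\phi^{(0)}\rangle + z_i^{2(n-1)}|{\uparrow}\rangle_i\otimes|\phi^{(1)}\rangle + O(z_i^{2n-3})
\]
as $z_i\to\infty$, the leading term is identified by \cref{prop:BraidLimits} as proportional to $\Theta_i^\downarrow|\Psi_{2n-1}(z_1,\dots,z_{i-1},z_{i+1},\dots,z_{2n})\rangle$, while the sub-leading vector $|\phi^{(1)}\rangle\in V^{2n-1}$ depends only on the remaining spectral parameters and collects exactly the coefficients of $z_i^{2(n-1)}$ in the components $(\Psi_{2n})_{a_1,\dots,a_n}$ with $i\notin\{a_j\}$. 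For each pair of adjacent indices $j,j+1$ disjoint from $\{i\}$ the exchange relation $\check R_{j,j+1}(z_j/z_{j+1})|\Psi_{2n}\rangle=|\Psi_{2n}(\dots z_{j+1},z_j\dots)\rangle$ is $z_i$-independent, so equating the coefficients of $z_i^{2(n-1)}$ shows that $|\phi^{(1)}\rangle$ obeys the exchange relation at $(j,j+1)$; the exchange relation across position $i$ (that is, between the relabelled adjacent positions $i-1$ and $i$ in $V^{2n-1}$) is recovered by expanding in $z_i\to\infty$ the three-factor Yang--Baxter composite $\check R_{i-1,i}(z_i/z_{i+1})\check R_{i,i+1}(z_{i-1}/z_{i+1})\check R_{i-1,i}(z_{i-1}/z_i)$ that swaps $z_{i-1}$ and $z_{i+1}$ past $z_i$. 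The vector $|\phi^{(1)}\rangle$ thus satisfies the full system of exchange relations on $V^{2n-1}$. The argument of \cref{prop:ExchangeProperty} carries over verbatim (\cref{lem:ExchangeComponents2} makes no essential use of the specific magnetisation, and in the sector of $|\phi^{(1)}\rangle$ the configuration with down-spins at the leftmost $n$ positions plays the role of the reference $\Phi_{1,\dots,n}$). Since Step~1 provides at least one non-vanishing component, every component of $|\phi^{(1)}\rangle$ is a non-zero Laurent polynomial, which gives $\deg_{z_i}^+(\Psi_{2n})_{a_1,\dots,a_n}=2(n-1)$ for all admissible labels. The equality $\deg_{z_i}^-=-2(n-1)$ follows by the symmetric argument using \eqref{eqn:LimitZTo0}, and the odd-$N$ case is handled identically using \eqref{eqn:LimitZTo03}--\eqref{eqn:LimitZTo04}.

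The main obstacle will be establishing the exchange relation for $|\phi^{(1)}\rangle$ at the indices that straddle position $i$: the $z_i\to\infty$ expansion of the three-$\check R$ Yang--Baxter composite is not entirely routine, and one must check that the combination of leading-order limits $\check R(z)\to -q^{\pm(3/2+\sigma^z\sigma^z/2)}P$ indeed produces the correct exchange operator on the sub-leading coefficient. Once this step is secured, the rest of the proof is a direct transcription of \cref{prop:ExchangeProperty}.
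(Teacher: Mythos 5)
Your overall logic (anchor one non-vanishing extremal coefficient, then propagate it to all label sets via \cref{prop:ExchangeProperty}) is reasonable, and Step~1 is correct, but the propagation hinges entirely on the claim that the sub-leading vector $|\phi^{(1)}\rangle$ satisfies the exchange relation at the pair of positions straddling $i$ --- and that is precisely where the argument has a genuine, unresolved gap. When you extract the coefficient of $z_i^{2(n-1)}$ from your three-factor composite acting on $|\Psi_{2n}\rangle$, you obtain not only the leading-order operator (the product of the limits $-q^{\pm(3/2+\sigma^z\sigma^z/2)}P$ with the middle $\check R$) applied to $|{\uparrow}\rangle_i\otimes|\phi^{(1)}\rangle$, but also a cross term in which the order-$z_i^{-1}$ correction of one of the two outer $\check R$-matrices acts on the \emph{leading} vector $|{\downarrow}\rangle_i\otimes|\phi^{(0)}\rangle$. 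These corrections are exactly the $c$-type entries of $\check R$, and tracing the permutations shows that each of the two cross contributions generically has a non-zero projection onto the up-spin-at-$i$ subspace. The relation you actually obtain for $|\phi^{(1)}\rangle$ is therefore an exchange relation with an inhomogeneous term proportional to $|\phi^{(0)}\rangle$, and \cref{prop:ExchangeProperty} does not apply unless you prove that the two cross contributions cancel --- which you flag as ``not entirely routine'' but do not carry out, and which is not obviously true. Since moving a down spin past position $i$ requires precisely this straddling relation, the divided-difference chain from your anchor to a general label set cannot be completed without it. (A secondary, easily repaired, omission: your anchor $(\Psi_{2n})_{1,\dots,n}$ only covers $i>n$; for $i\leqslant n$ you would need the component of \cref{prop:SpecialComponent2} or the parity of \cref{prop:Parity}.)

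The paper sidesteps the sub-leading analysis entirely. For a label set with $i\notin\{a_1,\dots,a_n\}$, pick any $j=a_\ell$ in the set and consider $\lim_{z_j\to 0}z_j^{2n-1}(\Psi_{2n})_{a_1,\dots,a_n}$: by \cref{prop:BraidLimits} this is a non-zero constant times a component of $\Psi_{2n-1}$, whose degrees in $z_i$ are \emph{exactly} $\pm 2(n-1)$ by part (iii) of \cref{prop:DegreePsiOdd} (an equality already established there). Since such a limit can only lower the upper degree and raise the lower degree in $z_i$, the bounds of \cref{prop:DegreePsiEven} must already be saturated before the limit. This uses only leading coefficients in a variable different from $z_i$ and requires no new exchange-relation argument; I would recommend restructuring your proof along these lines.
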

\begin{proof}
The proofs of the two properties are similar. Hence, let
us prove that \cref{prop:DegreePsiEven} holds with \eqref{eqn:DegPsiEvenSpinDown}. To this end, let $j$ and $a_1,\dots,a_n$ be integers with $1\leqslant j \leqslant 2n$ and $1\leqslant a_1 < \dots < a_\ell = j < \dots < a_n\leqslant 2n$. From \cref{prop:BraidLimits}, it follows that
   \begin{multline}
    \label{eqn:LimitComponents}
   \lim_{z_j \to 0}  z_j^{2n-1} (\Psi_{2n})_{a_1,\dots,a_{\ell-1},j,a_{\ell+1},\dots,a_n} =(-1)^{n+j+1}\beta^{-1}q^{-(\ell+j-2)} \\
\times (\Psi_{2n-1})_{a_1,\dots,a_{\ell-1},a_{\ell+1}-1,\dots,a_n-1}(z_1,\dots,z_{j-1},z_{j+1},\dots,z_{2n}).
\end{multline}
  Now, choose an integer $1\leqslant i \leqslant 2n$ with $i\notin \{a_1,\dots,a_n\}$. By \cref{prop:DegreePsiOdd}, the right-hand side of \eqref{eqn:LimitComponents} is a Laurent polynomial in $z_i$ with lower degree $-2(n-1)$ and upper degree $+2(n-1)$. According to \cref{prop:DegreePsiEven}, these are equal to the bounds on the degrees of $(\Psi_{2n})_{a_1,\dots,a_{\ell-1},j,a_{\ell+1},\dots,a_n}$ with respect to $z_i$. Since the lower degree can only increase and the upper degree can only decrease when taking the limit on the left-hand side of \eqref{eqn:LimitComponents}, we have
  \begin{equation}
    \deg_{z_i}^\pm (\Psi_{2n})_{a_1,\dots,a_{\ell-1},j,a_{\ell+1},\dots,a_n} = \pm 2(n-1).
  \end{equation}
Since this statement holds for all $j\in \{ a_1,\dots, a_n\}$, we obtain \eqref{eqn:DegPsiEvenSpinDown}.
\end{proof}

\subsection{Parity}
\label{sec:Parity}
In this section, we establish the behaviour of $|\Psi_N\rangle$ under a parity transformation. To this end, we use an explicit formula for the component $(\Psi_N)_{\bar n+1,\dots,N}$. Obtaining it from the contour-integral formulas appears to be difficult. Hence, we compute it via factor exhaustion.
\begin{proposition}
  \label{prop:SpecialComponent2}
  We have the component
  \begin{equation}
    \label{eqn:SpecialComponent2}
    (\Psi_N)_{\bar n+1,\dots,N} = (-1)^{(N+1)n}\prod_{i=\bar n+1}^N [q\beta z_i] \prod_{1\leqslant i < j \leqslant \bar n}[qz_j/z_i][qz_i z_j]\prod_{\bar n+1 \leqslant i < j \leqslant N} [qz_j/z_i][q^2z_i z_j].
  \end{equation}
\end{proposition}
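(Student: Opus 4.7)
My plan is to prove \eqref{eqn:SpecialComponent2} by factor exhaustion: I will identify each factor on the right-hand side as a Laurent-polynomial divisor of the left-hand side, use the degree bounds of \cref{sec:Polynomiality} to conclude that the remaining quotient is a scalar, and finally determine that scalar by induction on $N$ via the braid limits of \cref{prop:BraidLimits}.

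The factors $[qz_j/z_i]$ with $i<j$ both in $\{1,\ldots,\bar n\}$ or both in $\{\bar n+1,\ldots,N\}$ will come from the exchange relations alone: when $j=i+1$, divisibility is immediate from \cref{lem:ExchangeComponents1}, and for $j>i+1$ one transports $z_j$ leftwards to position $i+1$ through a chain of adjacent exchanges (\cref{prop:Exchange}) confined to the relevant spin block, each multiplying the component by a ratio $[qz_{k+1}/z_k]/[qz_k/z_{k+1}]$ that is finite and generically nonzero at $z_j=\pm q^{-1}z_i$. The ``crossed'' factor $[qz_1z_j]$ in the up-spin block then follows from the left reflection relation \eqref{eqn:ReflectionLeft2}, which is applicable since $a_1=\bar n+1>1$: it yields invariance of the component under $z_1\mapsto z_1^{-1}$, so the vanishing at $z_j=\pm q^{-1}z_1$ established above transfers to $z_1z_j=\pm q^{-1}$, and exchange within the up-spin block promotes this to $[qz_iz_j]$ for all $1\le i<j\le\bar n$. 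An analogous argument based on the right reflection \eqref{eqn:ReflectionRight2} combined with the constraint $s^4=q^6$ will produce the crossed factor $[q^2z_iz_j]$ in the down-spin block, since at $z_iz_N=\pm q^{-2}$ one has $s^{-2}z_N^{-1}=\pm q^{-1}z_i$ and the already-established vanishing at this image point, together with the generic finiteness and non-vanishing of the reflection prefactor, will force $(\Psi_N)_{\bar n+1,\ldots,N}$ to vanish at the original point. Finally, I will derive the boundary factor $[q\beta z_i]$ from \eqref{eqn:ReflectionRight2} together with \eqref{eqn:BetaBarBeta}: the constraint gives $\beta\bar\beta=\pm q^{-1}$ and hence $[\bar\beta z_N^{-1}]=0$ at $z_N=\pm(q\beta)^{-1}$, and the finiteness of the right-hand side of \eqref{eqn:ReflectionRight2} at this point will force the left-hand side, and so $(\Psi_N)_{\bar n+1,\ldots,N}$, to vanish there; exchange in the down-spin block then spreads the factor to every $i\in\{\bar n+1,\ldots,N\}$.

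With every factor of \eqref{eqn:SpecialComponent2} collected, I will compare degrees: a direct count shows that the right-hand side saturates the bounds of \cref{prop:DegreePsiEven,prop:DegreePsiOdd,prop:DegreePsiImproved} in each variable $z_i$ and matches the correct parity. The quotient of $(\Psi_N)_{\bar n+1,\ldots,N}$ by the right-hand side of \eqref{eqn:SpecialComponent2} is therefore a Laurent polynomial of degree zero in each $z_i$, hence a scalar depending at most on $q$, $\beta$, and $\bar\beta$. To pin this scalar down I will argue by induction on $N$: the base case $N=1$ is immediate since both sides equal $1$. For the inductive step, I will apply \cref{prop:BraidLimits} at an endpoint of appropriate parity and extract the leading asymptotics as that variable tends to $0$; on the left the result is $(\Psi_{N-1})_{\bar n+1,\ldots,N-1}$ multiplied by an explicit prefactor involving $\beta$, $q$, and the eigenvalue of $\tfrac{1}{2}\sum_{j<N}\sigma_j^z$ on the relevant configuration, and on the right a term-by-term expansion of \eqref{eqn:SpecialComponent2} reproduces exactly this prefactor together with the sign $(-1)^{(N+1)n}$, closing the induction.

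The principal obstacle will be the derivation of the crossed bulk factors $[qz_iz_j]$ and $[q^2z_iz_j]$, and of the boundary factor $[q\beta z_i]$: none of them is visible from a single exchange or reflection relation, and each requires coupling a boundary reflection with a vanishing locus produced at an earlier stage, making essential use of the arithmetic constraints \eqref{eqn:RelationSQ} and \eqref{eqn:BetaBarBeta}.
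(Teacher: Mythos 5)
Your proposal is correct and follows essentially the same route as the paper's own proof, which likewise proceeds by factor exhaustion: the paper first strips off the $[qz_j/z_i]$ factors via \cref{lem:ExchangeComponents1}, converts the reflection relations \eqref{eqn:ReflectionLeft2} and \eqref{eqn:ReflectionRight2} into functional equations for the remaining symmetric factor (yielding the crossed and boundary factors exactly as in your vanishing-locus argument), invokes the degree bounds of \cref{prop:DegreePsiEven,prop:DegreePsiOdd,prop:DegreePsiImproved} to reduce the quotient to a constant $C_N$, and fixes $C_N$ by the recurrences obtained from the braid limits of \cref{prop:BraidLimits} with initial condition $C_1=1$. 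The only difference is presentational: the paper phrases the reflection input as explicit functional equations for the symmetric remainder rather than as divisibility at the individual zero loci.
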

\begin{proof}
\Cref{lem:ExchangeComponents1} implies the factorisation
\begin{align}
  \label{eqn:PsiComponentIntermediate}
  (\Psi_N)_{\bar n+1,\dots,N}=f_N(z_1,\dots,z_{\bar n};z_{\bar n+1},\dots,z_N)\prod_{1\leqslant i < j \leqslant \bar n}[qz_j/z_i]\prod_{\bar n+1 \leqslant i < j \leqslant N} [qz_j/z_i],
\end{align}
where $f_N(z_1,\dots,z_{\bar n};z_{\bar n+1},\dots,z_N)$ is a Laurent polynomial with respect to each of its arguments. It is separately symmetric in the variables $z_1,\dots,z_{\bar n}$, and $z_{\bar n+1},\dots,z_N$. Furthermore, the reflection relations \eqref{eqn:ReflectionRight2} and \eqref{eqn:ReflectionLeft2} lead to
\begin{align}
    [\bar \beta s^2 z_N]\prod_{i=\bar n+1}^{N-1}[qz_N/z_i]f_N(\dots;\dots,z_N) &= [\bar \beta z_{N}^{-1}]\prod_{i=\bar n+1}^{N-1}[q/(s^2z_Nz_i)]f_N(\dots;\dots,s^{-2}z_N^{-1}),\\
     \prod_{i=2}^{\bar n}[qz_j/z_1]f_N(z_1,\dots,;\dots) &= \prod_{i=2}^{\bar n}[qz_jz_1]f_N(z_1^{-1},\dots;\dots).
\end{align}
Taking into account the symmetry of $f_N$, we obtain
\begin{equation}
\label{eqn:fPsiComponent}
f_N(z_1,\dots,z_{\bar n};z_{\bar n+1},\dots,z_N)=C_N
\prod_{i=\bar n+1}^N [\bar \beta z_i^{-1}]
 \prod_{1\leqslant i < j \leqslant \bar n}[qz_iz_j]\prod_{\bar n+1 \leqslant i < j \leqslant N} [q/(s^2z_iz_j)].
\end{equation}
It follows from \cref{prop:DegreePsiEven,prop:DegreePsiOdd} that $C_N$ is a Laurent polynomial in $z_i$ with degrees $\deg_{z_i}^\pm C_N=0$ for each $i=1,\dots,N$. Therefore, $C_N$ is a constant. To find it, we consider the limits where $z_1 \to \infty$ and $z_N \to \infty$. Using \cref{prop:BraidLimits}, we obtain the recurrence relations
\begin{equation}
C_{2n+1} = (-1)^n C_{2n}, \quad C_{2n} = 
  (q^3/s^2)^{n-1}  \bar \beta \beta q\,    C_{2n-1},
\end{equation}
for each $n\geqslant 1$. The initial condition $C_1=1$ implies
\begin{align}
  \label{eqn:CResult}
  C_{2n+1} = (-1)^n C_{2n} = (-1)^{n(n+1)/2}
  (q^3/s^2)^{n(n-1)/2}(\bar \beta \beta q)^{n}.
\end{align}
  We substitute \eqref{eqn:fPsiComponent} and \eqref{eqn:CResult} into \eqref{eqn:PsiComponentIntermediate}, and simplify the resulting expression with the help of \eqref{eqn:RelationSQ} and \eqref{eqn:BetaBarBeta}. This yields \eqref{eqn:SpecialComponent2}.
\end{proof}

In the next proposition, we compute the action of the parity operator $\mathcal P$ onto $|\Psi_N\rangle$. To this end, we write $|\Psi_N\rangle = |\Psi_N(z_1,\dots,z_N;\beta)\rangle$ to stress the vector's dependence on the parameter $\beta$.
\begin{proposition}
  \label{prop:Parity}
  We have
  \begin{equation}
    \mathcal P|\Psi_N(z_1,\dots,z_N;\beta)\rangle = \epsilon_N |\Psi_N(s^{-1}z_N^{-1},\dots,s^{-1}z_1^{-1};q^2s^{-1}\beta^{-1})\rangle,
  \end{equation}
  where $\epsilon_N = (q^3s^{-2})^{(N+1)n}$.
\end{proposition}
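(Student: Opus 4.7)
The plan is to adapt the uniqueness argument from the proof of \cref{prop:Equality}. Substituting $z_i \mapsto w_{N+1-i}^{-1}$ throughout, the claimed identity becomes equivalent to
\begin{equation}
\mathcal P|\Psi_N(w_N^{-1},\dots,w_1^{-1};\beta)\rangle = \epsilon_N |\Psi_N(s^{-1}w_1,\dots,s^{-1}w_N;q^2s^{-1}\beta^{-1})\rangle.
\label{eqn:ParityRewritten}
\end{equation}
The advantage of this reformulation is that both sides satisfy the standard exchange relations \eqref{eqn:ExchangeRelations} in $(w_1,\dots,w_N)$. For the right-hand side, this follows immediately from \cref{prop:Exchange}, since rescaling every variable by $s^{-1}$ leaves the ratios $w_i/w_{i+1}$ unchanged. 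For the left-hand side, I would use the symmetry $R_{ab,cd}(z) = R_{ba,dc}(z)$ of the entries of \eqref{eqn:DefR} to deduce the conjugation formula $\mathcal P \check R_{i,i+1}(z) \mathcal P^{-1} = \check R_{N-i,N-i+1}(z)$. Combined with the exchange relation for $|\Psi_N(w_N^{-1},\dots,w_1^{-1};\beta)\rangle$ applied at positions $N-i, N-i+1$, whose argument ratio equals $w_{i+1}^{-1}/w_i^{-1} = w_i/w_{i+1}$, this yields the standard exchange relation on $(w_1,\dots,w_N)$ for the left-hand side.

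With both sides of \eqref{eqn:ParityRewritten} obeying the exchange relations and sharing the magnetisation $(\bar n - n)/2$, \cref{prop:ExchangeProperty} reduces the proof to the equality of a single component. I would take the $(1,\dots,n)$ component. Reversing the positions of the down-spins via $\mathcal P$ identifies the $(1,\dots,n)$ component of the left-hand side with $(\Psi_N)_{\bar n+1,\dots,N}(w_N^{-1},\dots,w_1^{-1};\beta)$, which is evaluated explicitly by \cref{prop:SpecialComponent2}. The $(1,\dots,n)$ component of the right-hand side is $\epsilon_N (\Psi_N)_{1,\dots,n}(s^{-1}w_1,\dots,s^{-1}w_N;q^2s^{-1}\beta^{-1})$, computed via \cref{prop:SpecialComponent}. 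Relabelling the indices in the first formula via $k = N+1-i$, and analogously in the doubly-indexed products, brings both expressions to products over the same index sets, at which point one can compare them factor by factor.

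The main obstacle I anticipate is the careful tracking of the $q$-, $s$- and sign-dependent prefactors. Writing $s^2 = \eta q^3$ with $\eta \in \{\pm 1\}$ (permitted by \eqref{eqn:RelationSQ}), the three key identities
\begin{equation}
[q^2 s^{-2}\beta^{-1}w] = -\eta\,[q\beta/w], \quad [qs^{-2}w_iw_j] = -\eta\,[q^2/(w_iw_j)], \quad [q^2 s^{-2} w_i w_j] = -\eta\,[q/(w_iw_j)]
\end{equation}
convert each factor from the right-hand component into the corresponding factor in the left-hand one, contributing an overall prefactor $(-\eta)^E$ with $E = n + n(n-1)/2 + \bar n(\bar n-1)/2$. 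Combined with $\epsilon_N = (q^3 s^{-2})^{(N+1)n} = \eta^{(N+1)n}$ and the $(-1)^{(N+1)n}$ in \cref{prop:SpecialComponent2}, the identity reduces to the parity statement $E + (N+1)n \equiv 0 \pmod 2$. A short case analysis gives $E + (N+1)n = n(3n+1)$ for $N = 2n$ and $3n(n+1)$ for $N = 2n+1$, both of which are even, completing the verification.
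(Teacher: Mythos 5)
Your proposal is correct and follows essentially the same route as the paper: both reduce the identity, via the symmetry of $\check R$ and \cref{prop:ExchangeProperty}, to the equality of a single component, which is then verified by comparing the explicit formulas of \cref{prop:SpecialComponent,prop:SpecialComponent2}. The only difference is that you perform the variable relabelling up front and write out the factor-by-factor bookkeeping (which checks out, including the parity counts $n(3n+1)$ and $3n(n+1)$) that the paper leaves implicit.
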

\begin{proof}
We consider the vector
\begin{equation}
  |\Phi\rangle = \epsilon_N \mathcal P|\Psi_N(s^{-1}z_N^{-1},\dots,s^{-1}z_1^{-1};q^2s^{-1}\beta^{-1})\rangle.
\end{equation}
Using the fact that $\check R(z)$ is a symmetric matrix, it is straightforward to show that this vector obeys the exchange relations. Moreover, it has the component
\begin{equation}
  \Phi_{1,\dots,n} = \epsilon_N (\Psi_N)_{\bar n+1,\dots,N}(s^{-1}z_N^{-1},\dots,s^{-1}z_1^{-1};q^2s^{-1}\beta^{-1}).
\end{equation}
We compute the right-hand side by using the explicit formula for $(\Psi_N)_{\bar n+1,\dots,N}$ given in \cref{prop:SpecialComponent2}. This leads to $\Phi_{1,\dots,n}=(\Psi_N)_{1,\dots,n}$. Hence, by \cref{prop:ExchangeProperty} we have $|\Phi\rangle = |\Psi_N\rangle$.
\end{proof}

\section{The transfer matrix of the six-vertex model}
\label{sec:6V}
In this section, we discuss the transfer matrix of the six-vertex model on a strip. We recall its definition and some of its properties in \cref{sec:TM6V}. In \cref{sec:EVInh} we prove that the vector $|\Psi_N\rangle$ is an eigenvector of the transfer matrix if $q=\ee^{\pm 2\pi \i/3}$. Moreover, we compute the corresponding eigenvalue.

\subsection{The transfer matrix}
\label{sec:TM6V}

Following \cite{sklyanin:88}, we define the transfer matrix of the inhomogeneous six-vertex model on a strip with $N$ horizontal lines. It is an operator on the space $V^N$, given by the partial trace
\begin{equation}
  T(z|z_1,\dots,z_N) = \textup{tr}_0\left(K_0(qz;\bar \beta)\prod_{i=1,\dots,N}^\curvearrowleft R_{0,i}(z/z_i)K_0(z;\beta)\prod_{i=1,\dots,N}^\curvearrowright R_{0,i}(z z_i)\right).
\end{equation}
The operators inside the trace act on $V_0\otimes V^N = V_0\otimes V_1\otimes \cdots \otimes V_N$, where $V_0$ is an additional copy of $\mathbb C^2$ called the auxiliary space. The trace is taken  with respect to this auxiliary space.

It is well known that the Yang-Baxter equation \eqref{eqn:YBE} and the boundary Yang-Baxter equation \eqref{eqn:bYBE} imply the commutation relation
\begin{equation}
  [T(z|z_1,\dots,z_N),T(w|z_1,\dots,z_N)] = 0,
\end{equation}
for all $z,w$ and $z_1,\dots,z_N$ \cite{sklyanin:88}. The common eigenvectors of the family of commuting transfer matrices $T(z|z_1,\dots,z_N)$ parameterised by $z$ are therefore independent of $z$. In the following, we construct a common eigenvector of the family of commuting transfer matrices and compute its eigenvalue.

To this end, we use a few properties of the transfer matrices that follow from the properties of the $R$-matrix \eqref{eqn:DefR}. First, we note that the $R$-matrix is symmetric. Indeed, defining $R_{2,1}(z) =P_{1,2}R_{1,2}(z)P_{1,2}$, one checks that $R_{2,1}(z)=R_{1,2}(z)$.
Second, the $R$-matrix obeys the unitarity relation
\begin{equation}
\label{eqn:Runitarity}
R_{1,2}(z) R_{1,2}(1/z)=\bm{1},
\end{equation}
and the following relations:
\begin{align}
\label{eqn:Rcrossing1}
[q/z]\sigma_1^x \left( R_{1,2}(z)\right)^{t_1} \sigma_1^x &= -[q^2z] R_{1,2}(-1/(qz)),\\
\label{eqn:Rcrossing2}
[q/z]\sigma_1^y \left( R_{1,2}(z)\right)^{t_1} \sigma_1^y &= -[q^2z] R_{1,2}(1/(qz)),\\
\label{eqn:Rcrossing3}
\sigma_1^z  R_{1,2}(z) \sigma_1^z &=  R_{1,2}(-z).
\end{align}
Here, the superscript $t_1$ indicates the transposition with respect to the space $V_1$.

\begin{lemma} 
  \label{lem:SimpleRelations1}
  We have the relations
\begin{equation}
T(-z|z_1,\dots,z_N) = T(z|z_1,\dots,z_N),
\end{equation}
and
  \begin{equation}
    T(1/(qz)|z_1,\dots,z_L) = \frac{[\beta/z][\bar \beta/(qz)]}{[\bar \beta z][q\beta z]}\prod_{i=1}^N\left(\frac{[qz_i/ z][q/(z z_i)]}{[q^2 z/z_i][q^2 z z_i]}\right)T(z|z_1,\dots,z_L).
  \end{equation}
\end{lemma}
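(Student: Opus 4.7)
The first equation $T(-z)=T(z)$ is a parity statement. The diagonal $K$-matrices in \eqref{eqn:DefK} are even under sign reversal of their first argument, since $[-u]=-[u]$ implies $K(-w;\gamma)=K(w;\gamma)$. Meanwhile, \eqref{eqn:Rcrossing3} gives $R_{0,i}(-u)=\sigma_0^z R_{0,i}(u)\sigma_0^z$. Substituting both identities into the definition of $T(-z)$, the adjacent factors $\sigma_0^z\sigma_0^z=\bm{1}$ within each of the two $R$-products telescope, leaving only $\sigma_0^z$'s on the outer edges of each product. These commute through the diagonal $K$-matrices and are absorbed using the cyclic property of the partial trace, yielding $T(-z)=T(z)$.

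For the second equation, my plan is to apply the crossing relation \eqref{eqn:Rcrossing2} in the auxiliary space $V_0$. Rewritten as $R_{0,i}(1/(qu))=-\frac{[q/u]}{[q^2u]}\sigma_0^y R_{0,i}(u)^{t_0}\sigma_0^y$, it can be substituted into the definition of $T(1/(qz))$ with $u=zz_i$ for the right $R$-product and $u=z/z_i$ for the left $R$-product. The scalar prefactors collected this way produce the product $\prod_{i=1}^N[qz_i/z][q/(zz_i)]/([q^2z/z_i][q^2zz_i])$ of the lemma. The interior $\sigma_0^y$'s cancel pairwise, while the outer $\sigma_0^y$'s conjugate the two $K$-matrices; the identity $\sigma_0^y K_0(w;\gamma)\sigma_0^y=([\gamma w]/[\gamma/w])K_0(w;\gamma)^{-1}$, which follows from the diagonal structure of $K$, produces inverse $K$-matrices together with the remaining scalar prefactor $[\beta/z][\bar\beta/(qz)]/([\bar\beta z][q\beta z])$.

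The final step is to identify the resulting trace with $T(z)$. Because each of the two $R$-subproducts in $T(z)$ consists of $R_{0,i}$'s with pairwise distinct quantum-space labels $i$, partial transposition in $V_0$ reverses the order within each subproduct; combined with the invariance $\textup{tr}_0(X^{t_0})=\textup{tr}_0(X)$ of the partial trace and the cyclic property $\textup{tr}_0(XY)=\textup{tr}_0(YX)$, the transposed $R$'s and inverse $K$'s can be rearranged into the defining expression of $T(z)$, multiplied by the announced scalar.

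The main obstacle is the bookkeeping of the partial transposition together with the cyclic rearrangement of the two $K$-matrices. Unlike a full transposition, partial transposition in $V_0$ does not globally reverse the order of operators whose quantum-space supports overlap, and the two $R$-subproducts share the quantum labels $\{1,\dots,N\}$. The proper rearrangement must therefore proceed subproduct-by-subproduct via the partial-transpose identity for commuting quantum parts, combined with cyclicity in $V_0$ to move the two $K$-matrices back into their correct positions. A careful tally of scalar factors coming from the $R$-crossings, the $\sigma_0^y$-conjugated $K$'s, and the cyclic rearrangement must reproduce the precise prefactor stated in the lemma.
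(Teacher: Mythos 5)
Your argument for the first relation is correct and is the paper's argument: $K(-w;\gamma)=K(w;\gamma)$, the interior $\sigma_0^z$'s from \eqref{eqn:Rcrossing3} telescope, and the outer ones are absorbed by diagonality of $K$ and cyclicity of $\textup{tr}_0$. Your opening moves for the second relation (crossing relation \eqref{eqn:Rcrossing2} in $V_0$, cancellation of interior $\sigma_0^y$'s, collection of the product $\prod_i[qz_i/z][q/(zz_i)]/([q^2z/z_i][q^2zz_i])$, and the partial-transpose/cyclicity bookkeeping, which indeed goes through via $\textup{tr}_0(A^{t_0}B)=\textup{tr}_0(AB^{t_0})$) also match the paper.

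However, there is a genuine gap at the point where you ``identify the resulting trace with $T(z)$.'' The two $K$-matrices appearing in $T(1/(qz))$ are $K_0(1/z;\bar\beta)$ and $K_0(1/(qz);\beta)$, so their $\sigma_0^y$-conjugates are proportional to $K_0(z;\bar\beta)$ and $K_0(qz;\beta)$, with scalar factors $[\bar\beta/z]/[\bar\beta z]$ and $[\beta/(qz)]/[q\beta z]$. Two things go wrong if you stop here. First, the operator you obtain is $\textup{tr}_0\bigl(K_0(qz;\beta)\prod^{\curvearrowright}R_{0,i}(zz_i)\,K_0(z;\bar\beta)\prod^{\curvearrowleft}R_{0,i}(z/z_i)\bigr)$, in which the boundary parameters $\beta$ and $\bar\beta$ are attached to the \emph{wrong} spectral arguments relative to $T(z)=\textup{tr}_0\bigl(K_0(qz;\bar\beta)\prod^{\curvearrowleft}R_{0,i}(z/z_i)K_0(z;\beta)\prod^{\curvearrowright}R_{0,i}(zz_i)\bigr)$; this is not $T(z)$ for generic $\beta\neq\bar\beta$. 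Second, and symptomatically, the scalar you collect is $[\bar\beta/z][\beta/(qz)]/([\bar\beta z][q\beta z])$, which differs from the stated prefactor $[\beta/z][\bar\beta/(qz)]/([\bar\beta z][q\beta z])$ by the swap $\beta\leftrightarrow\bar\beta$ in the numerator. The missing ingredient is the boundary crossing step: one must use the identity \eqref{eqn:KtrKRP}, together with the Yang-Baxter equation and a second auxiliary space $V_{\bar 0}$, to trade $K_0(z;\bar\beta)$ for a trace involving $K_{\bar 0}(qz;\bar\beta)$, pull the extra $R_{\bar 0,0}(z^2)$ through the row of $R$-matrices, and then re-fuse $K_0(qz;\beta)$ into $K_{\bar 0}(z;\beta)$ via \eqref{eqn:KtrKRP} again. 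This exchanges the roles of the two boundaries and supplies precisely the additional scalar factors $\frac{[z^2/q][qz/\bar\beta]}{[q^2z^2][\bar\beta/z]}\cdot\frac{[q^2z^2][\beta/z]}{[z^2/q][qz/\beta]}$ that convert your prefactor into the one in the lemma. Without this second half the proof does not close.
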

\begin{proof} First, we use \eqref{eqn:Rcrossing3} to write
\begin{equation}
  T(-z|z_1,\dots,z_N) = \textup{tr}_0\left(K_0(-qz;\bar \beta) \sigma_0^z \prod_{i=1,\dots,N}^\curvearrowleft R_{0,i}(z/z_i) \sigma_0^z K_0(-z;\beta) \sigma_0^z \prod_{i=1,\dots,N}^\curvearrowright R_{0,i}(z z_i)\sigma_0^z\right).\nonumber
\end{equation}
The relations $K_0(-z;\beta)=\sigma_0^zK_0(z;\beta)\sigma_0^z=K_0(z;\beta)$ and the cyclicity of the trace allow us to conclude that $T(-z|z_1,\dots,z_N) = T(z|z_1,\dots,z_N)$. 

Second, we use \eqref{eqn:Rcrossing2} to obtain
\begin{multline}
  T(1/(qz)|z_1,\dots,z_N) = \prod_{i=1}^N \frac{[qz_i/z][q/(zz_i)]}{[q^2 z/z_i][q^2 zz_i]} \\ 
 \times \textup{tr}_0\left( K_0(1/z;\bar \beta)  \sigma_0^y \prod_{i=1,\dots,N}^\curvearrowleft \left( R_{0,i}(z z_i)\right)^{t_0} \sigma_0^y K_0(1/(q z);\beta) \sigma_0^y \prod_{i=1,\dots,N}^\curvearrowright \left(R_{0,i}(z/z_i)\right)^{t_0} \sigma_0^y \right).
\end{multline}
Taking the transpose with respect to the space $V_0$, and using the cyclicity of the trace, we find
\begin{multline}
  T(1/(qz)|z_1,\dots,z_N) = \prod_{i=1}^N \frac{[qz_i/z][q/(zz_i)]}{[q^2 z/z_i][q^2 zz_i]} \\ 
 \times \textup{tr}_0\left( 
 \prod_{i=1,\dots,N}^\curvearrowright  R_{0,i}(z z_i) \sigma_0^y K_0(1/z;\bar \beta) \sigma_0^y 
 \prod_{i=1,\dots,N}^\curvearrowleft    R_{0,i}(z/z_i) \sigma_0^y  K_0(1/(qz); \beta)  \sigma_0^y \right).
\end{multline}
The $K$-matrix satisfies $\sigma^y K(1/z;\beta)\sigma^y = \frac{[\beta/z]}{[z\beta]}K(z;\beta)$ as well as the following relation
\begin{equation}
\label{eqn:KtrKRP}
   \text{tr}_{\bar 0}\left(K_{\bar 0}(qz;\bar \beta)R_{\bar 0,0}(z^2)\piJ_{\bar 0,0}\right) = \frac{[q^2z^2][\bar\beta /z]}{[z^{2}/q][qz/{\bar \beta}]}K_0(z;\bar \beta),
 \end{equation}
where we introduced a new auxiliary space $V_{\bar 0}=\mathbb C^2$. The operators inside the  trace act on $V_{\bar 0}\otimes V_0\otimes V^N$. We use these properties of the $K$-matrix to write
\begin{multline}
  T(1/(qz)|z_1,\dots,z_N) = \prod_{i=1}^N \frac{[qz_i/z][q/(zz_i)]}{[q^2 z/z_i][q^2 zz_i]} \frac{[q z/\bar \beta ][\beta/(qz)]}{[z \bar \beta][\beta q z]} \frac{[z^2/q]}{[q^2 z^2]} \\ 
 \times \textup{tr}_{\bar 0}\left( \textup{tr}_0\left( K_{\bar 0}(qz;\bar \beta)
 \prod_{i=1,\dots,N}^\curvearrowright  R_{0,i}(z z_i) R_{\bar 0,0}(z^2)
 \prod_{i=1,\dots,N}^\curvearrowleft    R_{\bar 0,i}(z/z_i) \piJ_{\bar 0,0} K_0(qz; \beta)   \right)\right).
\end{multline}
We rearrange the products of $R$-matrix inside the trace by using the Yang-Baxter equation in the form
\begin{equation}
R_{0,i}(zz_{i})R_{\bar 0,0}(z^2)R_{\bar 0,i}(z/z_i)=  R_{\bar 0,i}(z/z_i)R_{\bar 0,0}(z^2)R_{0,i}(zz_{i}).
\end{equation}
After rearrangement, we note that the permutation operator $\piJ_{\bar 0,0}$ allows us to take the products out of the trace over the space $V_0$. We obtain
\begin{multline}
  T(1/(qz)|z_1,\dots,z_N) = \prod_{i=1}^N \frac{[qz_i/z][q/(zz_i)]}{[q^2 z/z_i][q^2 zz_i]} \frac{[q z/\bar \beta ][\beta/(qz)]}{[z \bar \beta][\beta q z]} \frac{[z^2/q]}{[q^2 z^2]} \\ 
 \times \textup{tr}_{\bar 0} \left( 
  K_{\bar 0}(q z ;\bar \beta) \prod_{i=1,\dots,N}^\curvearrowleft R_{\bar 0,i}(z/z_i)
  \textup{tr}_0\left(R_{\bar 0,0}(z^2)\piJ_{\bar 0,0} K_0(qz; \beta)   \right) 
 \prod_{i=1,\dots,N}^\curvearrowright  R_{\bar 0,i}(z z_i) \right).
\end{multline}
We again use the relation \eqref{eqn:KtrKRP} and recognise the transfer matrix with auxiliary space index $\bar 0$. This concludes the proof. 
\end{proof}

\begin{lemma}
  \label{lem:SimpleRelations2}
  Let $z$ be a solution of $z^4=1$ then
  \begin{equation}
    \label{eqn:Tz41}
   T(z|z_1,\dots,z_N) = \frac{[q^2][\bar \beta/z]}{[q][\bar\beta/(q z)]} \bm{1}.
  \end{equation}
\end{lemma}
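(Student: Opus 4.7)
The plan is to exploit the symmetry $T(-z)=T(z)$ established in \cref{lem:SimpleRelations1} to reduce the four solutions of $z^4=1$ to the two representatives $z^2=1$ and $z^2=-1$. The right-hand side of \eqref{eqn:Tz41} is also invariant under $z\mapsto -z$, since $[-a]=-[a]$ implies that both $[\bar\beta/z]$ and $[\bar\beta/(qz)]$ pick up a sign. In each of the two remaining cases I would use a different mechanism to telescope the R-matrices down to the identity and reduce the trace to a $2\times 2$ computation.

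For $z^2=1$, direct substitution in \eqref{eqn:DefK} gives $K(z;\beta)=\bm 1$, so the central $K$-factor inside the trace disappears. The adjacency $R_{0,1}(z/z_1)R_{0,1}(zz_1)$ can then be evaluated by the unitarity relation \eqref{eqn:Runitarity}, because $zz_1=1/(z/z_1)$ when $z^2=1$. After this innermost pair collapses to $\bm 1$, the next pair $R_{0,2}(z/z_2)R_{0,2}(zz_2)$ becomes adjacent and cancels likewise; iterating from inside out, the whole product of R-matrices telescopes to $\bm 1$, leaving $T(z)=\textup{tr}_0 K_0(qz;\bar\beta)$. A direct evaluation of this $2\times 2$ trace, using the elementary identity $[\alpha u]+[\alpha/u]=(u+u^{-1})[\alpha]$ with $u=qz$ together with $[q^2]/[q]=q+q^{-1}$, reproduces the claimed right-hand side.

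For $z^2=-1$, the $K$-matrix instead satisfies $K(z;\beta)=\sigma^z$, and naive unitarity no longer applies since $zz_i=-z_i/z$ rather than $z_i/z$. The crossing relation \eqref{eqn:Rcrossing3} resolves this: $R_{0,i}(zz_i)=R_{0,i}(-z_i/z)=\sigma_0^z R_{0,i}(z_i/z)\sigma_0^z$. Applying this identity to each factor of $\prod_{i=1,\ldots,N}^{\curvearrowright} R_{0,i}(zz_i)$ and using $(\sigma_0^z)^2=\bm 1$ to collapse the interleaved pairs turns the right product into $\sigma_0^z\bigl(\prod_{i=1,\ldots,N}^{\curvearrowright}R_{0,i}(z_i/z)\bigr)\sigma_0^z$. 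The leading $\sigma_0^z$ annihilates $K_0(z;\beta)=\sigma_0^z$; the preceding unitarity argument then telescopes the remaining R-matrices from the inside, and I am left with $T(z)=\textup{tr}_0(K_0(qz;\bar\beta)\sigma_0^z)$. A $2\times 2$ trace computation using the companion identity $[\alpha u]-[\alpha/u]=(u-u^{-1})(\alpha+\alpha^{-1})$ then matches the claimed right-hand side once one notes that $z^2=-1$ forces $[\bar\beta/z]=-z(\bar\beta+\bar\beta^{-1})$ and $[qz]=z(q+q^{-1})$.

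The main obstacle is a modest book-keeping one in the $z^2=-1$ case: one must verify that the $\sigma_0^z$ factors produced by the crossing identity interleave correctly with the R-matrices and with the boundary $K_0(z;\beta)$, so that the telescoping leaves a single surviving $\sigma_0^z$ at exactly the right position inside the trace. With that verification in hand, both cases reduce to routine trace evaluations.
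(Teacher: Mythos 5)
Your proposal is correct and follows essentially the same route as the paper: split into $z^2=1$ and $z^2=-1$, use $K(z;\beta)=\bm 1$ or $\sigma^z$ respectively, telescope the $R$-matrices from the inside out via the unitarity relation \eqref{eqn:Runitarity} (with the $\sigma^z$-conjugation \eqref{eqn:Rcrossing3} to fix the arguments in the second case), and evaluate the remaining $2\times 2$ trace $\textup{tr}_0(K_0(qz;\bar\beta))$ or $\textup{tr}_0(K_0(qz;\bar\beta)\sigma_0^z)$. The only cosmetic differences are your (redundant) initial reduction via $T(-z)=T(z)$ and the fact that you push the $\sigma_0^z$ factors through the right-hand product of $R$-matrices whereas the paper commutes the single $\sigma_0^z$ through the left-hand product; both book-keepings close correctly.
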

\begin{proof}
  We separately consider the cases where $z=\pm 1$ and $z=\pm \i$.
  First, if $z=\pm 1$ then we have $K_0(z;\beta) = \bm{1}$ and $R_{0,i}(z/z_i) = R_{0,i}(1/(z z_i))$. These relations allow us to write
    \begin{equation}
  T(z|z_1,\dots,z_N) = \textup{tr}_0\left(K_0(qz;\bar \beta)\prod_{i=1,\dots,N}^\curvearrowleft R_{0,i}(1/(zz_i) )\prod_{i=1,\dots,N}^\curvearrowright R_{0,i}(zz_i)\right).
  \end{equation}
   Moreover, using the unitarity relation \eqref{eqn:Runitarity}, we find
   \begin{equation}
   \label{eqn:Tzpm}
 T(z|z_1,\dots,z_N) = \textup{tr}_0\left(K_0(qz;\bar \beta)\right)=\frac{[q^2][\bar \beta]}{[q][\bar\beta/q]}\bm{1}.
  \end{equation}
  This proves \eqref{eqn:Tz41} for $z=\pm 1$.

  Second, if $z=\pm \i$ then we have $K_0(z;\beta) = \sigma_0^z,$ and $R_{0,i}(z/z_i)\sigma_0^z = \sigma_0^z R_{0,i}(1/(zz_i))$. These relations allow us to write
   \begin{equation}
  T(z|z_1,\dots,z_N) = \textup{tr}_0\left(K_0(qz;\bar \beta)\sigma_0^z \prod_{i=1,\dots,N}^\curvearrowleft R_{0,i}(1/(zz_i))\prod_{i=1,\dots,N}^\curvearrowright R_{0,i}(z z_i)\right).
  \end{equation}
   Using the unitarity relation \eqref{eqn:Runitarity}, we find
     \begin{equation}
  T(z|z_1,\dots,z_N) = \textup{tr}_0\left(K_0(qz;\bar \beta)\sigma_0^z\right)=
\frac{[q^2][\bar \beta/ z]}{[q][\bar\beta/(qz)]} \bm{1}.
 \end{equation}
    This proves \eqref{eqn:Tz41} for $z=\pm \i$.
\end{proof}

\subsection{The eigenvector}
\label{sec:EVInh}

In this section, we establish a relation between the transfer matrix of the inhomogeneous six-vertex model on the strip and the scattering operators \eqref{eqn:DefSi}. (We refer to \cite{vlaar:15} for a general discussion on the relation between transfer matrices and scattering operators.) We use this relation to show that if $q=\ee^{\pm 2\pi \i/3}$ then the vector $|\Psi_N\rangle$ is an eigenvector of the transfer matrix. Moreover, we explicitly compute the corresponding eigenvalue.

\begin{proposition}
  \label{lem:SpecialisationT}
  If $q=\ee^{\pm 2\pi \i/3}$ and \eqref{eqn:BetaBarBeta} holds then
  \begin{equation}
  \label{eqn:TziS}
  T(z_i|z_1,\dots,z_N) = -\frac{[q \beta z_i]}{[q^2\beta z_i]}S^{(i)}(z_1,\dots,z_N),
  \end{equation}
  for each $i=1,\dots,N$. Here, $S^{(i)}(z_1,\dots,z_N)$ is the operator defined in \eqref{eqn:DefSi} with $s=1$.
\end{proposition}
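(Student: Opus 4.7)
The plan is to evaluate $T(z|z_1,\dots,z_N)$ at the specialisation $z = z_i$ by exploiting the fact that $R_{0,i}(z_i/z_i) = R_{0,i}(1) = P_{0,i}$. Moving this permutation through the rest of the operators inside the trace will transfer all auxiliary-space content onto the physical space $V_i$, collapsing the partial trace to an operator on $V^N$ that reassembles into $S^{(i)}$.

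First I would substitute $z = z_i$ in the defining trace and use the elementary identity $P_{0,i}X_{0,j} = X_{i,j}P_{0,i}$, valid for any operator $X$ on $V_0 \otimes V_j$ with $j \neq i$, to push $P_{0,i}$ rightward first through the remainder of the left product of $R$-matrices and through the middle factor $K_0(z_i;\beta)$. This converts each $R_{0,j}(z_i/z_j)$ with $j < i$ into $R_{i,j}(z_i/z_j)$ and turns $K_0(z_i;\beta)$ into $K_i(z_i;\beta)$. Continuing into the right product, the factors $R_{0,j}(z_iz_j)$ with $j < i$ become $R_{i,j}(z_iz_j)$ until $P_{0,i}$ meets $R_{0,i}(z_i^2)$, producing the combination $P_{0,i}R_{0,i}(z_i^2) = \check R_{0,i}(z_i^2)$. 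At this point, the auxiliary space couples to the physical spaces only through $K_0(qz_i;\bar\beta)$, $\check R_{0,i}(z_i^2)$ and the $R_{0,j}(z_iz_j)$ for $j > i$.

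Next, using the Yang-Baxter equation \eqref{eqn:YBE}, I would pull the remaining $R_{0,j}$ for $j > i$ past $\check R_{0,i}(z_i^2)$ so that the trace over $V_0$ reduces to the form of the trace identity \eqref{eqn:KtrKRP}, which was used earlier in the proof of \cref{lem:SimpleRelations1}. Applying this identity collapses the trace to $K_i(z_i;\bar\beta)$ multiplied by the prefactor $[q^2z_i^2][\bar\beta/z_i]/([z_i^2/q][qz_i/\bar\beta])$, leaving non-adjacent $R_{i,j}$ operators together with $K_i(z_i;\beta)$ and $K_i(z_i;\bar\beta)$ acting on the physical spaces. I would then rewrite each non-adjacent $R_{i,j}$ as a nested product of nearest-neighbour $\check R$-matrices via the braid Yang-Baxter equation \eqref{eqn:BraidYBE}, which rearranges the ``particle at site $i$ interacts in succession with sites $i-1,\ldots,1$'' structure into the ``particle travels from $i$ down to $1$'' structure. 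This should produce precisely $S^{(i)}$ with $s=1$.

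The scalar $-[q\beta z_i]/[q^2\beta z_i]$ comes from combining the trace-identity prefactor with the entries arising from the conversions $P_{0,i}K_0(z_i;\beta) = K_i(z_i;\beta)P_{0,i}$ and the identification of $K_i(z_i;\bar\beta)$, simplified using the constraint \eqref{eqn:BetaBarBeta}. The hypothesis $q = \ee^{\pm 2\pi\i/3}$ enters through $q^6 = 1$, which makes the choice $s=1$ compatible with the relation \eqref{eqn:RelationSQ} implicit in the definition of $S^{(i)}$; moreover, the combinatorial-point identity $[q^3] = 0$ is likely needed to collapse what would otherwise be a more complicated rational function in $z_i,\beta,\bar\beta$ into the stated simple form. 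The principal obstacle will be the last step, where the non-adjacent $R_{i,j}$ arising from the permutation-move must be re-expressed as the nested product of local $\check R$-matrices defining $S^{(i)}$: the bookkeeping of indices and the repeated applications of \eqref{eqn:BraidYBE} need to be tracked carefully so that permutations cancel and the factors end up in the precise order prescribed by \eqref{eqn:DefSi}.
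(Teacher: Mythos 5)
Your proposal is correct and follows essentially the same route as the paper's proof: specialise $z=z_i$ so that $R_{0,i}(1)=\piJ_{0,i}$, push the permutation through to convert the $j<i$ factors and $K_0(z_i;\beta)$ onto the physical space $V_i$, rearrange the $j>i$ factors with the Yang--Baxter equation so that the auxiliary trace reduces to \eqref{eqn:KtrKRP}, rewrite the non-adjacent $R_{i,j}$ as nested adjacent $\check R$-matrices, and simplify the prefactor $[q^2z_i^2][\bar\beta/z_i]/([z_i^2/q][qz_i/\bar\beta])$ using $q^3=1$ and \eqref{eqn:BetaBarBeta}. The only cosmetic inaccuracies are that the conversion $\piJ_{0,i}K_0(z_i;\beta)=K_i(z_i;\beta)\piJ_{0,i}$ contributes no scalar (the whole prefactor comes from the trace identity), and that the final rewriting in terms of $\check R_{j,j+1}$ is pure permutation bookkeeping rather than an application of \eqref{eqn:BraidYBE}.
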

\begin{proof}First, we assume that $q,\beta,\bar \beta$ are generic.
  We use $R_{0,i}(1) = \piJ_{0,i}$, the symmetry of the $R$-matrix, and the properties of partial traces to write
 \begin{multline}
   T(z_i|z_1,\dots,z_N) = \prod_{j=1,\dots,i-1}^\curvearrowleft R_{i,j}(z_i/z_j)K_i(z_i;\beta) \prod_{j=1,\dots,i-1}^\curvearrowright R_{i,j}(z_iz_j) \\
   \times  \text{tr}_0\left(K_0(qz_i;\bar \beta)\prod_{j=i+1,\dots,N}^\curvearrowleft R_{0,j}(z_i/z_j)R_{0,i}(z_i^2)\prod_{j=i+1,\dots,N}^\curvearrowright  R_{i,j}(z_iz_{j})\piJ_{0,i} \right).
 \end{multline}
We rearrange the products of $R$-matrix using the Yang-Baxter equation as in the proof of \cref{lem:SimpleRelations1} and obtain
\begin{multline}    T(z_i|z_1,\dots,z_N) = \prod_{j=1,\dots,i-1}^\curvearrowleft R_{i,j}(z_i/z_j)K_i(z_i;\beta) \prod_{j=1,\dots,i-1}^\curvearrowright R_{i,j}(z_iz_j) \\
 \times  \prod_{j=i+1,\dots,N}^\curvearrowright R_{i,j}(z_iz_{j})\, \text{tr}_0\left(K_0(qz_i;\bar \beta)R_{0,i}(z_i^2)\piJ_{0,i}\right)\prod_{j=i+1,\dots,N}^\curvearrowleft R_{i,j}(z_i/z_j).
 \end{multline}
 The remaining trace is given by
 \begin{equation}
   \text{tr}_0\left(K_0(qz_i;\bar \beta)R_{0,i}(z_i^2)\piJ_{0,i}\right) = \frac{[q^2z_i^2][\bar\beta /z_i]}{[z_i^{2}/q][qz_i/{\bar \beta}]}K_i(z_i;\bar \beta).
 \end{equation}
 We rewrite the products of $R$-matrices as products of $\check R$-matrices and obtain
\begin{multline}
   T(z_i|z_1,\dots,z_N) =\frac{[q^2z_i^2][\bar\beta /z_i]}{[z_i^{2}/q][qz_i/\bar \beta]}\prod_{j=1,\dots,i-1}^\curvearrowleft \check R_{j,j+1}(z_i/z_j)\,K_1(z_i;\beta) \prod_{j=1,\dots,i-1}^\curvearrowright \check R_{j,j+1}(z_iz_j) \label{eqn:TziFinal}\\
\times  \prod_{j=i,\dots,N-1}^\curvearrowright \check R_{j,j+1}(z_iz_{j+1})\, K_N(z_i;\bar \beta)\prod_{j=i,\dots,N-1}^\curvearrowleft \check R_{j,j+1}(z_i/z_{j+1}). 
\end{multline}

Second, we assume $q=\ee^{\pm 2\pi \i/3}$ and that \eqref{eqn:BetaBarBeta} holds. In \eqref{eqn:TziFinal}, these specialisations lead to the pre-factor
\begin{equation}
   \frac{[q^2z_i^2][\bar\beta /z_i]}{[z_i^{2}/q][qz_i/\bar \beta]} = 
   - \frac{[q\beta z_i]}{[q^2\beta z_i]}.
\end{equation}
The remaining products of $\check R$- and $K$-matrices yield the operator $S^{(i)}(z_1,\dots,z_N)$ defined in \eqref{eqn:DefSi} with $s=1$.
\end{proof}

\begin{proposition}
  \label{prop:Eigenvector}
Let $q = \ee^{\pm 2\pi \i/3}$ and suppose that \eqref{eqn:BetaBarBeta} holds then we have
  \begin{equation}
    T(z|z_1,\dots,z_N)|\Psi_N\rangle=\Lambda_N(z)|\Psi_N\rangle,
    \label{eqn:TMEigenvalue1}
  \end{equation}
  where the eigenvalue is
  \begin{equation}
    \Lambda_N(z) = -\frac{[q\beta z]}{[q^2\beta z]}.
        \label{eqn:TMEigenvalue2}
  \end{equation}
\end{proposition}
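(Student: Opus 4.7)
The plan is to combine \cref{lem:SpecialisationT} with the bqKZ equations to obtain the eigenvalue equation at the inhomogeneities $z_1,\dots,z_N$, then to add the special values provided by \cref{lem:SimpleRelations2}, and finally to close the argument by polynomial interpolation in $z$.

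Under the hypothesis $q = \ee^{\pm 2\pi \i/3}$, we have $q^3 = 1$, so $s^4 = q^6 = 1$ admits the choice $s = 1$ used in \cref{lem:SpecialisationT}. With this choice, the bqKZ equations \eqref{eqn:bqKZ} reduce to $S^{(i)}(z_1,\dots,z_N)|\Psi_N\rangle = |\Psi_N\rangle$ for each $i$, and together with \eqref{eqn:TziS} this immediately yields $T(z_i)|\Psi_N\rangle = \Lambda_N(z_i)|\Psi_N\rangle$ for $i = 1,\dots,N$. I would next invoke \cref{lem:SimpleRelations2} to extend the equation to the four points $z^4 = 1$: the prefactor $[q^2][\bar\beta/z]/([q][\bar\beta/(qz)])$ coincides with $\Lambda_N(z)$ there, because $[q^2] = -[q]$ (since $q + q^{-1} = -1$) and \eqref{eqn:BetaBarBeta} forces $\bar\beta\beta q = \pm 1$, hence $[\bar\beta/z] = \mp[q\beta z]$ and $[\bar\beta/(qz)] = \mp[q^2\beta z]$.

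To extend the equation to arbitrary $z$, I would introduce the error vector $v(z) := T(z)|\Psi_N\rangle - \Lambda_N(z)|\Psi_N\rangle$ together with the common denominator
\begin{equation*}
D(z) := [\beta/z][\bar\beta/(qz)]\prod_{i=1}^N[qz_i/z][q/(zz_i)],
\end{equation*}
so that $D(z)v(z)$ is a vector of Laurent polynomials in $z$. Using $T(-z) = T(z)$ from \cref{lem:SimpleRelations1}, $\Lambda_N(-z) = \Lambda_N(z)$ by direct inspection, and $D(-z) = D(z)$, the vector $D(z)v(z)$ depends only on $u = z^2$; a direct bookkeeping of the $2N$ $R$-matrices and two $K$-matrices under the trace bounds its Laurent degree range in $u$ by $[-(N+1),N+1]$. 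The preceding two steps already furnish zeros at $u \in \{z_1^2,\dots,z_N^2,1,-1\}$. I would then exploit the crossing relation of \cref{lem:SimpleRelations1} by first establishing the identity
\begin{equation*}
\prod_{i=1}^N \frac{[qz_i/z][q/(zz_i)]}{[q^2z/z_i][q^2 z z_i]} = 1,
\end{equation*}
which follows because each factor equals $-1$ under $q^3 = 1$; this gives $C(z)\Lambda_N(z) = \Lambda_N(1/(qz))$ with $C(z)$ the prefactor of \cref{lem:SimpleRelations1}, and hence $v(1/(qz)) = C(z)v(z)$, producing the additional zeros $u \in \{1/(q^2z_i^2)\}_{i=1}^N \cup \{q,-q\}$ (using $1/q^2 = q$). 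Counting generically distinct zeros yields $2N + 4$, which exceeds the $2N + 3$ needed to force a Laurent polynomial of the declared degree range to vanish identically; continuity in $(z_1,\dots,z_N)$ extends the conclusion to non-generic inhomogeneities.

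The main obstacle is the third step: the careful degree analysis of $D(z)v(z)$ as a Laurent polynomial in $u = z^2$, and the verification of the crossing-compatibility identity $C(z)\Lambda_N(z) = \Lambda_N(1/(qz))$. Both are direct but require a patient computation.
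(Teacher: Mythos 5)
Your proposal is correct and follows essentially the same route as the paper: clear the denominators of $T(z)$ to obtain a Laurent polynomial of bounded degree, collect the special values from \cref{lem:SpecialisationT} together with the bqKZ equations at $s=1$, the symmetry and crossing relations of \cref{lem:SimpleRelations1}, and the points $z^4=1$ from \cref{lem:SimpleRelations2}, and conclude by interpolation. The only cosmetic difference is that you bookkeep in $u=z^2$ (getting $2N+4>2N+3$ zeros) while the paper counts in $z$ directly ($4N+8>4N+5$ values); both counts are correct and equivalent.
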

\begin{proof}
We define the operator
\begin{equation}
  \bar T(z|z_1,\dots,z_N) = -[\beta/z][q^2\beta z]\Bigr(\prod_{j=1}^N[qz_j/z][q/(z_j z)]\Bigr)\,T(z|z_1,\dots,z_N).
\end{equation}
We prove that if $q = \ee^{\pm 2\pi \i/3}$ and \eqref{eqn:BetaBarBeta} holds then
\begin{equation}
  \label{eqn:TBarEV}
  \bar T(z|z_1,\dots,z_N)|\Psi_N\rangle=[\beta/z][q \beta z]\Bigr(\prod_{j=1}^N[qz_j/z][q/(z_j z)]\Bigr)\,|\Psi_N\rangle,
\end{equation}
which is equivalent to the statement of the theorem.
To this end, we note that the pre-factor on the right-hand side of \eqref{eqn:TBarEV} is a Laurent polynomial in $z$ with lower degree $-2(N+1)$ and upper degree $2(N+1)$. Likewise, the matrix elements of $\bar T(z|z_1,\dots,z_N)$ are Laurent polynomials in $z$ with lower degree at least $-2(N+1)$ and upper degree at most $2(N+1)$. Therefore, it is sufficient to show that \eqref{eqn:TBarEV} holds for at least $4N+5$ distinct values of $z$.

First, it follows from \cref{lem:SpecialisationT} that
\begin{equation}
  \bar T(z_i|z_1,\dots,z_N) = [\beta/z_i][q \beta z_i]\Bigr(\prod_{j=1}^L[qz_j/z_i][q/(z_j z_i)]\Bigr) S^{(i)}(z_1,\dots,z_N),
\end{equation}
where $S^{(i)}(z_1,\dots,z_N)$ is the operator defined in \eqref{eqn:DefSi} with $s=1$. It follows from this equality and from the boundary quantum Knizhnik-Zamolodchikov equations \eqref{eqn:bqKZ} that \eqref{eqn:TBarEV} holds if $z=z_i$ for each $i=1,\dots,N$. Moreover, \cref{lem:SimpleRelations1} allows us to conclude that it holds if $z=-z_i,1/(qz_i),-1/(q z_i)$ for each $i=1,\dots,N$, too.

Second, according to \cref{lem:SimpleRelations2}, for any solution $z$ of $z^4=1$ we have
\begin{equation}
  \bar T(z|z_1,\dots,z_L) = \Bigr(\prod_{j=1}^N[qz_j/z][q/(z_j z)]\Bigr)[\beta/z][q\beta z]\bm{1}.
\end{equation}
Together with \cref{lem:SimpleRelations1}, this equality implies that \eqref{eqn:TBarEV} trivially holds for the values $z=\pm 1, \pm \i,\pm q^{-1},\pm \i q^{-1}$.

In summary, the relation  \eqref{eqn:TBarEV} holds for $4N+8>4N+5$ distinct values of $z$ and, hence, for all $z$.
\end{proof}

Two remarks about the eigenvalue $\Lambda_N(z)$ are in order. First,
 we note that it can be written as the trace
\begin{equation}
  \label{eqn:LambdaTrace}
  \Lambda_N(z) = \text{tr}_0\left(K_0(qz;\bar\beta)K_0(z;\beta)\right),
\end{equation}
where $\bar \beta$ is a solution of \eqref{eqn:BetaBarBeta}. Formally, the right-hand side of this equality is the transfer matrix of the six-vertex model on a strip with $N=0$ vertical lines. Second, for $q=\beta = \ee^{2\pi\i/3}$ and $z_1=\dots=z_N=1$, the eigenvalue follows from the trigonometric limit of the eigenvalue problem for the supersymmetric eight-vertex model on a strip studied in the article \cite{hagendorf:20}. In that article, we also conjectured a generalisation of the eigenvalue to the inhomogeneous eight-vertex model. The expression of $\Lambda_N(z)$ found here above, once evaluated at $q=\beta=\ee^{2\pi\i/3}$ but with arbitrary $z_1,\dots,z_N$, proves the trigonometric limit of the conjectured expression.

\section{The homogeneous limit}
\label{sec:HomLimit}
In this section, we investigate the homogeneous limit of the vector $|\Psi_N\rangle$. It is convenient to define the rescaled version
\begin{equation}
  \label{eqn:DefPsiH}
  |\psi_N\rangle = (-1)^{\bar n(\bar n-1)/2} [\beta]^{-n}[q]^{-n(n-1)-\bar n(\bar n-1)}|\Psi_N(1,\dots,1)\rangle.
\end{equation}
It depends on the two parameters $q$ and $\beta$. By \cref{prop:SpecialComponent}, the vector is non-vanishing for generic $q$. Throughout this section, we use for this homogeneous limit the same notation as for the spin-chain ground-state vector of \cref{sec:SpinChainGS}. In \cref{sec:EV}, we show that if $q=\ee^{\pm 2\pi\i/3}$ then it is indeed the ground-state vector of the XXZ Hamiltonian \eqref{eqn:XXZ} with the parameters \eqref{eqn:CombinatorialPoint}, where
\begin{equation}
  \label{eqn:DefXBeta}
   x = -[q\beta]/[\beta].
\end{equation}
In \cref{sec:Components}, we show that the vector's components are given in terms of multiple contour integrals. They allow us to characterise the components as polynomials in $x$. The purpose of \cref{sec:ScalarProducts} is to find determinant formulas for a family of overlaps involving the vector. In \cref{sec:TSASM}, we formulate a conjecture between the sum of the components of $|\psi_N\rangle$ and a weighted enumeration of totally-symmetric alternating sign matrices.

\subsection{The ground-state eigenvalue of the XXZ Hamiltonian}
\label{sec:EV}

In this section, we prove \cref{thm:MainTheorem1}. To this end, we provide two auxiliary results about the XXZ Hamiltonian \eqref{eqn:XXZ} in the following two lemmas. The first lemma relates this Hamiltonian to the transfer matrix of a of the homogeneous six-vertex model on the strip 
\begin{equation}
  t(z) = T(z|1,\dots,1).
\end{equation}
\begin{lemma}
  The logarithmic derivative of $t(z)$ at $z=1$ is
  \begin{equation}
  \label{eqn:TH}
  t(1)^{-1}t'(1) = -\frac{4}{[q]}\left(H-C\bm{1}\right),
\end{equation}
where $H$ is the Hamiltonian \eqref{eqn:XXZ}, whose parameters and the constant $C$ are given by
\begin{equation}
  \label{eqn:ParamsQ}
  \Delta = \frac{[q^2]}{2[q]}, \quad p = \frac{[q][\beta^2]}{4[\beta]^2}, \quad \bar p = \frac{[q][\bar\beta^2]}{4[\bar \beta]^2}, \quad   C = \frac{3N[q^2]}{4[q]} +\frac{[q][\beta^2]}{4[\beta]^2}-\frac{[q]^2[\bar\beta^2]}{2[q^2][\bar\beta][q/\bar\beta]}.
\end{equation}
\end{lemma}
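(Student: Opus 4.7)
The proof is a variant of Sklyanin's classical derivation of the open XXZ Hamiltonian from the logarithmic derivative of the double-row transfer matrix. I would compute $t(1)$ and $t'(1)$ separately and then form the ratio.

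\textbf{Step 1: evaluate $t(1)$.} The parameterisation \eqref{eqn:DefR} gives $R(1)=\piJ$ and $K(1;\beta)=\bm 1$. In the product of $R_{0,i}(z)$'s at $z=1$, the two strings of permutation operators (one increasing, one decreasing in the auxiliary-space index) telescope when inserted around $K_0(1;\beta)=\bm 1$: for instance $\piJ_{0,N}\cdots \piJ_{0,1}\piJ_{0,1}\cdots \piJ_{0,N}=\bm 1$. Hence $t(1)=\operatorname{tr}_0 K_0(q;\bar\beta)\,\bm 1$, and a short calculation using $[\bar\beta q]+[\bar\beta/q]=(q+q^{-1})[\bar\beta]$ together with $q+q^{-1}=[q^{2}]/[q]$ gives the scalar
\begin{equation*}
  t(1)=\frac{[q^{2}][\bar\beta]}{[q][\bar\beta/q]}\,\bm 1.
\end{equation*}

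\textbf{Step 2: compute $t'(1)$.} Apply the product rule. This produces four families of contributions: (a) one for each bulk factor $R_{0,i}(z)$ in the left string differentiated; (b) the analogous family from the right string; (c) the derivative of $K_0(z;\beta)$; (d) the derivative of $K_0(qz;\bar\beta)$, which carries an extra factor $q$. In every term, all undifferentiated factors are evaluated at $z=1$, where they reduce to permutations and identities. Using cyclicity of $\operatorname{tr}_0$ together with the standard identity $\piJ_{0,i}A_i\piJ_{0,i}=A_0$, each bulk term (a)+(b) collapses under the trace to an operator acting nontrivially only on a pair of neighbouring quantum sites $(i,i+1)$, producing a multiple of $\check R'_{i,i+1}(1)=\piJ_{i,i+1}R'_{i,i+1}(1)$. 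A direct computation of the derivatives of $a(z),b(z),c(z)$ at $z=1$ shows that $\check R'_{i,i+1}(1)$ equals, up to a scalar and the factor $-1/[q]$, the two-site operator $\sigma_i^x\sigma_{i+1}^x+\sigma_i^y\sigma_{i+1}^y+\Delta\,\sigma_i^z\sigma_{i+1}^z$ with $\Delta=[q^{2}]/(2[q])$, plus an identity contribution per site that accumulates into $C$. The boundary contributions (c) and (d) give, after the same telescoping, diagonal operators proportional to $\sigma_1^z$ and $\sigma_N^z$ respectively; evaluating $K'(z;\beta)|_{z=1}$ and $q K'(q;\bar\beta)$ explicitly produces the coefficients $p=[q][\beta^{2}]/(4[\beta]^{2})$ and $\bar p=[q][\bar\beta^{2}]/(4[\bar\beta]^{2})$, together with extra multiples of the identity that feed into $C$.

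\textbf{Step 3: form the ratio and identify $C$.} Dividing $t'(1)$ by the scalar $t(1)$ from Step 1, the bulk and boundary contributions assemble into the Hamiltonian \eqref{eqn:XXZ}. The overall prefactor $-4/[q]$ arises because each pair $(i,i+1)$ is produced twice (once from the left string and once from the right) and because the derivative of the $R$-matrix entries at $z=1$ carries the factor $2/[q]$; the sign is that of $\check R'(1)$. Finally, $C$ is read off as the sum of the scalar remainders: $3N[q^{2}]/(4[q])$ from the bulk identity terms (an identity per bulk site of each row, divided by the overall scale $t(1)$), together with the boundary scalar parts, yielding the last two summands in the stated formula for $C$.

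\textbf{Main obstacle.} The structure of the argument is entirely classical; the real difficulty is book-keeping. Ensuring that every permutation rearrangement under $\operatorname{tr}_0$ is done consistently, tracking the signs and factors of $q$ from the derivatives of $R(z)$ and from the derivative of $K_0(qz;\bar\beta)$, and assembling the many small scalar remainders into the single constant $C$ with the precise form stated in \eqref{eqn:ParamsQ}, all require care but no new ideas.
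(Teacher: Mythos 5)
Your overall route — Sklyanin's logarithmic-derivative calculation — is exactly what the paper invokes (its proof is a one-line reference to \cite{sklyanin:88}), and your Steps~1 and the treatment of the left boundary and of the bulk density $\check R'_{i,i+1}(1)$ are correct. However, there is a concrete error in your bookkeeping at the \emph{right} boundary, and since the entire content of the lemma is the precise identification of $p$, $\bar p$ and $C$, it matters. You attribute the term $\bar p\,\sigma_N^z$ to contribution (d), the derivative of $K_0(qz;\bar\beta)$. This cannot be right: once the undifferentiated factors are replaced by permutations and telescoped away, that term is $\operatorname{tr}_0\bigl(q\,K_0'(q;\bar\beta)\bigr)$, a partial trace over the auxiliary space alone, hence a pure multiple of $\bm 1$ on $V^N$. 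It contributes only to $C$ — in fact it is precisely the third summand $-[q]^2[\bar\beta^2]/(2[q^2][\bar\beta][q/\bar\beta])$, whose denominator $[q^2][\bar\beta][q/\bar\beta]$ betrays the division by $\operatorname{tr}K(q;\bar\beta)=[q^2][\bar\beta]/([q][\bar\beta/q])$ — and produces no $\sigma_N^z$ whatsoever.

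The actual source of $\bar p\,\sigma_N^z$ is the family (a)+(b) at $i=N$, which you treat as ordinary bulk terms. When $R_{0,N}(z)$ (the factor adjacent to $K_0(qz;\bar\beta)$ in either string) is differentiated, the permutations do \emph{not} reduce it to a two-site operator on quantum sites: one is left with
\begin{equation*}
  \frac{\operatorname{tr}_0\bigl(K_0(q;\bar\beta)\,R'_{0,N}(1)\,\piJ_{0,N}\bigr)}{\operatorname{tr}K(q;\bar\beta)},
\end{equation*}
a genuine one-site operator on $V_N$ which, for the diagonal $K$-matrix, equals a combination $c_1\bm 1 + c_2\,\sigma_N^z$; this is where $\bar p$ and the missing piece $3[q^2]/(4[q])$ of the bulk constant (turning $3(N-1)[q^2]/(4[q])$ into $3N[q^2]/(4[q])$) come from. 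As written, your recipe would yield no $\sigma_N^z$ term at all and a wrong $C$. The fix is standard (it is exactly Sklyanin's $\operatorname{tr}_0(K_0^+H_{N,0})/\operatorname{tr}K^+$ boundary term), but it is a different mechanism from the one you describe, and the asymmetry between the two boundaries — left from $K'(1;\beta)$, right from the traced $R'_{0,N}$ — is the one genuinely delicate point of the "standard calculation".
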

\begin{proof}
  The statement follows from a standard calculation \cite{sklyanin:88}.
\end{proof}

For the second lemma, we denote by $H_\mu$ the restriction of the Hamiltonian $H$ to the sector of magnetisation $\mu = (\bar n-n)/2$. The lemma addresses the degeneracy of its ground-state eigenvalue. 
\begin{lemma}
  \label{lem:Nondegeneracy}
 For $x>0$, the ground-state eigenvalue of $H_\mu$ is non-degenerate.
\end{lemma}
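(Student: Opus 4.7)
The plan is to apply the Perron--Frobenius theorem for non-negative irreducible matrices to (a shift of) $-H_\mu$ expressed in the canonical basis. Since $x>0$, both $p$ and $\bar p$ are real, so $H_\mu$ is represented by a real symmetric matrix in the basis of configurations with exactly $n$ down spins.

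First, I would analyse the sign structure of the matrix elements. The terms $\sigma_i^z\sigma_{i+1}^z$ and the boundary contributions $p\sigma_1^z + \bar p\sigma_N^z$ are diagonal in this basis. The only off-diagonal contributions come from
\begin{equation*}
-\tfrac12\bigl(\sigma_i^x\sigma_{i+1}^x+\sigma_i^y\sigma_{i+1}^y\bigr) = -\bigl(\sigma_i^+\sigma_{i+1}^- + \sigma_i^-\sigma_{i+1}^+\bigr),
\end{equation*}
which flip an adjacent $\uparrow\downarrow$ pair to $\downarrow\uparrow$ (or vice versa) with matrix element $-1$. Hence every off-diagonal entry of $-H_\mu$ is either $0$ or $+1$. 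Choosing a constant $c>0$ larger than the largest diagonal entry of $H_\mu$, the shifted matrix $M = -H_\mu + c\bm 1$ has only non-negative entries while sharing the eigenvectors of $H_\mu$.

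Second, I would verify irreducibility of $M$. Two basis states in the sector are connected by a non-zero off-diagonal entry of $M$ precisely when they differ by the swap of a neighbouring $\uparrow\downarrow$ pair. Since any arrangement of $n$ down spins among $N$ sites can be transformed into any other such arrangement by a sequence of adjacent transpositions (a classical fact about the generators of the symmetric group), the graph on the sector induced by the non-zero entries of $M$ is connected. Therefore $M$ is irreducible in the Perron--Frobenius sense.

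The Perron--Frobenius theorem then guarantees that $M$ has a simple largest eigenvalue $\lambda_{\max}$ with a strictly positive eigenvector. Undoing the shift, $H_\mu$ has the non-degenerate smallest eigenvalue $c-\lambda_{\max}$, i.e.\ its ground-state eigenvalue is non-degenerate. The argument involves no analytic subtlety: the only point requiring care is the irreducibility step, and this reduces to the elementary observation about adjacent transpositions generating the symmetric group in the fixed-magnetisation sector.
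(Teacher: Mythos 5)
Your argument is correct and is essentially the paper's own proof: the paper also shifts $H_\mu$ by a multiple of the identity (it uses the explicit constant $\lambda=N-1+x+x^{-1}$) to obtain a non-negative irreducible matrix and invokes the Perron--Frobenius theorem, citing Yang--Yang for the sign-structure and connectivity checks that you spell out. Your version simply makes the irreducibility step (connectivity of the fixed-magnetisation sector under adjacent $\uparrow\downarrow$ swaps) explicit, which is exactly what the cited reference provides.
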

\begin{proof}
  Let $\lambda=N-1+x+x^{-1}$. One checks that, for $x>0$, the matrix $\lambda\bm 1 -H_\mu$ is a non-negative and irreducible matrix, following the arguments of \cite{yang:66} (see also \cite{hagendorf:20}). By the Perron-Frobenius theorem for non-negative matrices \cite{meyer:00}, the largest eigenvalue of $\lambda \bm 1-H_\mu$ is non-degenerate. Hence, the ground-state eigenvalue of $H_\mu$ is non-degenerate.
\end{proof}

\begin{proof}[Proof of \cref{thm:MainTheorem1}]
We divide the proof into two parts. In part 1, we show that the XXZ Hamiltonian \eqref{eqn:XXZ} with the parameters \eqref{eqn:CombinatorialPoint} possesses the eigenvalue $E_0$ given in \eqref{eqn:SpecialEV}. In part 2, we show that if $x>0$ then $E_0$ is the non-degenerate ground-state eigenvalue of the restriction $H_\mu$.

\medskip

\noindent \textit{Part 1: Existence of the eigenvalue.} Let $q=\ee^{\pm 2\pi\i/3}$ and suppose $\bar \beta$ and $\beta$ obey the relation \eqref{eqn:BetaBarBeta}. In this case,  \eqref{eqn:ParamsQ} implies that the spin chain's parameters are
\begin{equation}
  \Delta = -\frac{1}{2}, \quad p = \frac{1}{2}\left(\frac{1}{2}+\frac{[q \beta]}{[\beta]}\right), \quad \bar p = \frac{1}{2}\left(\frac{1}{2}+\frac{[\beta]}{[q \beta]}\right).
\end{equation}
They coincide with the parameters \eqref{eqn:CombinatorialPoint}, provided that $x$ is identified in terms of $\beta$ as in \eqref{eqn:DefXBeta}. Moreover, by \cref{prop:Eigenvector} we have $t(z)|\psi_N\rangle=\Lambda_N(z)|\psi_N\rangle$. We evaluate the derivative with respect to the spectral parameter $z$ at $z=1$ on both sides of this equality and use \eqref{eqn:TH}, which yields
\begin{equation}
H|\psi_N\rangle = E_0|\psi_N\rangle \quad \text{with}\quad E_0 =C -  \frac{[q]}{4}\Lambda_N(1)^{-1}\Lambda'_N(1),
\end{equation}
where $C$ is defined in \eqref{eqn:ParamsQ}. Using the expression \eqref{eqn:TMEigenvalue2} for the eigenvalue $\Lambda_N(z)$ as well as the relation \eqref{eqn:DefXBeta}, we obtain
\begin{equation}
  E_0 = -\frac{3N-1}{4}-\frac{(1-x)^2}{2x}.
\end{equation}

\medskip

\noindent \textit{Part 2: Ground-state eigenvalue.} 
We explicitly write out the dependences of $x$: $E_0=E_0(x)$, $H_\mu=H_\mu(x)$ etc. First, we note that $H_\mu(x)$ depends continuously on $x$ for $x>0$. Hence, its eigenvalues are continuous functions of $x$ for $x>0$. Second, we have shown in \cite{hagendorf:17} that $E_0(1)$ is the non-degenerate ground-state eigenvalue of $H_\mu(1)$. Third, let us suppose by contradiction that there is $x''>0$ such that $E_0(x'')$ is not the ground-state eigenvalue of $H_\mu(x'')$. Without loss of generality, we may assume that $x''>1$. (The proof for $0<x''<1$ follows the same line of arguments.) By the continuity of the eigenvalues with respect to $x$, there is $x'$ with $1< x' < x''$ such that $E_0(x)$ is the ground-state eigenvalue of $H_\mu(x)$ for $1\leqslant x \leqslant x'$ and coincides with another eigenvalue of the Hamiltonian for $x=x'$. This implies that the ground-state eigenvalue is at least doubly degenerate at $x=x'$, which contradicts \cref{lem:Nondegeneracy}.
\end{proof}

\subsection{Components}
\label{sec:Components}
In this section, we prove \cref{thm:MainTheorem2,thm:MainTheorem3}. To this end, we investigate the components of the vector $|\psi_N\rangle$. For $N\geqslant 2$, we may write
\begin{equation}
  |\psi_N\rangle = \sum_{1\leqslant a_1 < \cdots < a_n \leqslant N} (\psi_N)_{a_1,\dots,a_n}|{\uparrow}\cdots \uparrow\underset{a_1}{\downarrow}\uparrow \quad \cdots \quad\uparrow
  \underset{a_n}{\downarrow}\uparrow \cdots \uparrow\rangle.
  \label{eqn:PsiHExpansion1}
\end{equation}
Likewise, we have
\begin{equation}
  |\psi_N\rangle = \sum_{1\leqslant b_1 < \cdots < b_{\bar n} \leqslant N} (\overline \psi_N)_{b_1,\dots,b_{\bar n}}|{\downarrow}\cdots \downarrow\underset{b_1}{\uparrow}\downarrow \quad \cdots \quad\downarrow
  \underset{b_{\bar n}}{\uparrow}\downarrow \cdots \downarrow\rangle.
  \label{eqn:PsiHExpansion2}
\end{equation}
We note that \eqref{eqn:DefPsiH} fixes the following component:
\begin{equation}
  (\psi_N)_{1,\dots,n}=(\overline \psi_N)_{n+1,\dots,N} = \tau^{\bar n(\bar n-1)/2}.
\end{equation}
Here, and in the following, $\tau = -q-q^{-1}$. We note that $\tau = 1$ for $q=\ee^{\pm 2\pi\i/3}$.

\subsubsection*{Contour integral formulas}
There are several contour-integral formulas for the components of $|\psi_N\rangle$. The first type of formulas follows from the evaluation of \eqref{eqn:DefPsiCI} with $z_1=\dots=z_N=1$. Changing the integration variables to $u_i = [w_i]/[w_i/q]$ leads to the following expression:
\begin{multline}   \label{eqn:PsiHComp1} 
\hspace*{-13pt} (\psi_N)_{a_1,\dots,a_n}=  \tau^{N(N-1)/2}\!\oint \dotsi \oint\! \prod_{k=1}^n \frac{\diff u_k}{2\pi \i}\frac{(1+xu_k)(1+\tau u_k)(\tau+(\tau^2-2)u_k)(1+\tau u_k + u^2_k)^{N-2n}}{u_k^{a_k}(\tau+(\tau^2-1)u_k)^N} \\
\times \prod_{1\leqslant i < j \leqslant n}(u_j-u_i)(1+\tau(u_i+u_j)+(\tau^2-1)u_iu_j)(1+\tau u_j+u_iu_j)\\
 \times (\tau + (\tau^2-1)(u_i+u_j) + \tau (\tau^2-2)u_iu_j).
\end{multline}
The integration contour of each $u_i$ goes around $0$, but not around $-\tau/(\tau^2-1)$. Likewise, the evaluation of \eqref{eqn:DefPsiBarCI} with $z_1=\dots=z_N=1$ and a change of the integration variables to $u_i = [w_{\bar n+1-i}]/[q w_{\bar n+1-i}]$ leads to a second contour integral formula
\begin{multline}   \label{eqn:PsiBarHComp} 
  (\overline\psi_N)_{b_1,\dots,b_{\bar n}}=  \oint  \dots \oint \prod_{k=1}^{\bar n}
    \frac{\diff u_k}{2\pi \i}\frac{(1+\tau u_k + u^2_k)^{N+1-2\bar n}}{u_k^{N+1-b_{\bar n +1 -k}}(1+(x-\tau)u_k)}\\
\times\prod_{1\leqslant i \leqslant j \leqslant \bar n}(1-u_i u_j)\prod_{1\leqslant i <j \leqslant \bar n}(u_j-u_i)(1+\tau u_j + u_i u_j)(\tau+u_i+u_j).
\end{multline}
Here, the integration contour of $u_i$ goes around $0$. We also use the following third contour-integral representation of the components:
\begin{proposition}
\label{prop:PsiHComp2}
For each increasing sequence $1\leqslant a_1 < \dots < a_n \leqslant N$, we have
\begin{multline}
  \label{eqn:PsiHComp2} 
  (\psi_N)_{a_1,\dots,a_n}= \oint  \dots \oint \prod_{k=1}^n \frac{\diff u_k}{2\pi \i}\frac{(u_k+x)(1+\tau u_k+u_k^2)^{N-2n}}{u_k^{N+1-a_{n+1-k}}}\\
  \times\prod_{1\leqslant i \leqslant j \leqslant n}(1-u_i u_j)\prod_{1\leqslant i <j \leqslant n}(u_j-u_i)(1+\tau u_j + u_i u_j)(\tau+u_i+u_j).
\end{multline}
The integration contour of each $u_i$ goes around $0$.
\end{proposition}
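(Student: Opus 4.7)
My plan is to derive \eqref{eqn:PsiHComp2} from the original contour-integral representation \eqref{eqn:DefPsiCI} for $(\Psi_N)_{a_1,\dots,a_n}$ in the homogeneous limit $z_1=\cdots=z_N=1$, by applying a change of integration variable distinct from the one that produced \eqref{eqn:PsiHComp1}. The structural similarity of the target with the alternative formula \eqref{eqn:PsiBarHComp} for $(\overline\psi_N)_{b_1,\dots,b_{\bar n}}$---both displaying a numerator factor linear in $u_k$ and a diagonal factor $\prod_{i\leqslant j}(1-u_iu_j)$---indicates that the appropriate substitution is in the spirit of $u=[qw]/[w]$, but adapted to the $\Xi_{a_1,\dots,a_n}$ of \eqref{eqn:DefXi} rather than to $\overline{\Xi}_{b_1,\dots,b_{\bar n}}$ of \eqref{eqn:DefXiBar}.

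The first step is to set $z_j=1$ in \eqref{eqn:DefPsiCI} and to use the identities $[1/w]=-[w]$ and $[q/w]=-[w/q]$ to rewrite the denominator of $\Xi_{a_1,\dots,a_n}(w|1,\dots,1)$ as $\prod_k[w_k]^{a_k}[w_k/q]^{N-a_k+1}[q^2w_k]^N$, absorbing the resulting signs into the overall prefactor. The second step is to perform the substitution $u_k=g(w_k)$, where $g$ is a rational function chosen so that $w_k=1$, the unique singularity enclosed by the original contour, is sent to $u_k=0$, while the other singularities at $w_k=\pm q,\pm q^{-2}$ fall outside the new contour. One then tracks the transformation of every factor: the numerator factor $[\beta w_k]$ should become $(u_k+x)$ via the identification $x=-[q\beta]/[\beta]$; the pair factors $\prod_{i<j}[qw_j/w_i][w_i/w_j][qw_iw_j]$ should produce the symmetric combination $(u_j-u_i)(\tau+u_i+u_j)(1+\tau u_j+u_iu_j)$; the diagonal $[q^2 w_iw_j]$ for $i\leqslant j$ should yield $\prod_{i\leqslant j}(1-u_iu_j)$; and the Jacobian of the substitution combined with the $[w_k/q]^{N-a_k+1}[q^2w_k]^N$ factors should give the pole $u_k^{N+1-a_{n+1-k}}$ together with the polynomial $(1+\tau u_k+u_k^2)^{N-2n}$. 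The reversed-complemented indexing $N+1-a_{n+1-k}$ is expected to arise from an inversion implicitly built into $g$.

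The main obstacle is identifying the exact substitution $g$ and, more delicately, reproducing the diagonal factor $\prod_{i\leqslant j}(1-u_iu_j)$ (rather than only $\prod_{i<j}$), which requires the $[q^2 w_i^2]$ terms to combine non-trivially with the Jacobian so as to produce the missing $(1-u_i^2)$ contributions. Should this direct route become intractable, I would fall back on combining the parity relation of \cref{prop:Parity}, which upon specialisation and rescaling of the integration variables yields $(\psi_N)_{a_1,\dots,a_n}(x)=x^n(\psi_N)_{N+1-a_n,\dots,N+1-a_1}(x^{-1})$, with \eqref{eqn:PsiHComp1} applied to the reversed-complemented index. After the substitution $u_k\mapsto 1/u_k$ the prefactor $x^n\prod_k(1+x^{-1}u_k)$ becomes $\prod_k(u_k+x)$ and the exponent $\hat a_k$ becomes the desired $N+1-a_{n+1-k}$; one would then be left with the task of proving an algebraic identity which collapses the residual rational prefactor
\[\tau^{N(N-1)/2}\prod_k\frac{(1+\tau u_k)(\tau+(\tau^2-2)u_k)}{(\tau+(\tau^2-1)u_k)^N}\]
together with the transformed symmetric factors to the simple form $\prod_{i\leqslant j}(1-u_iu_j)\prod_{i<j}(\tau+u_i+u_j)$ of \eqref{eqn:PsiHComp2}, using $\tau=-q-q^{-1}$ and $1+\tau u+u^2=(1-qu)(1-q^{-1}u)$, and showing that the inverted contour can be deformed back to encircle only $u_k=0$ without crossing additional singularities.
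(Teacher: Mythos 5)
Your fallback correctly identifies \cref{prop:Parity} as the key ingredient --- that is indeed what the paper's proof rests on --- but neither of your two routes closes. The direct route (a single substitution $u=g(w)$ in \eqref{eqn:DefPsiCI} at $z_j=1$) cannot produce \eqref{eqn:PsiHComp2}: in $\Xi_{a_1,\dots,a_n}(w|1,\dots,1)$ the exponent $a_i$ sits on $[1/w_i]$, whose zero $w_i=1$ is the only point enclosed by the contour, while the exponent $N-a_i+1$ sits on $[q/w_i]$, whose zeros lie outside; no change of variables can transfer the pole order $N+1-a_{n+1-k}$ onto the image of $w_i=1$, and the honest outcome of this computation is precisely \eqref{eqn:PsiHComp1}, with exponent $a_k$, the extra denominator $(\tau+(\tau^2-1)u_k)^N$, and no clean factor $\prod_{i\leqslant j}(1-u_iu_j)$. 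The fallback fails for generic $\tau$ exactly at the step you defer: after $u_k\mapsto 1/u_k$ the integrand of \eqref{eqn:PsiHComp1} retains a pole at $u_k=-(\tau^2-1)/\tau$ (the image of $-\tau/(\tau^2-1)$), which is absent from the integrand of \eqref{eqn:PsiHComp2}, so the ``algebraic identity collapsing the residual rational prefactor'' is false as an identity of rational functions, and the inverted contour (a large circle) cannot be shrunk back around $0$ without crossing that pole. (At $\tau=1$ the offending factor degenerates to a constant and the two integrands match up to the swap $(1+xu)\leftrightarrow(u+x)$ --- this is exactly how the paper later proves \cref{thm:MainTheorem3} --- but \cref{prop:PsiHComp2} is needed for generic $\tau$, e.g.\ in the proof of \cref{thm:MainTheorem4}.) A further hidden difficulty: ``specialisation and rescaling'' of \cref{prop:Parity} lands you at the homogeneous point $z_i=s^{-1}$ with parameter $q^2s^{-1}\beta^{-1}$, and $|\Psi_N(s^{-1},\dots,s^{-1};\beta')\rangle$ is not simply proportional to $|\Psi_N(1,\dots,1;\beta'')\rangle$, so the relation $(\psi_N)_{a}(x)=x^n(\psi_N)_{N+1-a_n,\dots,N+1-a_1}(x^{-1})$ cannot be read off before doing the integral manipulation --- using it as an input here would in any case be circular, since the paper derives it from \eqref{eqn:PsiHComp2}.

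The paper's proof keeps everything inhomogeneous: it writes $(\Psi_N)_{a_1,\dots,a_n}=\epsilon_N(\Psi_N)_{N+1-a_n,\dots,N+1-a_1}(s^{-1}z_N^{-1},\dots,s^{-1}z_1^{-1};q^2s^{-1}\beta^{-1})$, inserts \eqref{eqn:DefPsiCI} on the right-hand side, and uses $s^4=q^6$ together with a substitution of the integration variables to recast this as a new contour integral in the original $z_j$ whose integrand has the $\overline\Xi$-like structure of \eqref{eqn:DefXiBar} (numerator $[q\beta w_k]$, diagonal factors $[qw_iw_j]$, denominator $[qw_i/z_j]^{\,j\leqslant a_i}[w_i/z_j]^{\,j\geqslant a_i}$). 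Only then does it set $z_j=1$ and substitute $u_i=[w_{n+1-i}]/[qw_{n+1-i}]$, which produces $\prod_{i\leqslant j}(1-u_iu_j)$, the reversed-complemented exponents and the numerator $(u_k+x)$ directly, with the contour around $0$ only. If you want to salvage your approach, this is the order of operations to follow.
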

\begin{proof}
  It follows from \cref{prop:Parity} that
  \begin{equation}
    (\Psi_N)_{a_1,\dots,a_n}= \epsilon_N  (\Psi_N)_{N+1-a_n,\dots,N+1-a_1}(s^{-1}z_N^{-1},\dots,s^{-1}z_1^{-1};q^2s^{-1}\beta^{-1}),
  \end{equation}
  where $s$ obeys \eqref{eqn:RelationSQ}. Using the integral formula \eqref{eqn:DefPsiCI} on the right-hand side leads, after some algebra, to
\begin{multline}
     (\Psi_N)_{a_1,\dots,a_n}=[q]^n \prod_{1\leqslant i < j \leqslant N}[q z_j/z_i][q z_i z_j]\oint \cdots \oint \prod_{k=1}^n \frac{\diff w_k}{\i \pi w_k} [q \beta w_k]\\
     \times \frac{\prod_{1\leqslant i < j \leqslant n} [q w_j/w_i][w_i/w_j]
     [q^2 w_i w_j]\prod_{1\leqslant i \leqslant j \leqslant n}[q w_i w_j]}{\prod_{i=1}^n\left( \prod_{j=1}^{a_i}[qw_i/z_j]\prod_{j=a_i}^N [w_i/z_j] 
     \prod_{j=1}^N
     [q w_i z_j]\right)}.
       \end{multline}
  The integration contour of each $w_i$ is a collection of positively-oriented curves around $z_j$, but not around $0,-z_j,\pm q^{-1} z_j,\,\pm q^{-1} z_j^{-1}$, $j=1,\dots,N$.
  
  We now set $z_1=\dots=z_N=1$ and change the integration variables to $u_i = [w_{n+1-i}]/[q w_{n+1-i}]$. Using the definitions \eqref{eqn:DefPsiH} and \eqref{eqn:DefXBeta}, we obtain \eqref{eqn:PsiHComp2}.
\end{proof}

For $x=0$, this proposition shows that $|\psi_N\rangle$ is the vector studied in \cite{degier:09}. We now show that this vector can also be computed from $x=\tau$, using the spin-reversal operator \eqref{eqn:SpinReversal}.
\begin{proposition}
  \label{prop:X0XTau}
  We have $|\psi_N(0)\rangle = \mathcal R|\psi_{N-1}(\tau)\rangle\otimes|{\uparrow}\rangle$.
\end{proposition}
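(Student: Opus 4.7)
The strategy is to compare the contour-integral representation \eqref{eqn:PsiHComp2} for the components of $|\psi_N(x)\rangle$, specialised at $x=0$, with the representation \eqref{eqn:PsiBarHComp} for the components of $|\psi_{N-1}(x)\rangle$ (in the up-spin labelling), specialised at $x=\tau$. The key observation is that these two specialisations produce the same integrand, which yields a direct term-by-term identification of the components.

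First, I would apply the two specialisations. At $x=0$, the factor $(u_k+x)$ in the numerator of \eqref{eqn:PsiHComp2} becomes $u_k$ and combines with the denominator $u_k^{N+1-a_{n+1-k}}$ to produce $u_k^{-(N-a_{n+1-k})}$. At $x=\tau$, the factor $(1+(x-\tau)u_k)^{-1}$ in \eqref{eqn:PsiBarHComp}, applied to $|\psi_{N-1}(x)\rangle$, reduces to $1$. A short check shows that for $N-1$ one has $\bar n' = \lceil (N-1)/2 \rceil = n$ irrespective of the parity of $N$, so the number of integration variables matches. The exponent of $(1+\tau u_k + u_k^2)$ in the specialised \eqref{eqn:PsiBarHComp} becomes $(N-1)+1-2n = N-2n$, the exponent of $u_k^{-1}$ becomes $(N-1)+1-b_{n+1-k} = N-b_{n+1-k}$, and the antisymmetric products coincide termwise with the ones in \eqref{eqn:PsiHComp2}. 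Setting $b_i = a_i$ therefore yields
\begin{equation*}
(\psi_N(0))_{a_1,\dots,a_n} = (\overline\psi_{N-1}(\tau))_{a_1,\dots,a_n},
\end{equation*}
for every sequence $1\leqslant a_1 < \cdots < a_n \leqslant N-1$.

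Next, I would deal separately with the case $a_n = N$. In this case, the denominator $u_1^{N+1-a_n}$ in the specialised \eqref{eqn:PsiHComp2} reduces to $u_1$, which combines with the numerator factor $u_1$ to leave no negative power of $u_1$ in the integrand. Since every other $u_1$-dependent factor is polynomial in $u_1$, the contour integral around the origin vanishes, whence $(\psi_N(0))_{a_1,\dots,a_n}=0$. The final step is to translate these two identities into the claimed vector equality: using the expansion \eqref{eqn:PsiHExpansion2} of $|\psi_{N-1}(\tau)\rangle$, the spin-reversal operator $\mathcal R$ turns each basis state with up spins at $b_1 < \cdots < b_n$ into one with down spins at those positions, and tensoring with $|{\uparrow}\rangle$ appends an up spin at position $N$ while preserving the down-spin positions. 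Hence the component of $\mathcal R|\psi_{N-1}(\tau)\rangle\otimes|{\uparrow}\rangle$ labelled by down-spin positions $a_1 < \cdots < a_n$ equals $(\overline\psi_{N-1}(\tau))_{a_1,\dots,a_n}$ if $a_n \leqslant N-1$ and vanishes if $a_n = N$; combined with the previous steps, this gives the identity of the two vectors.

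The main obstacle lies in the careful bookkeeping required to verify that the two specialised integrands truly agree. In particular, one must check that $\bar n' = n$ for both parities of $N$, that the exponents of $(1+\tau u_k + u_k^2)$ and of $u_k^{-1}$ in the specialised formulas agree, and that the spin-reversal combined with the tensoring with $|{\uparrow}\rangle$ produces precisely the down-spin labelling of \eqref{eqn:PsiHExpansion1} used for $|\psi_N(0)\rangle$.
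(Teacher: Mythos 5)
Your proof is correct and follows exactly the same route as the paper: specialise \eqref{eqn:PsiHComp2} at $x=0$ so that $(u_k+x)$ cancels one power of $u_k$ in the denominator, observe that the component vanishes for $a_n=N$ since the integrand then has no pole at $u_1=0$, and identify the remaining components with those of \eqref{eqn:PsiBarHComp} at $x=\tau$ before translating back to the vector statement via \eqref{eqn:PsiHExpansion1} and \eqref{eqn:PsiHExpansion2}. The only difference is that you spell out the bookkeeping (the check $\bar n'=n$, the matching of exponents, and the action of $\mathcal R$ and $\Theta$) that the paper leaves implicit.
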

\begin{proof}
  For $x=0$, the contour-integral expression \eqref{eqn:PsiHComp2} becomes
\begin{multline}
  (\psi_N)_{a_1,\dots,a_n}= \oint  \dots \oint \prod_{k=1}^n \frac{\diff u_k}{2\pi \i}\frac{(1+\tau u_k+u_k^2)^{N-2n}}{u_k^{N-a_{n+1-k}}}\\
  \times\prod_{1\leqslant i \leqslant j \leqslant n}(1-u_i u_j)\prod_{1\leqslant i <j \leqslant n}(u_j-u_i)(1+\tau u_j + u_i u_j)(\tau+u_i+u_j).
\end{multline}
It implies that the component vanishes for $a_n=N$ because the integrand has no pole at $u_1=0$. For $a_n < N$, we find by comparison with \eqref{eqn:PsiBarHComp} the relation
\begin{equation}
   (\psi_N(0))_{a_1,\dots,a_n} = (\overline \psi_{N-1}(\tau))_{a_1,\dots,a_n}.
\end{equation}
The proposition follows from this relation, together with \eqref{eqn:PsiHExpansion1} and \eqref{eqn:PsiHExpansion2}.
\end{proof}

\subsubsection*{Polynomiality}

The contour integral representations allow us to explicitly compute the components of $|\psi_N\rangle$ for small $N$. For example, for $N=5$ sites, the components are given by the polynomials
\begin{align}
(\psi_5)_{1,2} & = \tau^3,
  &(\psi_5)_{1,3} &= \tau^2(2+\tau^2) + x\tau^3,\nonumber \\
  (\psi_5)_{1,4} &= \tau(2+\tau^2) + x\tau^2(2+\tau^2),  &(\psi_5)_{1,5} &= x\tau(2+\tau^2),\nonumber \\
  (\psi_5)_{2,3} &= \tau(1+\tau^2) + 2x\tau^2+x^2\tau^3, &(\psi_5)_{2,4} &= 1+2\tau^2 + x\tau(3+2\tau^2)+x^2\tau^2(2+\tau^2),\\
  (\psi_5)_{2,5} &= x(1+2\tau^2)+x^2\tau(2+\tau^2),  &(\psi_5)_{3,4} &= \tau + x(1+\tau^2) + x^2\tau(1+\tau^2),\nonumber \\
  (\psi_5)_{3,5} &= x\tau + x^2(1+2\tau^2) ,  &(\psi_5)_{4,5} &= x^2\tau.\nonumber 
\end{align}
For $\tau = 1$, we recover the XXZ ground-state components \eqref{eqn:ExN5} obtained from the exact diagonalisation of spin-chain Hamiltonian. We now establish the polynomiality in $x$ of the ground-state components for arbitrary $N$.

\begin{proof}[Proof of \cref{thm:MainTheorem2}]
For $\tau =1$, the contour-integral formula  \eqref{eqn:PsiHComp1} simplifies to
\begin{multline}
  \label{eqn:PsiHTauOne}
  (\psi_N)_{a_1,\dots,a_n} = \oint \cdots \oint \prod_{k=1}^n \frac{\diff u_k}{2\pi \i}
  \prod_{k=1}^n \frac{(1+xu_k)(1+u_k+u_k^2)^{N-2n}}{u_k^{a_k}}\\
  \times\prod_{1\leqslant i \leqslant j \leqslant n}(1-u_i u_j)  \prod_{1\leqslant i < j \leqslant n} (u_j-u_i)(1+u_j+u_iu_j)(1+u_i+u_j).
\end{multline}
The residue theorem implies that this contour integral yields a polynomial in $x$ with integer coefficients of degree at most $n$. Moreover, if $a_i = i$ for each $i=1,\dots,m$, then the integrations with respect to $u_1,\dots,u_m$ are trivial. We find
\begin{multline}
  (\psi_N)_{1,\dots,m,a_{m+1},\dots,a_n}=\oint \cdots \oint \prod_{k=m+1}^n \frac{\diff u_k}{2\pi \i}
  \frac{(1+xu_k)(1+u_k)^{2m}(1+u_k+u_k^2)^{N-2n}}{u_k^{a_k-m}}\\
  \times\prod_{m+1\leqslant i \leqslant j \leqslant n}(1-u_i u_j)\prod_{m+1\leqslant i < j \leqslant n} (u_j-u_i)(1+u_j+u_iu_j)(1+u_i+u_j).
\end{multline}
The right-hand side is a polynomial in $x$ of degree is at most $n-m$.
\end{proof}

More generally, we note that the polynomiality in both $x$ and $\tau$ follows from \cref{prop:PsiHComp2} (and from the residue theorem). The coefficients are integers. 
We also observe that they are non-negative.

\subsubsection*{Parity}
\begin{proof}[Proof of \cref{thm:MainTheorem3}]
For $\tau=1$, we find by comparison of \eqref{eqn:PsiHComp1} and \eqref{eqn:PsiHComp2} the relation
\begin{equation}
 (\psi_N)_{N+1-a_n,\dots,N+1-a_1}(x)
   = x^n (\psi_N)_{a_1,\dots,a_n}(x^{-1}).
\end{equation}
It is equivalent to $\mathcal P|\psi_N(x)\rangle = x^n|\psi_N(x^{-1})\rangle$.
\end{proof}

\subsection{Scalar products}
\label{sec:ScalarProducts}
In this section, we prove \cref{thm:MainTheorem4}. The proof follows the lines of \cite{difrancesco:07_2,degier:09}, and uses an antisymmetriser identity. We recall that the antisymmetriser $\mathcal A f$ of a function $f$ of the variables $u_1,\dots,u_n$ is defined as
\begin{equation}
  (\mathcal Af)(u_1,\dots,u_n)=\sum_{\sigma} \text{sgn}\,\sigma\, f(u_{\sigma(1)},\dots, u_{\sigma(n)}).
\end{equation}
Here, the sum runs over all permutations $\sigma$ of $\{1,\dots,n\}$. We use two elementary properties of the antisymmetriser. First, the Vandermonde determinant can be written as an  antisymmetriser:
\begin{equation}
  \label{eqn:VandermondeAsym}
  \Delta(u_1,\dots,u_n)=\prod_{1\leqslant i < j \leqslant n} (u_j-u_i) = \mathcal A\left(\prod_{i=1}^n u_i^{i-1}\right).
\end{equation}
Second, if $f$ and $g$ are functions of $u_1,\dots,u_n$ then we have
\begin{equation}
  \label{eqn:IntegralAsym}
  \oint {\cdots} \oint \prod_{i=1}^n\frac{\diff u_i}{2\pi \i} (\mathcal A f)(u_1,\dots,u_n)\,g(u_1,\dots,u_n) = \oint \cdots \oint \prod_{i=1}^n\frac{\diff u_i}{2\pi \i}f(u_1,\dots,u_n)(\mathcal Ag)(u_1,\dots,u_n),
\end{equation}
where the integration contour of each $u_i$ is a positively-oriented curve around $0$, but no other singularity of the integrand.

\begin{proof}[Proof of \cref{thm:MainTheorem4}] We compute the overlap \eqref{eqn:DefF} for the vector \eqref{eqn:DefPsiH} with arbitrary $\tau$. In terms of the components, we obtain
\begin{equation}
 F_N
=\sum_{\epsilon_1,\dots,\epsilon_n=0,1} \alpha^{\sum_{i=1}^n \epsilon_i} (\psi_N)_{N-2(n-1) -\epsilon_1,N-2(n -2)-\epsilon_2,\dots,N-\epsilon_n}.
\end{equation}
We use the integral formulas \eqref{eqn:PsiHComp2} to rewrite this sum in terms of a contour integral:
\begin{multline}
  F_N = \oint \cdots \oint \prod_{k=1}^n \frac{\diff u_k}{2\pi \i}\frac{(u_k+x)(u_k+\alpha)(1+\tau u_k + u_k^2)^{N-2n}}{u_k^{2k}}\\
  \times \prod_{1\leqslant i \leqslant j \leqslant n}(1-u_iu_j)\prod_{1\leqslant i<j\leqslant n} (u_j-u_i)(1+\tau u_j + u_i u_j)(\tau +u_i + u_j).
\end{multline}
The integrand contains a Vandermonde determinant. We use \eqref{eqn:VandermondeAsym} and rewrite it as an antisymmetriser. Using \eqref{eqn:IntegralAsym}, we obtain
\begin{equation}
  F_N = \oint \cdots \oint \prod_{k=1}^n\frac{\diff u_k}{2\pi \i\,u_k}\Biggl((u_k+x)(u_k+\alpha )(1+\tau u_k+u_k^2)^{N-2n}u^{k-1}_k
  (\tau +u_k)^{k-1}  \Biggr)g(u_1,\dots,u_n),
\end{equation}
where
\begin{equation}
  g(u_1,\dots,u_n)= \prod_{1\leqslant i \leqslant j \leqslant n}(1-u_iu_j)\mathcal A\left(\prod_{k=1}^n u_i^{-2k+1}\prod_{1\leqslant i<j\leqslant n} (1+\tau u_j + u_i u_j)\right).
\end{equation}

Let us denote by $g(u_1,\dots,u_n)_{\leqslant 0}$ the polynomial in $u_1^{-1},\dots,u_n^{-1}$ obtained from $g(u_1,\dots,u_n)$ by removing all monomials that contain at least one positive power in $u_1,\dots,u_n$. We have \cite{degier:09}
\begin{align}
  g(u_1,\dots,u_n)_{\leqslant 0} = \mathcal A\left(\prod_{i=1}^nu_i^{-i}(\tau+u_i^{-1})^{i-1}\right).
\end{align}
This identity allows us to apply \eqref{eqn:IntegralAsym} a second time. We obtain
\begin{multline}
  F_N = \oint \cdots \oint  \prod_{k=1}^n\frac{\diff u_k}{2\pi \i}(u_k+x)(u_k+\alpha)(1+\tau u_k+u_k^2)^{N-2n}u^{-k-1}_k(\tau+u_k^{-1})^{k-1}\\
  \times \Delta(u_1(\tau+u_1),\dots,u_n(\tau+u_n)).
\end{multline}
The Vandermonde determinant in the integrand allows us to rewrite $F_N$ as the determinant of a single contour integral:
\begin{align}
  F_N = \det_{i,j=1}^n\Biggl(\oint & \frac{\diff u}{2\pi \i}\,(u+x)(u+\alpha)(1+\tau u+u^2)^{N-2n}(\tau+u)^{i-1}(\tau+u^{-1})^{j-1}u^{i-j-2}\Biggr).
\end{align}
We evaluate the contour integral inside the determinant in terms of
\begin{multline}
  f_{i,j}^k=\oint \frac{\diff u}{2\pi \i}(\tau+u)^{i-1}(\tau+u^{-1})^{j-1} u^{i-j+k}\\ = \tau^{2(i-j)+k+1}\sum_{m=0}^\infty \binom{i-1}{2(i-j)+m+k+1}\binom{j-1}{m} \tau^{2m}.
\end{multline}
The sum on the right-hand side is finite and $f_{i,j}^k$ is, therefore, a polynomial in $\tau$. In terms of these polynomials, we find for even $N=2n$ the determinant
\begin{equation}
  \label{eqn:FEven}
  F_{2n} = \det_{i,j=1}^n\left(\alpha x  f_{i,j}^{-2}+(\alpha + x)f_{i,j}^{-1}+f_{i,j}^{0}\right).
\end{equation}
If $N=2n+1$ is odd then the evaluation of the contour integral, combined with the identity $f_{i,j}^k= f_{i,j+1}^{k+2}-\tau f_{i,j}^{k+1}$ and elementary column operations, yields
\begin{equation}
  \label{eqn:FOdd}
  F_{2n+1} = \det_{i,j=1}^n\left(\alpha x f_{i,j+1}^{0}+(\alpha + x)f_{i,j+1}^{1}+ f_{i,j+1}^{2}\right).
\end{equation}

Finally, we evaluate \eqref{eqn:FEven} and \eqref{eqn:FOdd} for $\tau =1$. Using
\begin{equation} 
   \left.f_{i,j}^k\right|_{\tau=1} = \binom{i+j-2}{2i-j+k},
\end{equation}
we obtain the expressions announced in \cref{sec:MainResults}.
\end{proof}

\subsection{The sum of the components}
\label{sec:TSASM}

In this section, we conjecture a relation between the sum of components
\begin{equation}
 S_N
  = \sum_{1\leqslant a_1 < \dots < a_n \leqslant N} (\psi_N)_{a_1,\dots,a_n}
\end{equation}
and a weighted enumeration of the so-called totally-symmetric alternating sign matrices (TSASMs). To this end, we recall that a square matrix $A = (a_{ij})_{i,j=1}^M$ of size $M$ is called an alternating sign matrix if \textit{(i)} the entries $a_{ij}$ take the values $-1,0,1$, \textit{(ii)} each row and column sum of $A$ is equal to one, and \textit{(iii)} along each row and column of $A$ the non-zero entries alternate in sign. We say that $A$ is a TSASM if it is invariant under all symmetries of the square. \Cref{fig:TSASMExample} shows an example of a TSASM.
\begin{figure}
\centering
\begin{tikzpicture}
\draw (0,0) node {$
  \left(
  \begin{array}{ccccccccccc}
 0 & 0 & 0 & 0 & + & 0 & 0 & 0 & 0 \\
 0 & 0 & + & 0 & - & 0 & + & 0 & 0 \\
 0 & + & - & 0 & + & 0 & - & + & 0 \\
 0 & 0 & 0 & + & - & + & 0 & 0 & 0 \\
 + & - & + & - & + & - & + & - & + \\
 0 & 0 & 0 & + & - & + & 0 & 0 & 0 \\
 0 & + & - & 0 & + & 0 & - & + & 0 \\
 0 & 0 & + & 0 & - & 0 & + & 0 & 0 \\
 0 & 0 & 0 & 0 & + & 0 & 0 & 0 & 0 
  \end{array}\right)
  $};
 \end{tikzpicture}
\caption{A totally-symmetric alternating sign matrix of size $M=9$, where $\pm$ represents the non-zero entry $\pm 1$.}
\label{fig:TSASMExample}
\end{figure}

The invariances of a TSASM $A$ of size $M$ are equivalent to the relations
\begin{equation}
  a_{ij} = a_{ji} = a_{i(M+1-j)},
  \label{eqn:TSASMInvariances}
\end{equation}
for each $i,j=1,\dots,M$. These relations have no solution for even size $M=2m$. Therefore, we focus on the case of odd size $M=2m+1$, where they have at least one solution \cite{bousquet:95,robbins:00}. Each solution has the property that the horizontal and vertical median of the matrix are fixed to alternating sequences of $+1$ and $-1$:
 We have
\begin{equation}
 a_{m+1\,j} = (-1)^{j+1}, \quad a_{i\,m+1} = (-1)^{i+1},
\end{equation}
for each $i,j=1,\dots,2m+1$. The medians divide the matrix into four sub-matrices of size $m$. The relations \eqref{eqn:TSASMInvariances} imply that all of their entries can be obtained from the entries of the following triangular part of the upper-left sub-matrix
\begin{equation}
  \begin{array}{cccc}
    a_{11} & a_{12} & \cdots & a_{1m}\\
     & a_{22} & & a_{2m}\\
    & & \ddots & \vdots \\
    & & & a_{mm}
  \end{array}.
  \label{eqn:TriangularPart}
\end{equation}
\Cref{fig:TSASMFundamentalDomains} displays the triangular parts of all TSASMs of size $M=9$.

To introduce a weighted enumeration, we define for each TSASM $A$ of size $M=2m+1$ the numbers $\mu(A)$ and $\nu(A)$ of its non-zero entries along and above the diagonal of the triangular part \eqref{eqn:TriangularPart}, respectively. In terms of the entries, they are given by
\begin{equation}
  \mu(A)=\sum_{i=1}^m |a_{ii}|, \quad \nu(A) = \sum_{1\leqslant i < j \leqslant m} |a_{ij}|.
\end{equation}
We use them to assign to $A$ the weight $t^{\mu(A)}\tau^{\nu(A)}$. By means of the weights, we introduce the following generating function:
\begin{equation}
A_{\mathrm{TS}}(2m+1;t,\tau) = \sum_{A}t^{\mu(A)}\tau^{\nu(A)}.
\end{equation}
Here, the sum runs over the set of all TSASMs $A$ of size $M=2m+1$.

For certain values of $t$ and $\tau$, this generating function yields integer sequences that enumerate symmetry classes of alternating sign matrices. For $t=0$ and $\tau=1$, one finds the numbers $A_{\mathrm{VOS}}(2m+1) = A_{\mathrm{TS}}(2m+1;0,1)$ of vertically and off-diagonally symmetric alternating sign matrices \cite{okada:04,fischer:19}. These numbers can, for example, be computed from Pfaffian formulas. For $t=\tau=1$, one obtains the TSASM numbers $A_{\mathrm{TS}}(2m+1) = A_{\mathrm{TS}}(2m+1;1,1)$. For $m=0,\dots,9$ they are given by
\begin{equation}
  A_{\mathrm{TS}}(2m+1)=1, 1, 1, 2, 4, 13, 46, 248,1516, 13654,\dots
\end{equation}
To our best knowledge, no explicit formula for these numbers is currently known.

\begin{figure}
\centering
\begin{tikzpicture}
  \draw (-4.5,0) node
  {$\begin{array}{cccc}
    0 & 0 & 0 & 0\\
      & 0 & 0 & +\\
      &   & 0 & 0\\
      &   &   & 0
  \end{array}$};
  \draw (-1.5,0) node
  {$\begin{array}{cccc}
    0 & 0 & 0 & 0\\
      & + & 0 & 0\\
      &   & 0 & 0\\
      &   &   & +
  \end{array}$};
  \draw (1.5,0) node
  {$\begin{array}{cccc}
    0 & 0 & 0 & 0\\
      & 0 & + & 0\\
      &   & - & 0\\
      &   &   & +
  \end{array}$};
  \draw (4.5,0) node
  {$\begin{array}{cccc}
    0 & 0 & 0 & 0\\
      & 0 & 0 & +\\
      &   & + & -\\
      &   &   & +
  \end{array}$};
  
  \draw(-4.5,-1.25) node {$\tau$};
  \draw(-1.5,-1.25) node {$t^2$};
  \draw(1.5,-1.25) node {$t^2\tau$};
  \draw(4.5,-1.25) node {$t^2\tau^2$};
  
\end{tikzpicture}
\caption{The upper-triangular parts of the upper-left sub-matrix for each of the four totally-symmetric alternating sign matrices of size $M=9$ and their weights.}
\label{fig:TSASMFundamentalDomains}
\end{figure}

We generated all TSASMs for $m=0,\dots,9$ with \textsc{Mathematica}, by exploiting a bijection between the triangular parts considered above and the configurations of a six-vertex model on a triangular domain \cite{lienardy:20}. Moreover, we computed the corresponding generating functions $A_{\mathrm{TS}}(2m+1;t,\tau)$. (For $m=0,\dots,7$, they are listed in \cref{tab:TSASMPolys}.)
\begin{table}[ht]
\centering
\begin{tabular}{c|l}
  $m$ & $A_{\mathrm{TS}}(2m+1;t,\tau)$\\
  \hline
  $0$ & $1$\\
  $1$ & $1$\\
  $2$ & $t$\\
  $3$ & $t (1+\tau)$\\
  $4$ & $\tau + t^2(1+\tau+\tau^2)$\\
  $5$ & $\tau(1+\tau^2)+t^2(1+3\tau+4\tau^2+2\tau^3+\tau^4)$\\
  $6$ & $t\tau(3+4\tau+8\tau^2+3\tau^3+2\tau^4)+t^3(1+3\tau+7\tau^2+6\tau^3+6\tau^4+2\tau^5+\tau^6)$\\
  $7$ & $t\tau(3+7\tau+17\tau^2+18\tau^3+15\tau^4+12\tau^5+4\tau^6+2\tau^7)$\\
  & $+t^3(1+6\tau+19\tau^2+32\tau^3+41\tau^4+35\tau^5+21\tau^6+11\tau^7+3\tau^8+\tau^9$
\end{tabular}
\caption{The generating functions $A_{\mathrm{TS}}(2m+1;t,\tau)$ for $m=0,\dots,7$.}
\label{tab:TSASMPolys}
\end{table}
By means of this computation, we observed that these polynomials are related the sums of the components $S_N=S_N(x,\tau)$. This observation prompts us to formulate the following conjecture:
\begin{conjecture}
\label{conj:TSASM}
 We have
   \begin{equation}
     S_N(x,\tau) = (1+x(x-\tau))^{n/2}A_{\mathrm{TS}}(2N+1;t,\tau),
   \end{equation}
  where $t=(1+x)/(1+x(x-\tau))^{1/2}$.
\end{conjecture}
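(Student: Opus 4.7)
The plan is to reduce the sum $S_N$ to a closed-form multiple contour integral by summing the representation \eqref{eqn:PsiHComp2} over all admissible tuples $(a_1,\dots,a_n)$, then to match it with an independently derived formula for $A_{\mathrm{TS}}(2N+1;t,\tau)$ through the change of variable $t=(1+x)/(1+x(x-\tau))^{1/2}$. The first step is to carry out the summation
\[
\sum_{1\leqslant a_1<\cdots<a_n\leqslant N} \prod_{k=1}^n u_k^{a_{n+1-k}-N-1}
\]
inside the integrand of \eqref{eqn:PsiHComp2}. Using the fact that the remaining factors of the integrand are symmetric in the $u_k$, this discrete sum can be rewritten, after antisymmetrisation, as a ratio of Vandermonde-like determinants, as in the proof of \cref{thm:MainTheorem4} (invoking \eqref{eqn:VandermondeAsym} and \eqref{eqn:IntegralAsym}). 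The outcome would be a determinantal contour integral for $S_N$ whose one-variable entries are weighted by the factor $(u+x)(1+\tau u+u^2)^{N-2n}$.

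Next, I would seek an analogous integral or determinantal representation for the generating function $A_{\mathrm{TS}}(2m+1;t,\tau)$. The authors mention a bijection between the triangular part \eqref{eqn:TriangularPart} and configurations of a six-vertex model on a triangular domain. I would exploit this to recast $A_{\mathrm{TS}}(2m+1;t,\tau)$ as a partition function, the weights $t$ and $\tau$ accounting respectively for vertices along the diagonal and the strictly upper triangle. Following the standard strategy (Izergin--Korepin determinants, domain-wall-type partition functions, Kuperberg's Pfaffians for symmetry classes), one would try to express this partition function as a determinant or Pfaffian built from a single function of one spectral parameter, with $t$ and $\tau$ entering via diagonal boundary weights. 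The substitution $t=(1+x)/(1+x(x-\tau))^{1/2}$ is strongly suggestive of a rational change of variables $u\mapsto (1+xu)/(u+x)$ or similar, which would then identify the two integrands up to the prefactor $(1+x(x-\tau))^{n/2}$.

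Before attempting the identification, I would collect three consistency checks that constrain any proof. First, at $x=0$ the relation reduces to $S_N(0,\tau)=A_{\mathrm{TS}}(2N+1;1,\tau)$, and \cref{prop:X0XTau} gives $S_N(0,\tau)=S_{N-1}(\tau,\tau)$, so the conjecture forces $A_{\mathrm{TS}}(2N+1;1,\tau)=(1+\tau^2)^{(n-1)/2}A_{\mathrm{TS}}(2N-1;\sqrt{1+\tau^2},\tau)\cdot \tau^{?}$ (an equality to be verified with the data of \cref{tab:TSASMPolys}); this serves as a recurrence on the TSASM side. Second, the parity identity of \cref{thm:MainTheorem3} implies $S_N(x,\tau)=x^n S_N(x^{-1},\tau)$, which translates into a polynomial symmetry of $A_{\mathrm{TS}}(2N+1;t,\tau)$ in $t$. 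Third, the specialisation $\tau=1$ must return integer sequences matching those in \cref{tab:TSASMPolys}. Any successful proof must reproduce all three.

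The main obstacle is clearly step two: no closed formula for $A_{\mathrm{TS}}(2m+1;t,\tau)$ (or even for $A_{\mathrm{TS}}(2m+1)$) is currently known, and the symmetry class of TSASMs lacks a Kuperberg-type determinant. A fallback strategy would therefore abandon direct identification and instead try to show that both sides satisfy the same system of recurrences: on the $S_N$ side, these would come from further applications of \cref{prop:BraidLimits}-style specialisations (sending variables to $0$ or $\infty$ in the inhomogeneous vector before taking the homogeneous limit), which produce telescoping relations between $S_N$, $S_{N-1}$ and $S_{N-2}$; on the TSASM side, one would need a bijective argument, built from the six-vertex-on-a-triangle model, that exhibits the same recurrences on the weighted enumeration. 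Proving the latter recurrence — effectively a Razumov--Stroganov correspondence for the totally-symmetric class — is where the essential difficulty lies, and is the reason the statement is advanced here only as a conjecture.
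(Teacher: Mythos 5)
The statement you are addressing is \cref{conj:TSASM}, which the paper itself does \emph{not} prove: it is advanced purely as a conjecture, supported by computer generation of all TSASMs up to size $M=19$ and by checks of special cases (e.g.\ $S_N(1,1)=A_{\mathrm{TS}}(2N+3)$ up to $N=16$). Your proposal is likewise not a proof, and you say so honestly; the missing ingredient you identify --- any determinant, Pfaffian, integral or recurrence formula for $A_{\mathrm{TS}}(2m+1;t,\tau)$, or even for $A_{\mathrm{TS}}(2m+1)$ --- is exactly the obstruction the authors name. So the overall assessment is that you have correctly located the gap, but two of your intermediate steps are flawed as stated. First, your step~1 will not go through in the form you describe: summing \eqref{eqn:PsiHComp2} over \emph{all} increasing tuples $(a_1,\dots,a_n)$ does not factorise variable by variable the way the sum over $\epsilon_i\in\{0,1\}$ does in the proof of \cref{thm:MainTheorem4}. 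The paper carries out this summation in \eqref{eqn:CISumOfComponents}, and the result is the coupled denominators $\bigl(1-\prod_{j=1}^k u_j\bigr)^{-1}$, which are not symmetric functions amenable to the antisymmetriser identities \eqref{eqn:VandermondeAsym}--\eqref{eqn:IntegralAsym}; no single-determinant form for $S_N$ is obtained, and this is precisely why the $F_N$ strategy does not extend to $S_N$.

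Second, your first consistency check is miscomputed. At $x=\tau$ one has $1+x(x-\tau)=1$ and $t=1+\tau$, not $1+\tau^2$ and $\sqrt{1+\tau^2}$; combined with $S_N(0,\tau)=S_{N-1}(\tau,\tau)$ from \cref{prop:X0XTau}, the conjecture yields cleanly
\begin{equation*}
A_{\mathrm{TS}}(2N+3;1,\tau)=A_{\mathrm{TS}}(2N+1;1+\tau,\tau),
\end{equation*}
which is the ``curious equality'' the paper already records (and which can be tested against \cref{tab:TSASMPolys}). Your second check is also weaker than you suggest: since $t$ is invariant under $x\mapsto x^{-1}$ and the prefactor $(1+x(x-\tau))^{n/2}$ picks up exactly $x^{-n}$, the parity identity $S_N(x,\tau)=x^nS_N(x^{-1},\tau)$ imposes no condition on $A_{\mathrm{TS}}$ at all --- it only confirms that the ansatz is shaped compatibly. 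Your fallback strategy (matching recurrences on both sides via \cref{prop:BraidLimits}-type specialisations and a bijective argument on the six-vertex model on a triangle) is a reasonable research direction, but as you concede it requires a Razumov--Stroganov-type result for the totally symmetric class that is not currently available; until that is supplied, the statement remains a conjecture, in your write-up as in the paper.
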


This conjecture has several interesting consequences. First, it implies that the generating functions for the weighted enumeration of TSASMs introduced in this section can be obtained from a contour-integral formula. Indeed, \cref{prop:PsiHComp2} implies the contour-integral formula
\begin{multline}
  \label{eqn:CISumOfComponents}
S_N(x,\tau)
= \oint\cdots \oint \prod_{k=1}^n\frac{\diff u_k}{2\pi \i}\frac{(u_k+x)(1+\tau u_k + u_k^2)^{N-2n}}{u_k^{N-n+k}\left(1-\prod_{j=1}^ku_j\right)}\prod_{1\leqslant i \leqslant j \leqslant N}(1-u_iu_j)\\
  \times \prod_{1\leqslant i < j \leqslant n}(u_j-u_i)(\tau + u_i + u_j)(1+\tau u_j + u_i u_j).
\end{multline}
In particular, the TSASM numbers $A_{\mathrm{TS}}(2m+1)= S_{m}(0,1)$ can be obtained from this contour-integral formula (which appears to be the first one in the literature). Second, \cref{prop:X0XTau} implies the relation $S_N(0,\tau) = S_{N-1}(\tau,\tau)$. Hence, we obtain the curious equality
\begin{equation}
 A_{\mathrm{TS}}(2N+3;1,\tau)=A_{\mathrm{TS}}(2N+1;1+\tau,\tau).
\end{equation}
Finally, the conjecture implies that the sum of the components $S_N(x,1)$ of the spin-chain ground state discussed in \cref{sec:SpinChainGS} can be expressed in terms of the TSASM generating function. In particular, for the supersymmetric point it yields the TSASM numbers $S_N(1,1) = A_{\mathrm{TS}}(2N+3)$. We checked this relation up to $N=16$.

\section{Conclusion}
\label{sec:Conclusion}

In this article, we have introduced and studied a contour-integral solution of the boundary quantum Knizhnik-Zamolodchikov equations for the $R$-matrix and a diagonal $K$-matrix associated with the six-vertex model. In the homogeneous limit, it has allowed us to investigate the ground states of the open XXZ spin chain at the combinatorial point $\Delta=-1/2$ with special boundary magnetic fields. Our main results for the spin-chain ground states are \cref{thm:MainTheorem1,thm:MainTheorem2,thm:MainTheorem3,thm:MainTheorem4}. They provide the ground-state eigenvalue, several properties of their components and explicit formulas for certain scalar products and special components. Moreover, our investigation has led us to observe an intriguing relation between the homogeneous limit of the solution of the boundary quantum Knizhnik-Zamolodchikov equations and a generating function for a weighted enumeration of totally-symmetric alternating sign matrices.

Let us briefly discuss open problems and further directions. First, the present article contains two open conjectures. \Cref{conj:Overlaps} addresses scalar products involving several ground-state vectors. They allow one to compute the spin chain's bipartite fidelity and its multipartite generalisations. The conjecture's simplicity suggests not only that a proof could be found, but also that the fidelities' asymptotic expansions for large $N$ could be explicitly computed. A rigorous computation could confirm the predictions of conformal field theory for the leading terms of these expansions. \Cref{conj:TSASM} suggests a relation between the solution of the boundary quantum Knizhnik-Zamolodchikov equations and a weighted enumeration of TSASMs. The literature on alternating sign matrix enumeration and related integrable models offers, to our best knowledge, almost no results on TSASMs (let alone an explicit formula for the TSASM numbers). We hope that our conjecture will inspire a more profound investigation of TSASMs.
Second, it could be interesting to generalise the findings of the present article to higher spin. For periodic boundary conditions, such a generalisation has been found in \cite{fonseca:14}. It is based the construction of polynomial solutions to higher-spin quantum Knizhnik-Zamolodchikov equations from matrix elements of products of vertex operators. We hope to extend this construction to open boundary conditions in a future publication.

\subsubsection*{Acknowledgements}

This work was supported by the Fonds de la Recherche Scientifique-FNRS and the Fonds Wetenschappelijk Onderzoek-Vlaanderen (FWO) through the Belgian Excellence of Science (EOS) project no. 30889451 ``PRIMA – Partners in Research on Integrable Models and Applications''. 


\end{document}